\title{Distances Between Top-Truncated Elections of Different Sizes}
\author {
    Piotr Faliszewski\textsuperscript{\rm 1},
    Jitka Mertlová\textsuperscript{\rm 2},
    Pierre Nunn\textsuperscript{\rm 3},
    Stanisław Szufa\textsuperscript{\rm 4},
    Tomasz Wąs\textsuperscript{\rm 5}
}
\newtheorem{theorem}{Theorem}[section]
\newtheorem{lemma}[theorem]{Lemma}
\newtheorem{proposition}[theorem]{Proposition}
\newtheorem{corollary}[theorem]{Corollary}
\newcommand{\ymark}{\textcolor{mygreen}{\ding{51}}}%
\newcommand{\nmark}{\textcolor{myred}{\ding{55}}}%
\tikzset{>={Latex[width=1mm,length=3mm]}}
\newcommand{\swap}{\mathrm{swap}}
\newcommand{\del}{\mathrm{del}}
\newcommand{\trun}{\mathrm{\,tr}}
\newcommand{\pos}{\mathrm{pos}}
\newcommand{\dap}{\mathrm{dap}}
\newcommand{\stretched}{\mathrm{str}}
\newcommand{\emd}{\mathrm{emd}}
\newcommand{\freq}{\mathrm{freq}}
\newcommand{\ID}{\mathrm{ID}\xspace}
\newcommand{\AN}{\mathrm{AN}\xspace}
\newcommand{\UN}{\mathrm{UN}\xspace}
\newcommand{\ST}{\mathrm{ST}\xspace}
\newcommand{\id}{\mathrm{id}\xspace}
\newcommand{\an}{\mathrm{an}\xspace}
\newcommand{\un}{\mathrm{un}\xspace}
\newcommand{\reals}{\mathbb{R}}
\newcommand{\posreals}{\mathbb{R}_{+}}
\newcommand{\normphi}{{{\mathrm{norm}\hbox{-}\phi}}}
\newcommand{\pref}{\succ}
\newcommand{\tabimg}[1]{\raisebox{-0.1cm}{\includegraphics[width=0.4cm]{#1.png}}}
\newcommand{\citetapp}[1]{\citet{#1}}
\newcommand{\citepapp}[1]{\citep{#1}}
\newcommand{\calT}{\mathcal{T}}
\newcommand{\appendixProofs}{}
\newcommand{\appendixproof}[3]{%
  \gappto{\appendixProofs}{
		\subsection{Proof of~{#1}~\ref{#2}}\label{proof:#2}
    #3
  }
}
\begin{document}
\definecolor{myred}{RGB}{201, 22, 22}
\definecolor{mygreen}{RGB}{38, 150, 68}

\maketitle

\begin{abstract}
  The map of elections framework is a methodology for visualizing and
  analyzing election datasets. So far, the framework was restricted to
  elections that have equal numbers of candidates, equal numbers of
  voters, and where all the (ordinal) votes rank all the
  candidates. We extend it to the case of elections of different
  sizes, where the votes can be top-truncated. We use our results to
  present a visualization of a large fragment of the Preflib database.
\end{abstract}

\begin{links}
    \link{Code}{https://github.com/Project-PRAGMA/Map-Different-Sizes-AAAI-2025}
\end{links}

\section{Introduction}

The map of elections framework is a methodology for analyzing and
visualizing sets of elections,
introduced by
\citet{szu-fal-sko-sli-tal:c:map} and
\citet{boe-bre-fal-nie-szu:c:compass}.  The basic idea is to depict
(ordinal) elections
as points on a plane, so that the closer two given points are, the
more similar are their corresponding elections (see
Figures~\ref{fig:map:poswise} and \ref{fig:map:swap} for examples, and \Cref{sec:prelim} for definitions). To measure this similarity, one
can either use the
accurate but computationally challenging isomorphic swap
distance~\citep{fal-sko-sli-szu-tal:c:isomorphism}, or the
less precise but efficiently computable positionwise
distance~\citep{szu-fal-sko-sli-tal:c:map}, or whatever other distance
that is invariant to renaming the candidates and voters.
Indeed,
the elections may be
unrelated to each other and we only care about their structural
similarities.
Unfortunately, the distances used so far
are restricted to elections with the same numbers of candidates and
voters, and require the votes to be complete,
i.e., to rank all the candidates.
We aim to rectify these two issues.

The original motivation behind the map framework was to better
understand relations between various statistical cultures (i.e.,
models of generating random elections) and to present experimental
results in a nonaggregate way~\citep{szu-fal-sko-sli-tal:c:map}.
For this,
the restriction to particular election sizes
and complete votes is natural, as we have full control over the
data.
However,
maps are also useful for studying real-life elections. For example, we
can get better insight into the nature of real-life elections by
analyzing their positions on the
map~\citep{boe-bre-fal-nie-szu:c:compass,boe-sch:c:real-life-elections},
or check which statistical cultures yield similar
elections~\citep{boe-bre-fal-nie-szu:c:compass,boe-bre-elk-fal-szu:c:frequency-matrices}.
Yet,
real-life data poses some issues, as seen in the next two
datasets from Preflib~\citep{mat-wal:c:preflib}:

\begin{description}
\item[Irish General Elections.] Preflib includes the ordinal votes
  cast in several constituencies in Ireland during the 2002 general
  election. Naturally, each of them
  had a different
  number of candidates (between 9 and 14) and a different number of
  voters (between around $30\ 000$ and $64\ 000$),
  most of whom
  ranked some top candidates only.
\item[Formula~1.]
  Here, each election is
  a Formula~1 season, where each
  vote is a race ranking the drivers (i.e., the candidates) in the
  order in which they finished. There are around 20 races and 20
  drivers in each season, but these numbers vary,
  and not all drivers complete each race.
\end{description}
So far, to put such elections on a map, one had to fill in the missing
parts of the preference orders, restrict the number of candidates to
some common value (e.g., by deleting the worst-performing ones), and
sample a fixed-size set of votes.  We propose distances that can deal
with different-sized/top-truncated elections natively, without
preprocessing.

\begin{figure*}[t]
  \centering
  \hfill
  \begin{subfigure}[b]{0.3\linewidth}
    \includegraphics[width=\textwidth]{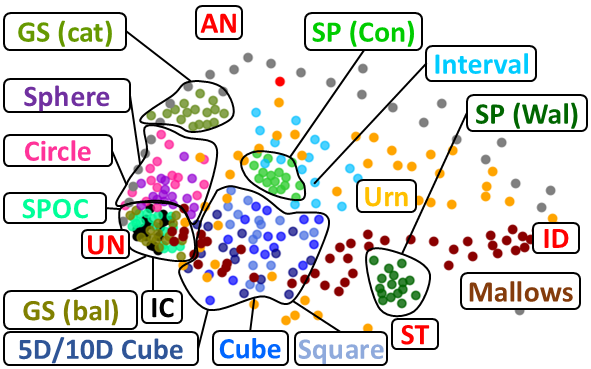}%
    \caption{Positionwise Distance.}
    \label{fig:map:poswise}
  \end{subfigure}
  \hfill
  \begin{subfigure}[b]{0.3\linewidth}
    \includegraphics[width=0.85\textwidth]{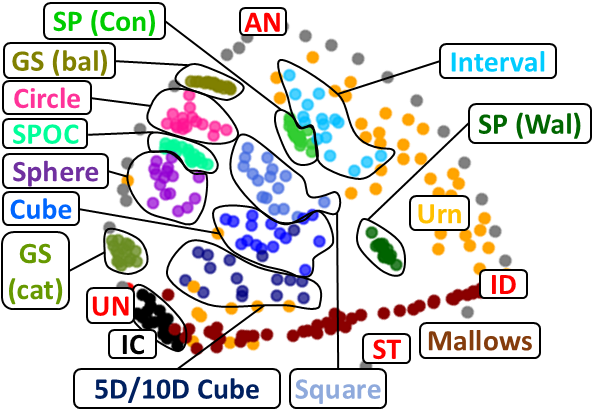}%
    \caption{Swap Distance.}
    \label{fig:map:swap}
  \end{subfigure}
  \hfill
  \begin{subfigure}{0.01\linewidth}
  \end{subfigure}
  \caption{\label{fig:basic-maps}Maps of elections created using the
    (a)~isomorphic swap and (b)~positionwise distances. Each point
    corresponds to an election; the color represents the statistical
    culture it comes from: $\ID$, $\UN$, and $\AN$ refer to identity,
    uniformity, and antagonism elections; IC, Mallows, and urn mean
    impartial culture, the normalized Mallows model, and the
    Pólya-Eggenberger urn model, respectively; Interval, Square, Cube,
    5D/10D Cube, Circle, and Sphere refer to Euclidean models;
    SP stands for single-peaked (in the Conitzer or Walsh
    models), SPOC for single-peaked on a circle, and GS for
    group-separable (caterpillar or balanced).
    }
\end{figure*}

Ideally, we would like to extend the isomorphic swap and positionwise
distances so that different-sized elections with ``obviously
identical'' structure would be at distance zero. Unfortunately, this
turns out to be impossible even for a rather minimal definition of an
``obviously identical'' structure (even without thinking of
top-truncated votes). However, we do find two intuitively appealing
distances that 
fulfill some of our requirements: One is an extension of the
positionwise distance and one, which we call DAP, is based on
analyzing diversity, agreement, and polarization within votes. In
the spirit of our desiderata, DAP turns out to be strongly correlated with
the isomorphic swap distance and, so, can serve as its replacement.
We first describe our distances and then
analyze their properties, both theoretically and experimentally.
In the experiments, we largely focus on synthetic
elections,
as such data is easier to control, but
our main motivation is to visualize real-life data. Hence, in
\Cref{sec:map-of-preflib} we form a map of a large fragment of
Preflib, a database of real-life elections (see
\Cref{fig:map:preflib}). The overarching goal of the paper is to
explain how we obtained this map, why the choices made on the way are
justified, and what we can learn from it.

\section{Preliminaries}\label{sec:prelim}

For a positive integer $k$, we let $[k] = \{1,\ldots, k\}$, and we
write $\posreals$ to denote the set of nonnegative real numbers.  For
two equal-sized sets $A$ and $B$, $\Pi(A,B)$ is the set of bijections
between $A$ and $B$.

\paragraph{Elections.}
An \emph{election} is a pair $E = (C,V)$, where $C$ is a set of $m$
\emph{candidates} and $V$ is a collection of $n$ \emph{votes}, i.e.,
weak orders over $C$; we will also sometimes refer to
members of $V$ as \emph{voters} (i.e., the agents who cast the
respective votes).  By a \emph{size} of an election, we mean the
pair~$(m,n)$.
For two candidates $a,b \in C$ and voter $v \in V$, we write
$a \succ_v b$ if $v$ strictly prefers $a$ over $b$ and $a \sim_v b$ if
$v$ is indifferent between them.  We write $a \succeq_v b$ if
$a \succ_v b$ or $a \sim_v b$. By $N_E(a \succ b)$ we mean the number
of voters in election $E$ that strictly prefer $a$ to $b$, and by
$N_E(a \sim b)$ we mean the number of those indifferent between these
candidates.

Consider
vote $v$ over candidate set $C$. We say that $v$
is \emph{complete} if it is a strict linear order over $C$, and that
it is \emph{top-truncated} if $C$ can be partitioned into 
sets
$C_v^{\uparrow}$ and $C_v^{\downarrow}$, so that $v$ is a strict
linear order over $C_v^{\uparrow}$ and for every
$c \in C_v^{\uparrow}$ and $a, b \in C_v^{\downarrow}$ we have
$c \succ_v a \sim_v b$.  $C_v^{\uparrow}$ is the top part
of $v$ and $C_v^{\downarrow}$ is its truncated part (so a complete vote
is top-truncated with empty truncated
part.)  An election is complete (top-truncated) if all its
votes are complete (top-truncated).

Top-truncated elections
are quite common in real-life election data
(see e.g., \emph{Preflib} database~\cite{mat-wal:c:preflib})
and have been extensively studied
in terms of their prevalence \cite{kilgour2020prevalence},
efficiency \cite{borodin2022distortion},
effect on voting rules \cite{tomlinson2023ballot},
or a possibility of manipulation
\cite{bau-fal-lan-rot:c:lazy-voters}.

\paragraph{Special Elections.} Following
\citet{boe-bre-fal-nie-szu:c:compass}, we consider the following
families of characteristic (complete) elections (we consider $m$
candidates and $n$ voters):
  (i) In an \emph{identity election} ($\ID_{m,n}$) all $n$ votes are
  identical;
  (ii) an \emph{antagonism election} ($\AN_{m,n}$) is a concatenation
  of two identity ones, where the voters in one have opposite
  preference orders to those in the other; and
  (iii) in a \emph{uniformity election} ($\UN_{m,n}$) all possible votes
  appear equal number of times.
The number of voters must be even in an antagonism election
and must be a multiple of $m!$ in a uniformity one.
We
drop the number of voters from notation when it is
irrelevant.

\paragraph{Distances.}
A \emph{pseudodistance} over a set $U$ is a function
$d \colon U \times U \rightarrow \posreals$ such that for each
$a,b, c \in U$ we have (1)~$d(a,a) = 0$, (2)~$d(a,b) = d(b,a)$, and
(3)~$d(a,b)+d(b,c) \geq d(a,c)$. We often drop ``pseudo''  prefix.

\paragraph{(Isomorphic) Swap Distance.}
Given two votes, $u, v$, over the same candidate set, we define
their swap
distance as
$\swap(u,v) = \frac{1}{2}\sum_{a,b \in C} ([a \succ_u b \ \land \ b
\succeq_v a] + [a \succeq_u b \ \land \ b \succ_v a])$, where the
expressions in square brackets evaluate to $1$ when they are true and
to~$0$ otherwise.
That is, we add $1$ for each pair of candidates for which both voters
have a strict, opposite preference, and we add $\nicefrac{1}{2}$ for
each pair of candidates for which exactly one of the voters is
indifferent.

For a pair of equal-sized/complete elections $E=(C,V)$ and $F=(D,U)$,
such that $|C|=|D|=m$, $V = \{v_1, \dots, v_n\}$ and
$U = \{u_1, \dots, u_n\}$, their \emph{(isomorphic) swap distance} is
the sum of the swap distances between the votes, under optimal matchings
of the candidates and voters.  Formally, $d_{\swap}(E,F)$ is equal to:
\[
     \min_{\substack{\pi \in \Pi([n],[n])\\ \sigma \in \Pi(C,D)}}
        \textstyle\sum_{i \in [n]} \swap(\sigma(v_i), u_{\pi(i)})/\left( \textstyle \frac{1}{4}n(m^2-m) \right),
\]
where $\sigma(v_i)$ means vote $v_i$ with each candidate $c \in C$
replaced with $\sigma(c) \in D$.  By definition, $d_\swap$ is
invariant to renaming the candidates and reordering the voters, and
its value is zero
if and only if the given elections are isomorphic (i.e., are identical
up to renaming the candidates and reordering the voters).  This
distance was introduced by \citet{fal-sko-sli-szu-tal:c:isomorphism}.
Dividing by $\frac{1}{4}n(m^2-m)$ ensures normalization of the largest
distance to exactly $1$ (for the case where $n$ is
even~\citep{boe-fal-nie-szu-was:c:metrics}).

\paragraph{Wasserstein Distance.}

 Given a vector $\vec{a}\! =\! (a_1, \ldots, a_m) \in \posreals^m$, we
 identify it with a stepwise function
$a \colon [0,1] \rightarrow \posreals$, such that $a(0) = 0$ and for
all $i \in [m]$ and $x$ in $(\frac{i-1}{m},\frac{i}{m}]$ we have
$ a(x) = (m\cdot a_i)/(a_1 + \cdots + a_m)$.  Further, for each
$x \in [0,1]$ we set $A(x) = \int_0^x a(y)dy$. Note that the
normalization in function $a(\cdot)$ ensures that $A(1) = 1$.
For two vectors $\vec{a} \in \reals^s$ and $\vec{b} \in \reals^{t}$ of
possibly different dimensions, we define their \emph{Wasserstein
  distance} as
$
  \textstyle
  W(\vec{a},\vec{b}) = 
  \int_{0}^{1} |A(x) - B(x)| dx.
$
As $a(x)$ and $b(x)$ are stepwise 
and $A(x)$ and $B(x)$ are piecewise linear, $W(\vec{a},\vec{b})$ can
be computed in polynomial time (assuming we can perform arithmetic
operations in polynomial time).

\paragraph{Frequency Matrices and Positionwise Distance.}
Consider an election $E = (C,V)$, where $C = \{c_1, \ldots, c_m\}$ and
$V = (v_1,\ldots,$ $ v_n)$.  A \emph{frequency matrix} of a
top-truncated vote $v \in V$ is an $m \times m$ 
matrix $\freq(v)$ where the entry in the $i$-th row and $j$-th column
is:
\[
 \freq(v)_{i,j} =
 \begin{cases}
    1 & \!\mbox{if } c_j \in C_v^{\uparrow}
        \mbox{ and $v$ ranks it $i$-th there},  \\ 
    1 / |C_v^{\downarrow}| & \!\mbox{if } c_j \in C_v^{\downarrow}
        \mbox{ and } i > |C^{\uparrow}|,\\
    0 & \!\mbox{otherwise}.
 \end{cases}
\]
A frequency matrix of election $E$ is the average of the frequency
matrices of its votes, i.e.,
$\freq(E) = \nicefrac{1}{n}\sum_{v_i \in V}\freq(v_i)$.  Intuitively,
$\freq(E)_{i,j}$ is the probability that in a randomly selected vote
from $E$ the $j$-th candidate is ranked in the $i$-th position (where
being in the truncated part of a vote means having equal chance of
being ranked in any of the truncated positions).  Frequency matrices
are normalized variants of \emph{position matrices} of
\citet{szu-fal-sko-sli-tal:c:map}; see also the works of
\citet{boe-bre-elk-fal-szu:c:frequency-matrices,boe-cai-fal-fan-jan-kac:c:position-matrices}
(these authors did not consider top-truncated votes; we added this
extension).  Frequency matrices are bistochastic, i.e., each of their
columns and each of their rows sums up to $1$.

The Wasserstein distance between two bistochastic matrices,
$X = [\vec{x}_1, \ldots, \vec{x}_m]$
and $Y = [\vec{y}_1, \ldots, \vec{y}_m]$, with $m$ column vectors
each, is the sum of the Wasserstein
distances between their (optimally matched) column vectors.  Formally:
\[
  \textstyle
  d_{W}(X,Y) = \min_{\sigma \in \Pi([m],[m])}
        \frac{1}{m}\sum_{i \in [m]}
        W(\vec{x}_i,\vec{y}_{\sigma(i)}).
\]
Then, the positionwise distance of two equal-sized elections, $E$ and $F$, is
the Wasserstein distance of their frequency matrices,
$d_\pos(E,F) = d_W(\freq(E),\freq(F))$.
This distance was originally defined by
\citet{szu-fal-sko-sli-tal:c:map} using the classic earth mover's
distance (EMD) of \citet{rub-tom-gui:j:emd}, defined for vectors of
the same dimension that sum up to the same value.
We prefer the Wasserstein distance because it applies to vectors of
different dimensions, while being closely related to EMD (in the
proposition below, $\emd(\vec{a},\vec{b})$ is the EMD distance between
vectors and $\vec{a}$ and $\vec{b}$, and is known to be between $0$
and $m-1$; hence $m W(\vec{a},\vec{b})$ is its good approximation
both for very similar and quite far-off vectors).
\begin{proposition}\label{prop:emd-vs-w-in-text}
  For each two $m$-dimensional vectors $\vec{a}$ and $\vec{b}$ with
  nonnegative entries that sum up to $1$, it holds that
  $\max(\frac{1}{2}\emd(\vec{a},\vec{b}),\emd(\vec{a},\vec{b})\! -\!
  1) \leq m\!\cdot\! W(\vec{a},\vec{b}) \leq
  \emd(\vec{a},\vec{b})$.
\end{proposition}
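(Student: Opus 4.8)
The plan is to reduce everything to the cumulative differences $d_i := \sum_{j=1}^{i}(a_j-b_j) = A(\tfrac{i}{m}) - B(\tfrac{i}{m})$ for $i=0,1,\ldots,m$, noting that $d_0 = d_m = 0$ because both vectors sum to $1$. On this footing the two quantities take transparent forms. First I would record the standard one-dimensional optimal-transport identity $\emd(\vec a,\vec b) = \sum_{i=1}^{m-1}|d_i|$ (the EMD on positions $1,\ldots,m$ with unit gaps equals the $\ell_1$-distance of the cumulative distribution functions; this is also why $\emd$ ranges over $[0,m-1]$). Since $D(x) := A(x)-B(x)$ is piecewise linear with $D(\tfrac{i}{m}) = d_i$, I would write $m\,W(\vec a,\vec b) = \sum_{i=1}^{m} T_i$ with $T_i := m\int_{(i-1)/m}^{i/m}|D| = \int_0^1 |(1-u)d_{i-1} + u\,d_i|\,du$, and evaluate this integral in closed form: it equals $\tfrac{|d_{i-1}|+|d_i|}{2}$ when $d_{i-1},d_i$ share a sign (or one of them vanishes), and $\tfrac{d_{i-1}^2+d_i^2}{2(|d_{i-1}|+|d_i|)}$ when they have strictly opposite signs (the integrand crosses zero).

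For the two bounds $\tfrac12\emd \le mW \le \emd$ I would prove the pointwise two-sided estimate $\tfrac{|d_{i-1}|+|d_i|}{4} \le T_i \le \tfrac{|d_{i-1}|+|d_i|}{2}$. The upper inequality is immediate (equality in the same-sign case, and in the opposite-sign case it reduces to $d_{i-1}^2+d_i^2 \le (|d_{i-1}|+|d_i|)^2$); the lower inequality is binding only in the opposite-sign case, where it rearranges to $(|d_{i-1}|-|d_i|)^2 \ge 0$. Summing and using the telescoping identity $\sum_{i=1}^{m}(|d_{i-1}|+|d_i|) = 2\sum_{i=1}^{m-1}|d_i| = 2\,\emd$ (valid since $d_0=d_m=0$) yields both inequalities at once.

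For the remaining bound $mW \ge \emd - 1$ I would analyze the deficit $\emd - mW = \sum_{i=1}^{m}\big(\tfrac{|d_{i-1}|+|d_i|}{2} - T_i\big)$, which by the closed forms above is supported on the opposite-sign intervals and equals $\sum_{\mathrm{opp}}\tfrac{|d_{i-1}||d_i|}{|d_{i-1}|+|d_i|}$. The conceptual reason this is small is that $mW$ is the $1$-Wasserstein distance between the measures obtained by spreading the masses $a_i,b_i$ uniformly over the unit intervals $(i-1,i]$, whereas $\emd$ uses point masses, and redistributing mass within intervals of length $1$ can perturb $W_1$ by at most $1$. To make this elementary, I would bound each opposite-sign summand by $\tfrac14(|d_{i-1}|+|d_i|) = \tfrac14|d_i - d_{i-1}| = \tfrac14|a_i-b_i|$ (AM-GM, together with the fact that opposite signs turn $|d_{i-1}|+|d_i|$ into $|d_i-d_{i-1}|$, and $d_i-d_{i-1}=a_i-b_i$), and then invoke the total-variation bound $\sum_{i=1}^{m}|a_i-b_i| \le 2$ that holds for any two probability vectors, giving $\emd - mW \le \tfrac12 \le 1$.

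The main obstacle is the opposite-sign intervals: there $D$ crosses zero, $T_i$ drops strictly below $\tfrac{|d_{i-1}|+|d_i|}{2}$, and one must control the accumulated deficit. The crux is recognizing that each deficit term is governed by $|a_i-b_i|$ and that these increments have bounded total variation; everything else is a routine integral evaluation and a telescoping sum. (In fact this argument delivers the sharper $mW \ge \emd - \tfrac12$, comfortably inside the claimed bound.)
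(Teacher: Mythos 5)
Your proposal is correct, and every step checks out: the closed forms for $T_i$ in the same-sign and sign-crossing cases, the two-sided pointwise estimate $\frac{1}{4}(|d_{i-1}|+|d_i|) \le T_i \le \frac{1}{2}(|d_{i-1}|+|d_i|)$, the telescoping identity $\sum_{i=1}^{m}(|d_{i-1}|+|d_i|) = 2\,\emd(\vec a,\vec b)$ using $d_0 = d_m = 0$, and the deficit computation $\emd(\vec a,\vec b) - mW(\vec a,\vec b) = \sum_{\mathrm{opp}} \frac{|d_{i-1}||d_i|}{|d_{i-1}|+|d_i|}$ bounded via AM--GM and $\ell_1(\vec a,\vec b) \le 2$. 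However, your route differs genuinely from the paper's, which proves the three bounds by three separate arguments: the upper bound $mW \le \emd$ by partitioning the region between the graphs of $A(x)$ and $B(x)$ into $2m$ triangles, each of area at most $\frac{1}{2m}|\hat a_i - \hat b_i|$; the bound $mW \ge \emd - 1$ by applying $|\alpha+\beta| \ge |\alpha|-|\beta|$ inside each interval's integral (giving a per-interval loss of $\frac{1}{2m}|a_i - b_i|$) and then invoking $\ell_1 \le 2$; and the bound $mW \ge \frac{1}{2}\emd$ by a separate geometric comparison of a trapezoid against the two triangles formed by its bases and diagonals (stated as a corollary with only a proof sketch). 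Your exact per-interval evaluation unifies the $\frac{1}{2}\emd$ lower bound and the $\emd$ upper bound into a single pointwise estimate, and, more importantly, it identifies the deficit $\emd - mW$ exactly as a sum supported on the sign-crossing intervals. That precision buys a strictly sharper constant: you obtain $mW \ge \emd - \frac{1}{2}$, whereas the paper proves only $\emd - 1$ and explicitly remarks that this bound ``can potentially be improved.'' Your bound is in fact optimal, as shown by the paper's own example with $\vec a = [\frac{1}{2m},0,\frac{1}{m},0,\ldots,0,\frac{1}{2m}]$ and $\vec b = [0,\frac{1}{m},0,\ldots,\frac{1}{m},0]$, where $\emd(\vec a,\vec b) = 1$ while $(2m+1)\cdot W(\vec a,\vec b) = \frac{1}{2} + \frac{1}{4m} \rightarrow \frac{1}{2}$, so the deficit approaches $\frac{1}{2}$. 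One minor notational point: your identity $\emd(\vec a,\vec b) = \sum_{i=1}^{m-1}|d_i|$ agrees with the paper's $\sum_{i=1}^{m}|\hat a_i - \hat b_i|$ because $d_m = 0$ for probability vectors, so there is no discrepancy to repair.
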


Like the isomorphic swap distance, the positionwise distance is
invariant to renaming the candidates and reordering the voters, but it
differs in that it can assume value $0$ even for pairs of
nonisomorphic elections.  Positionwise distance is normalized in the
sense that its largest value is about $\nicefrac{1}{3}$ (we omit exact
calculations).

\newcommand{\myfrac}[2]{\left(#1 \right)/\left(#2\right)}

\paragraph{Maps of Elections.} A map of elections is a set of
elections with distances between each pair of them (called
\emph{original distances}). We associate each election with a point in
a 2D space so that the Euclidean distances between the points resemble
the original ones, using either multidimensional scaling
(MDS~\citep{kru:j:mds}) or Kamada-Kawai
(KK~\citep{kam-kaw:j:embedding}) embedding algorithms.

\section{Search for More Versatile Distances}\label{sec:distances}

Our first goal is to find a satisfying distance $\hat{d}$ over
different-sized, complete elections (as a convention, we will be
marking distances that can handle different-sized elections with a
``hat'' on top).  
Ideally, we would like $\hat{d}$ to be an extension of
an appealing distance among equal-sized
elections,
and to put ``obviously'' identical elections at distance $0$.
Below we express these
requirements formally:

\begin{enumerate}
\item We say that $\hat{d}$ is a \emph{swap extension} if it is a
  distance and
  for each two
  equal-sized/complete elections $E_1$ and $E_2$ with $n$ voters and
  $m$ candidates, 
  $\hat{d}(E_1,E_2) = d_\swap(E_1,E_2)$.
  We define a \emph{positionwise extension} analogously.

\item We say that $\hat{d}$ is
  \emph{$\ID$-consistent} if for all $p,r,s,q \in \mathbb{N}$, $p,r
  \ge 3$, we have $\hat{d}(\ID_{p,q},\ID_{r,s}) =
  0$.
  We define
  \emph{$\AN$-consistency} and
  \emph{$\UN$-consistency} analogously.  (We consider at least
  $3$ candidates as $\AN_{2,2} =
  \UN_{2,2}$, so without this restriction assuming both
  $\AN$- and $\UN$-consistency would put all $\AN$ and
  $\UN$ elections at distance zero.)

\end{enumerate}

We would like a
swap extension
or a positionwise extension 
that is $\ID$-, $\AN$-, and $\UN$-consistent.  Naturally, one could
think of further conditions, but these are both natural and quite
minimal. Yet, even this is too much to ask for.

\begin{proposition}\label{prop:swap-pos:impossibility}
  There is no swap extension nor positionwise extension that
  simultaneously satisfies $\ID$\hbox{-}, $\AN$- and
  $\UN$-consistency.
\end{proposition}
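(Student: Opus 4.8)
The plan is to show that each consistency condition, together with the triangle inequality, forces $\hat d$ to take the same value on a whole family of pairs of special elections that differ only in size; I then exhibit a pair whose swap (resp.\ positionwise) distance provably depends on the number of candidates, contradicting this forced size-invariance. The proof splits into the swap-extension case and the positionwise-extension case, both driven by the same lemma.

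First I would prove a \emph{size-independence lemma}: if $\hat d$ is both $X$-consistent and $Y$-consistent for $X,Y \in \{\ID,\AN,\UN\}$, then for all admissible sizes $(m,n)$ and $(m',n')$ with $m,m' \ge 3$ one has $\hat d(X_{m,n},Y_{m,n}) = \hat d(X_{m',n'},Y_{m',n'})$. This is two applications of the triangle inequality: $\hat d(X_{m,n},Y_{m,n}) \le \hat d(X_{m,n},X_{m',n'}) + \hat d(X_{m',n'},Y_{m',n'}) + \hat d(Y_{m',n'},Y_{m,n})$, where the first and third terms vanish by $X$- and $Y$-consistency, and the reverse inequality is symmetric. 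Since $\ID$, $\AN$, $\UN$ elections are complete, whenever $\hat d$ is a swap (resp.\ positionwise) extension the left-hand values equal $d_\swap$ (resp.\ $d_\pos$) on equal-sized complete elections, so the underlying distance would have to be constant across all $m \ge 3$.

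For the swap extension I would take $X=\AN$, $Y=\UN$. The pairs $(\ID,\AN)$ and $(\ID,\UN)$ are useless here, since a short computation shows both have normalized swap distance exactly $1$ for every $m$: each vote of $\ID$ is at complementary swap distance to the two orders of $\AN$ (their swaps sum to $\binom{m}{2}$), while the average Kendall-tau distance to a uniform set of votes is $\tfrac14 m(m-1)$, which equals the normalizer. The pair $(\AN,\UN)$ does move with $m$: since the optimal voter matching assigns each permutation to whichever antagonistic order is closer, and these assignments automatically balance by the $\rho \leftrightarrow \bar\rho$ symmetry, the unnormalized cost is $\tfrac{n}{m!}\sum_{\rho}\min\bigl(k_\rho,\binom{m}{2}-k_\rho\bigr)$ with $k_\rho$ the inversion count of $\rho$. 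Evaluating against the Mahonian numbers gives $d_\swap(\AN_3,\UN_3)=\tfrac49$ and $d_\swap(\AN_4,\UN_4)=\tfrac{11}{18}$, and $\tfrac49 \ne \tfrac{11}{18}$ contradicts the lemma.

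For the positionwise extension I would take $X=\ID$, $Y=\UN$. The frequency matrix of $\ID_m$ has columns $e_1,\dots,e_m$, and that of $\UN_m$ has every column equal to the uniform vector $u=(\tfrac1m,\dots,\tfrac1m)$, so no column matching is needed and $d_\pos(\ID_m,\UN_m)=\tfrac1m\sum_{j=1}^m W(e_j,u)$. Computing each $W(e_j,u)=\int_0^1 |E_j(x)-x|\,dx$ by elementary integration of the piecewise-linear cumulative functions yields $d_\pos(\ID_3,\UN_3)=\tfrac{5}{18}$ and $d_\pos(\ID_4,\UN_4)=\tfrac{7}{24}$, which are distinct and again contradict the lemma. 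The main obstacle is choosing the right witnessing pair: because the most extremal pairs have size-independent distances (for swaps, $\ID$ sits at distance $1$ from both $\AN$ and $\UN$ for every $m$), the contradiction has to come from an ``intermediate'' pair, and the crux is verifying that its distance genuinely varies with $m$; once that is in hand, everything else is the triangle inequality.
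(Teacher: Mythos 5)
Your proposal is correct and takes essentially the same route as the paper's own proof: both arguments derive size-invariance of $\hat{d}(X_{m,n},Y_{m,n})$ from two consistency properties plus the triangle inequality, both witness the swap case via $d_\swap(\AN_{3,6},\UN_{3,6})=\nicefrac{4}{9}\neq\nicefrac{11}{18}=d_\swap(\AN_{4,24},\UN_{4,24})$, and both witness the positionwise case via the $(\ID,\UN)$ pair (the paper additionally computes $d_\pos(\UN_m,\AN_m)$ and argues via a non-constant ratio, which is not needed once a single non-constant distance is in hand). One remark in your favor: your values $d_\pos(\ID_m,\UN_m)=\frac{1}{3}-\frac{1}{6m}$ (i.e., $\nicefrac{5}{18}$ and $\nicefrac{7}{24}$ for $m=3,4$) are the correct Wasserstein values under the paper's definition, whereas the paper's stated $\frac{1}{3}-\frac{1}{3m^2}$ follows from its per-column formula $W(\vec{u},\vec{e}_i)=\frac{1}{2m^2}\big((i-1)^2+(m-i)^2+m-1\big)$, which equals $\frac{1}{m}\emd(\vec{e}_i,\vec{u})$ rather than $W(\vec{e}_i,\vec{u})$ and overshoots for interior columns (e.g., direct integration gives $W(\vec{e}_2,\vec{u})=\nicefrac{5}{24}$ for $m=4$, not $\nicefrac{1}{4}$)---though since both expressions vary with $m$, the paper's conclusion is unaffected.
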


This means that
either we have to give up on at least one of the consistency
properties, or 
on looking for swap/positionwise extensions.  We explore both of these
possibilities.

\subsection{Intuitive Swap/Positionwise Extensions}
\label{sec:swap-positionwise}

As far as swap extensions go, one natural idea
is that given two elections $E_1 = (C_1, V_1)$
and $E_2 = (C_2,V_2)$, where $|C_1| < |C_2|$, we define
$\hat{d}^\trun_\swap(E_1,E_2)$ as $d_\swap(E'_1,E_2)$, where
$E'_1$ is equal to $E_1$ with additional $|C_2|-|C_1|$ candidates that
all voters have in their truncated parts. Unfortunately,
this leads to maps where elections with fewer candidates are clustered
together, so we lose significant amount of information
about them.
\begin{proposition}
  $\hat{d}^\trun_\swap$ is a swap extension that is neither $\ID$- nor
  $\AN$- nor $\UN$-consistent.
\end{proposition}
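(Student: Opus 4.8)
The plan is to establish three things: that $\hat{d}^{\trun}_{\swap}$ agrees with $d_\swap$ on equal-sized/complete elections, that it is a pseudodistance, and that it satisfies none of the three consistency properties. Agreement is immediate, since when $|C_1|=|C_2|$ no candidates are added, $E_1'=E_1$, and the definition gives $\hat{d}^{\trun}_{\swap}(E_1,E_2)=d_\swap(E_1,E_2)$. Symmetry and the identity axiom are equally direct: $d_\swap$ is symmetric, and padding an election to be compared with itself yields two isomorphic elections, of distance $0$.

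The substantive part is the triangle inequality. I would fix three complete elections $E_1,E_2,E_3$ sharing a number of voters $n$ and having $m_1\le m_2\le m_3$ candidates, and pass to the common candidate count $m_3$ by padding: let $\hat{E}_i$ be $E_i$ with $m_3-m_i$ truncated candidates appended, so that $\hat{E}_3=E_3$ and $\hat{E}_i$ is exactly the election used inside $\hat{d}^{\trun}_{\swap}(E_i,E_3)$. Because $d_\swap$ is a genuine pseudodistance on elections over a fixed candidate set (being a minimum over candidate/voter matchings of a sum of the vote-level swap pseudodistances), the triangle inequality holds for $\hat{E}_1,\hat{E}_2,\hat{E}_3$ at size $m_3$. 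Two of the three terms it produces equal $\hat{d}^{\trun}_{\swap}(E_1,E_3)$ and $\hat{d}^{\trun}_{\swap}(E_2,E_3)$ verbatim, so the task reduces to bounding the middle term $d_\swap(\hat{E}_1,\hat{E}_2)$, which is normalized by $\tfrac{1}{4}n(m_3^2-m_3)$, in terms of $\hat{d}^{\trun}_{\swap}(E_1,E_2)$, which is normalized by the smaller constant $\tfrac{1}{4}n(m_2^2-m_2)$.

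I expect this middle-term comparison to be the main obstacle, and it is genuinely delicate. Padding is not free: extending the already-padded $\hat{E}_1$ from size $m_2$ to size $m_3$ can create extra tie-induced swaps against the candidates that $\hat{E}_2$ still keeps in its top part, so the \emph{unnormalized} swap distance at size $m_3$ may strictly exceed that at size $m_2$. The argument must therefore show that this increase is always dominated by the simultaneous increase of the normalization constant from $\tfrac{1}{4}n(m_2^2-m_2)$ to $\tfrac{1}{4}n(m_3^2-m_3)$. Concretely, I would track the contribution of each added candidate under an optimal matching and argue that the unnormalized middle term grows only linearly in the number of extra top candidates of $\hat{E}_2$, while the denominator grows quadratically in $m$, so that after normalization $d_\swap(\hat{E}_1,\hat{E}_2)\le\hat{d}^{\trun}_{\swap}(E_1,E_2)$ as required; controlling the matchings tightly enough to make this estimate hold is the crux.

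For the failure of $\ID$-, $\AN$-, and $\UN$-consistency I would use one uniform construction: compare the size-$3$ and size-$5$ instances (both with at least $3$ candidates, as required), for a suitable common number of voters $n$. Padding the size-$3$ election by two candidates forces every vote to end in a truncated block of two tied candidates, while the size-$5$ election is complete; since a tie versus a strict order on a fixed pair contributes at least $\tfrac{1}{2}$ to the swap distance, every matched vote contributes at least $\tfrac{1}{2}$ under any matching. Hence the unnormalized swap distance is at least $\tfrac{n}{2}>0$, so $\hat{d}^{\trun}_{\swap}(\ID_{3,n},\ID_{5,n})$, $\hat{d}^{\trun}_{\swap}(\AN_{3,n},\AN_{5,n})$, and $\hat{d}^{\trun}_{\swap}(\UN_{3,n},\UN_{5,n})$ are all strictly positive, establishing that none of the three consistency properties holds.
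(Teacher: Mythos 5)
Your agreement step and your consistency refutations are sound, and the latter match what the paper evidently intends (it states this proposition without any proof, in the main text or appendix; only the analogous triangle-inequality failure for $\hat{d}^\del_\swap$ is proved there). In particular, your size-$3$ versus size-$5$ construction is correct: the two padding candidates are tied in every padded vote, their images under any candidate bijection are strictly ordered in every complete vote of the size-$5$ election, so each matched vote pair contributes at least $\nicefrac{1}{2}$, giving a strictly positive distance for $\ID$, $\AN$, and $\UN$ alike (for $\UN$ one takes the number of voters to be a common multiple of $3!$ and $5!$, as you note).

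The genuine gap is the triangle inequality, exactly where you located it---and the crux you left open is not merely delicate but false. The bound $d_\swap(\hat{E}_1,\hat{E}_2)\le\hat{d}^\trun_\swap(E_1,E_2)$ fails already for $E_1=\ID_{3,n}$, $E_2=\ID_{4,n}$, $m_3=10$: padding $\ID_{3,n}$ by a \emph{single} candidate creates no ties (a singleton truncated part is effectively a last-place candidate), so $\hat{d}^\trun_\swap(\ID_{3,n},\ID_{4,n})=0$, whereas at size $10$ the seven tied padding candidates of $\hat{E}_1$ cannot all map to the six tied padding candidates of $\hat{E}_2$, forcing at least $6\cdot\frac12=3$ half-swaps per vote, i.e., $d_\swap(\hat{E}_1,\hat{E}_2)\ge 3/22.5=\nicefrac{2}{15}>0$. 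Moreover, no alternative argument can rescue this step, because the triangle inequality itself fails for $\hat{d}^\trun_\swap$ as defined: with a common number of voters $n$, one computes $\hat{d}^\trun_\swap(\ID_{3,n},\ID_{4,n})=0$ and $\hat{d}^\trun_\swap(\ID_{4,n},\ID_{10,n})=\frac{\frac12\binom{6}{2}}{\frac14(10^2-10)}=\nicefrac{1}{3}$, yet $\hat{d}^\trun_\swap(\ID_{3,n},\ID_{10,n})=\frac{\frac12\binom{7}{2}}{\frac14(10^2-10)}=\nicefrac{7}{15}>\nicefrac{1}{3}$ (each value is exact, since the $\binom{k}{2}$ tie-versus-strict pairs inside a padding block of size $k$ contribute $\nicefrac{1}{2}$ each under every matching, and the aligned matching adds nothing else). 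The superadditivity of $\binom{k}{2}$ in the padding size $k$, against a common normalizer, is what your ``linear growth versus quadratic normalization'' heuristic misses: the baseline $\hat{d}^\trun_\swap(E_1,E_2)$ can be zero while further padding creates $(m_2-m_1)(m_3-m_2)$ fresh half-swaps per vote. So the consistency half of your proof stands, but the distancehood half cannot be completed along your route; your analysis in fact shows that the ``swap extension'' part of the statement (which, per the paper's definition, requires $\hat{d}^\trun_\swap$ to be a distance) needs qualification rather than proof.
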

Alternatively, we could define $\hat{d}^\del_\swap(E_1,E_2)$ to be
equal to the expected isomorphic swap distance of elections $E_1$ and
$E'_2$, where $E'_2$ is obtained by deleting $|C_2|-|C_1|$ candidates
from $E_2$ uniformly at random. However, this is not even a distance as it
fails the triangle inequality.  

\begin{proposition}\label{prop:swap-triangle}
  $\hat{d}^\del_\swap$ fails triangle inequality.
\end{proposition}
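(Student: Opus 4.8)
The plan is to produce a single explicit triple of elections $A, B, C$ with $\hat{d}^\del_\swap(A,B) + \hat{d}^\del_\swap(B,C) < \hat{d}^\del_\swap(A,C)$; one counterexample suffices. The guiding idea is to make the two ``path'' legs exactly zero while the ``direct'' side is strictly positive. I would take $B$ to be a hub election with fewer candidates than $A$ and $C$, and engineer $A$ and $C$ so that \emph{every} candidate-deletion bringing either of them down to $B$'s number of candidates yields an election isomorphic to $B$. Since $\hat{d}^\del_\swap(A,B)$ is the expectation, over the uniformly random deletions that bring the larger election $A$ down to $B$'s size, of $d_\swap(A',B)$, and each realized $A'$ is isomorphic to $B$, the ``distance zero iff isomorphic'' property of $d_\swap$ gives $\hat{d}^\del_\swap(A,B)=0$, and likewise $\hat{d}^\del_\swap(B,C)=0$.

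The remaining ingredient is that $A$ and $C$ should have equal size, so that $\hat{d}^\del_\swap(A,C)$ involves no deletion and simply equals $d_\swap(A,C)$, which is strictly positive as soon as $A$ and $C$ are non-isomorphic. Thus the whole construction reduces to the purely combinatorial task of finding two \emph{non-isomorphic} equal-size elections all of whose restrictions to $|C_B|$ candidates are isomorphic to one common smaller election $B$ --- that is, two different profiles that are indistinguishable through their sub-profiles. I would realize this with $B$ on two candidates and $A,C$ on three, taking three voters throughout: let $A$ consist of the votes $1 \succ 2 \succ 3$, $1\succ 2\succ 3$, $3\succ 2\succ 1$, and let $C$ consist of the three cyclic rotations $1\succ2\succ3$, $2\succ3\succ1$, $3\succ1\succ2$. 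A direct check shows that each of the three pairwise restrictions of $A$, and each of those of $C$, splits the three voters $2$ to $1$, so all six restrictions are isomorphic to the single two-candidate election $B$ with a $2$:$1$ split; meanwhile $A$ has a repeated vote while $C$ does not, and no candidate relabelling can identify them, so $A \not\cong C$ and $d_\swap(A,C) > 0$.

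The main obstacle is precisely this middle step: arranging that all small restrictions collapse to one isomorphism type while the full elections stay non-isomorphic. This is the non-identifiability of a profile from its pairwise sub-profiles, and finding the right tiny profiles is the only creative part; once they are fixed, both legs vanish by the isomorphism characterization of $d_\swap$ and the direct side is positive, so the triangle inequality $\hat{d}^\del_\swap(A,B)+\hat{d}^\del_\swap(B,C)\ge \hat{d}^\del_\swap(A,C)$ fails. I would finally record that the normalizing denominators $\frac{1}{4} n(m^2-m)$ are nonzero for these sizes ($n=3$, $m\in\{2,3\}$), so no degeneracy affects the computation.
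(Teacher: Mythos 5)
Your proposal is correct, and it proves the statement via a genuinely different counterexample than the paper's. The paper also exhibits an explicit violating triple, but with nonzero legs: it takes $E$ to be the $3$-candidate, $6$-voter election consisting of two copies of each cyclic rotation of $a \succ b \succ c$, and routes through the identity elections, computing $\hat{d}^\del_\swap(E,\ID_{3,6}) = \nicefrac{8}{9}$, $\hat{d}^\del_\swap(E,\ID_{2,6}) = \nicefrac{2}{3}$ (every single-candidate deletion of $E$ is at $2$ swaps from $\ID_{2,6}$, normalized by $3$), and $\hat{d}^\del_\swap(\ID_{2,6},\ID_{3,6}) = 0$, so that $\nicefrac{8}{9} > \nicefrac{2}{3} + 0$. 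You instead force both legs to be exactly zero by a non-identifiability construction: two non-isomorphic $3$-candidate, $3$-voter profiles ($A$ with a repeated vote, $C$ the three cyclic rotations) all of whose $2$-candidate restrictions are isomorphic to one common $2{:}1$ hub election $B$; your verification of the six restrictions checks out, $A \not\cong C$ follows from vote multiplicities being isomorphism-invariant, and equal sizes of $A$ and $C$ make $\hat{d}^\del_\swap(A,C) = d_\swap(A,C) > 0$ by the zero-iff-isomorphic property of $d_\swap$. Your route buys a cleaner argument (no swap-distance arithmetic, only isomorphism checks) and a starker failure --- the two legs vanish entirely, showing the deletion-based distance cannot even separate a positive-distance pair through a smaller hub --- while the paper's version quantifies the violation numerically and shows it arises already against the natural $\ID$ family, which connects to its experimental observation that violations are common rather than pathological.
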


One could hope that violations of the
triangle inequality
are rare and, hence, could be ignored.
Unfortunately,
based on the experiments,
this is not the case,
and violations are common.
All in all, we were not able to find a
satisfying swap
extension and we leave looking for one as an open problem.
On the positive side, we do
identify a natural, $\UN$-consistent positionwise extension that not
only applies to different-sized elections, but also seamlessly handles
top-truncated votes.

For a matrix $X = [\vec{x}_1, \ldots, \vec{x}_m]$ and an integer
$k \geq 1$, by $\stretched_{mk}(X)$ we denote the matrix obtained from
$X$ by \emph{stretching}, i.e., copying each of its component vectors $k$ times, so that for
each $i \in [mk]$, the $i$-th vector of $\stretched_{mk}(X)$ is
$\vec{x}_{\lceil i/k \rceil}$.
Then, we define the positionwise distance between elections $E$ and $F$---with possibly
different numbers of candidates and voters---as
$\hat{d}_\pos(E,F) = d_W(\stretched_s(\freq(E)),\stretched_s(\freq(F))),$
where $s$ is the least common multiple 
of the numbers of candidates in $E$ and~$F$.
That is, to obtain $\hat{d}_\pos(E,F)$ we first compute the frequency
matrices of $E$ and $F$, then we duplicate their vectors so that both
matrices end up with an equal number of columns, and, finally, we
match these columns and sum up their Wasserstein distances (recall the
definition of the positionwise distance from \Cref{sec:prelim}; note
that here the vectors may be of different dimensions, which
is why we chose to use the Wasserstein distance instead of EMD).
Truncated votes are handled seamlessly because they are encoded in the
frequency matrices.  We show that $\hat{d}_\pos$ is indeed a
positionwise extension
and that it is UN-consistent.

\begin{theorem}\label{thm:pos-ext}
  $\hat{d}_\pos$ is a positionwise extension that is $\UN$-consistent,
  but not $\ID$- nor $\AN$-consistent.
\end{theorem}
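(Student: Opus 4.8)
The plan is to establish the statement in four parts, the common engine being that the Wasserstein distance $W$ is invariant under stretching. First I would record the key lemma: if $\stretched(\vec a)$ is obtained from $\vec a$ by replacing each entry with $k$ copies of itself, then the associated stepwise function and its integral $A(\cdot)$ on $[0,1]$ are literally unchanged (the normalization cancels the factor $k$), so $W(\vec a,\vec b)=W(\stretched(\vec a),\stretched(\vec b))$ for every stretching factor. Two easy consequences follow. For the \emph{extension} property, note that when $E_1,E_2$ both have $m$ candidates the least common multiple in the definition is $s=m$, so no stretching occurs and $\hat d_\pos(E_1,E_2)=d_W(\freq(E_1),\freq(E_2))=d_\pos(E_1,E_2)$. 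Symmetry and $\hat d_\pos(E,E)=0$ are immediate from the corresponding properties of $d_W$.

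The crux of showing $\hat d_\pos$ is a distance is the triangle inequality, and this is where I expect the real work to lie. The difficulty is that the three pairwise distances among $E,F,G$ are computed after stretching to three different dimensions, $\mathrm{lcm}(m_E,m_F)$, $\mathrm{lcm}(m_F,m_G)$, $\mathrm{lcm}(m_E,m_G)$, so they cannot be compared directly. I would first prove that $\hat d_\pos$ is \emph{well defined} in the stronger sense that stretching both matrices to \emph{any} common multiple of their widths yields the same value; equivalently, $d_W(\stretched_{wc}(Y),\stretched_{wc}(Z))=d_W(Y,Z)$ for matrices $Y,Z$ of equal width $w$. The inequality ``$\le$'' comes from the block-diagonal matching induced by an optimal matching of $Y$ and $Z$; the reverse inequality is the technical heart, following from the fact that the stretched cost matrix is a block matrix $C\otimes J_c$ whose optimal assignment value is exactly $c$ times that of $C$ (an LP/Birkhoff argument: replicating both sides of an assignment problem by the same factor leaves the per-unit optimum unchanged). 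With well-definedness in hand, the triangle inequality becomes routine: compute all three distances after stretching every matrix to $L=\mathrm{lcm}(m_E,m_F,m_G)$ columns, reducing everything to $d_W$ on equal-width matrices, for which the triangle inequality holds by composing the optimal column matchings and invoking the triangle inequality of $W$.

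For $\UN$-consistency I would use that $\freq(\UN_m)=\tfrac1m J_m$, so every column is the uniform vector, whose stepwise function is constantly $1$ and whose integral is $A(x)=x$ independently of $m$. Hence after stretching $\freq(\UN_p)$ and $\freq(\UN_r)$ to $\mathrm{lcm}(p,r)$ columns, every pair of columns has $W=0$, so $\hat d_\pos(\UN_{p,q},\UN_{r,s})=0$.

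Finally, to refute $\ID$- and $\AN$-consistency I would use a counting argument on how many columns have a CDF that is positive near $x=0$. For $\ID_m$ (WLOG $\freq(\ID_m)$ has columns $\vec e_1,\dots,\vec e_m$), the CDF of $\vec e_j$ is zero on $[0,(j-1)/m]$ and positive afterward, so exactly the column $\vec e_1$ has $A(x)>0$ on $(0,1/m]$ while all others vanish on $[0,1/m]$. Fixing $p\neq r$ with $p,r\ge 3$, putting $\ell=\mathrm{lcm}(p,r)$ and choosing $x_0\in(0,\min(1/p,1/r))$, the stretched matrices have $\ell/p$ versus $\ell/r$ columns whose CDF is positive at $x_0$. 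A zero-cost matching must pair columns with identical CDFs, and identical CDFs agree in sign at $x_0$; the two counts would then have to be equal, contradicting $\ell/p\neq\ell/r$, so $\hat d_\pos(\ID_{p,\cdot},\ID_{r,\cdot})>0$. The $\AN$ case is identical except that $\freq(\AN_m)$ places mass in row $1$ only for the top candidates of the two opposed orders, i.e.\ exactly two columns, giving counts $2\ell/p$ versus $2\ell/r$, again unequal when $p\neq r$.
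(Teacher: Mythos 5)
Your proposal is correct, and its core coincides with the paper's proof: the paper also reduces everything to a stretching-invariance lemma (its Lemma~\ref{lem:pos-resize}, stating $d_W(X,Y)=d_W(\stretched_{mt}(X),\stretched_{mt}(Y))$) and then proves the triangle inequality by re-stretching all three frequency matrices to a common multiple of the widths (it uses $pqr$ rather than the lcm, which is immaterial) and invoking the triangle inequality of $d_W$ at equal width. Your proof of the key lemma is the same argument in different clothing: the paper formulates $d_W$ as a min-cost flow of value $k$ in a bipartite network and aggregates a flow in the stretched network into one in the original network with the same cost, which is exactly your ``$C\otimes J_c$ plus Birkhoff/LP'' aggregation of a stretched assignment into a fractional assignment; both rely on the integrality of the fractional matching polytope. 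Where you genuinely diverge is on the consistency claims, which the paper's written proof of the theorem does not spell out. Your $\UN$-consistency argument (all columns of $\freq(\UN_m)$ have CDF $A(x)=x$ regardless of $m$, so all pairwise $W$ vanish after stretching) is the natural one. For the failure of $\ID$- and $\AN$-consistency, the paper effectively has a quantitative route available: in the proof of Proposition~\ref{prop:swap-pos:impossibility} it computes ${d}_\pos(\UN_m,\ID_m)=\frac{1}{3}-\frac{1}{3m^2}$ and ${d}_\pos(\UN_m,\AN_m)=\frac{1}{6}-\frac{1}{6m}$, and combining these with $\UN$-consistency and the (now established) triangle inequality gives, e.g., $\hat{d}_\pos(\ID_{p,\cdot},\ID_{r,\cdot})\geq \frac{1}{3r^2}-\frac{1}{3p^2}>0$ for $p>r$. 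Your alternative is more elementary and purely qualitative: a zero-cost optimal matching would have to pair columns with identical CDFs (since $W(\vec{a},\vec{b})=0$ forces $A\equiv B$ for these continuous functions), and counting columns whose CDF is positive at some $x_0\in\bigl(0,\min(\nicefrac{1}{p},\nicefrac{1}{r})\bigr)$ gives $\ell/p$ versus $\ell/r$ (respectively $2\ell/p$ versus $2\ell/r$ for $\AN$), a contradiction for $p\neq r$. The quantitative route buys explicit lower bounds on the distances between same-family elections of different widths; yours avoids all computation and exposes the structural reason the extension cannot collapse $\ID_p$ onto $\ID_r$. Both are sound, and your sign-counting argument is a legitimate, arguably cleaner, replacement for the part the paper leaves implicit.
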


Note that \Cref{prop:swap-pos:impossibility} does not
not preclude the existence of a positionwise extension satisfying $\UN$-consistency and either $\ID$- or $\AN$-consistency. 
Thus, our positionwise distance is not ``optimal'' in terms of the axioms it satisfies.

\subsection{Feature Distance and ID/AN/UN-Consistency}\label{sec:feature}

In principle, \Cref{prop:swap-pos:impossibility} might hold simply
because no distance is simultaneously $\ID$-, $\AN$-, and
$\UN$-consistent, irrespective of 
being a swap or a
positionwise extensions. We show that this is not the case and at
least one such distance exists.
To this end, let $f = (f_1,\ldots,f_k)$ be a collection of
\emph{features}, i.e., functions that given an election output a value
between $0$ and $1$. For an election $E$, its feature vector is
$f(E) = (f_1(E),\ldots,f_k(E))$, and the \emph{feature distance}
$\hat{d}_f$ between elections $E$ and $F$ is
$d_f(E,F) = \ell_2(f(E),f(F))$; naturally, one could also use $\ell_1$
or some other distances.%
\footnote{ For two vectors, $\vec{a}=(a_1,\dots,a_k)$ and
  $\vec{b}=(b_1,\dots,b_k)$, we define
  $\ell_2(\vec{a},\vec{b})=\left(\sum_{i=1}^k
    (a_i-b_i)^2\right)^{\nicefrac{1}{2}}$.}  
    For
every collection of features $f$, $\hat{d}_f$ is indeed a distance
over different-sized elections, and if the features are defined for
top-truncated elections, so is 
$\hat{d}_f$.

Let $\id(E)$, $\an(E)$, and $\un(E)$ be three 
features such that their value is~$0$ if the input election is
isomorphic to, respectively, some $\ID$, $\AN$, or $\UN$ election, and
it is $1$ otherwise.

\begin{proposition}
  The distance defined by collection $(\id, \an, \un)$ of features 
  is $\ID$-, $\AN$-, and $\UN$-consistent.
\end{proposition}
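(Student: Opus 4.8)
The plan is to show that the three features are constant on each of the families $\ID$, $\AN$, and $\UN$ (restricted to at least three candidates), so that any two elections drawn from the same family have identical feature vectors and hence lie at $\ell_2$-distance $0$. Since each notion of consistency compares only elections from a single family, this immediately yields all three at once.

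First I would note that $\id$, $\an$, and $\un$ are isomorphism-invariant: by definition each returns $0$ exactly when its input is isomorphic to some member of the corresponding family, and that property is preserved under isomorphism. Thus the value of a feature on an election depends only on its isomorphism type, and it suffices to determine, for each family, which of the three features vanish on a single representative.

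The crux is to verify that, for $m \ge 3$, the families $\ID$, $\AN$, and $\UN$ are pairwise non-isomorphic. I would argue this via the number of distinct votes appearing (as abstract weak orders), which is an isomorphism invariant because a candidate renaming acts as a bijection on orderings and voter reordering leaves the multiset of votes fixed. An $\ID_{m,n}$ election has exactly one distinct vote, an $\AN_{m,n}$ election has exactly two (a linear order and its reverse, distinct whenever $m \ge 2$), and a $\UN_{m,n}$ election has all $m!$ of them. As $1 < 2 < m!$ for $m \ge 3$, no two families share an isomorphism type; this is precisely where the hypothesis $m \ge 3$ enters, matching the restriction built into the consistency definitions (recall $\AN_{2,2} = \UN_{2,2}$, which collapses the strict inequality). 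Consequently $f(\ID_{m,n}) = (0,1,1)$, $f(\AN_{m,n}) = (1,0,1)$, and $f(\UN_{m,n}) = (1,1,0)$ for every $m \ge 3$ and every admissible number of voters $n$.

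Finally, each of these vectors is independent of $n$, so for all $p,r \ge 3$ and admissible $q,s$ we obtain $f(\ID_{p,q}) = f(\ID_{r,s})$, and likewise for $\AN$ and $\UN$. Substituting into $\hat{d}_f(E,F) = \ell_2(f(E),f(F))$ gives distance $0$ in all three cases, which is exactly $\ID$-, $\AN$-, and $\UN$-consistency. I do not anticipate a genuine obstacle here; the only step requiring care is the pairwise non-isomorphism claim, whose validity rests entirely on the $m \ge 3$ assumption already present in the statement.
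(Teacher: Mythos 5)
Your proof is correct. The paper states this proposition without an explicit proof, treating it as immediate, and your argument is exactly the verification the authors leave implicit: the feature vectors are constant on each family ($f(\ID_{m,n})=(0,1,1)$, $f(\AN_{m,n})=(1,0,1)$, $f(\UN_{m,n})=(1,1,0)$ for $m\ge 3$), justified via the isomorphism-invariant count of distinct votes $1<2<m!$, which also correctly pinpoints where the $m\ge 3$ hypothesis enters (it fails at $m=2$, where $\AN_{2,n}$ coincides with $\UN_{2,n}$).
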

While this distance has all the desired consistency properties, it puts
all elections that are not isomorphic to $\ID$, $\AN$, or $\UN$ at
distance zero and, hence, is not very useful.
Next we develop a feature distance that
implements a similar idea, but in a more sophisticated way.
To this end, 
we focus on 
evaluating diversity, agreement, and polarization among the votes
(indeed, $\UN$ captures ideal diversity, $\ID$ ideal agreement, and
$\AN$---polarization).  Let $E = (C,V)$ be an election.  The agreement
for candidates $a,b \in C$ is:
\[
  \alpha(a,b) = \max( |N_E(a\!\succ\! b) - N_E(b\!\succ\! a)|, N_E(a \!\sim\! b)) /
  |V|.
\]
Intuitively, $|N_E(a\succ b) - N_E(b\succ a)|/|V|$ captures the
agreement when voters lean toward strict preference over $a$ and $b$,
and $N_E(a \sim b)/|V|$ 
reflects the agreement when most of the voters
put both $a$ and $b$ in the truncated parts of their votes.
The agreement
index of an election $E=(C,V)$ is an average of agreements for all
pairs of candidates, i.e.,
$A(E) = \sum_{\{a,b\} \subseteq C} \alpha(a,b) / \binom{|C|}{2}.$
This definition is a natural extension of the agreement index studied for complete elections by
\citet{alc-vor:j:cohesiveness}, \citet{has-end:c:diversity-indices}, and
\citet{can-ozk-sto:polarization}.
Note that these papers use other
names and interpretations for this index---%
we follow the approach of
\citet{fal-kac-sor-szu-was:c:microscope}.
\newcommand{\empkem}{{{\mathrm{emk}}}}

Regarding diversity and polarization, we also follow
\citet{fal-kac-sor-szu-was:c:microscope}, but with a few
changes. 
For each integer $i$, let the empirical $i$-Kemeny score of election
$E$ be:
\[
  \textstyle
  \empkem_i(E) = \min_{v_1,\ldots, v_i \in V} \! \left( \sum_{v \in V} \left( \min_{j \in [i]} \swap(v,v_j) \right) \right)\!.
\]
That is, to compute $\empkem_i(E)$, we seek $i$ votes from the
election so that the sum of the swap distances of each vote in $E$ to
the closest selected one is minimized. E.g., $\empkem_1(E)$
is an approximation of the classic Kemeny
score~\citep{kem:j:no-numbers}. To normalize $\empkem_i(E)$, we divide
by its maximal possible value, i.e.,
$\nicefrac{1}{2}\cdot |V| \cdot \binom{|C|}{2}$.  We define the
diversity and polarization indices as:
\begin{align*}
  \textstyle
  D(E) &= \textstyle\frac{2}{5}\sum_{i=1}^5\empkem_i(E)/(|V| \cdot \binom{|C|}{2}),\\
  \textstyle P(E) &=\textstyle 2 \cdot (\empkem_1(E) - \empkem_2(E))/(|V| \cdot
  \binom{|C|}{2}).
\end{align*}
The latter is simply an approximation of the polarization index
introduced by \citet{fal-kac-sor-szu-was:c:microscope} and the former
is a heuristic built on top of their diversity index;  we chose the
constant $5$ as
our initial experiments have shown that it
captures the same notion as their approach,
and it allows for fast
computation;
in general, computing $\empkem_i(E)$ is
intractable~\citep{fal-kac-sor-szu-was:c:microscope}, so we
approximate it using their local search approach.

We prove that for every $i$, the value of the empirical $i$-Kemeny
score of the uniformity election converges to its maximal value as the
number of candidates grows.
\begin{proposition}
\label{prop:empkem:convergence}
    For every constant $i \in \mathbb{N}$, it holds that 
    \(
        \lim_{m\rightarrow\infty}\empkem_i(\UN_{m,m!}) = \nicefrac{1}{2}\cdot m! \cdot \binom{m}{2}.
    \)
\end{proposition}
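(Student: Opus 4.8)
The right-hand side $\frac{1}{2}m!\binom{m}{2}$ is exactly the normalizing constant (the maximal possible value of $\empkem_i$), so the statement should be read as the claim that the normalized score $\empkem_i(\UN_{m,m!})\big/\bigl(\frac{1}{2}m!\binom{m}{2}\bigr)$ tends to $1$. The plan is to establish matching upper and lower bounds on this ratio. For the upper bound, I would first note that duplicating a single center shows $\empkem_i(E)\le\empkem_1(E)$ for every $E$; and in $\UN_{m,m!}$ every linear order occurs exactly once, so for any fixed reference order $u$ each of the $\binom{m}{2}$ pairs is inverted in exactly half of the $m!$ votes, giving $\sum_{v\in V}\swap(v,u)=\frac{1}{2}m!\binom{m}{2}$ independently of $u$. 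Hence $\empkem_1(\UN_{m,m!})=\frac{1}{2}m!\binom{m}{2}$ and the ratio is at most $1$.

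The lower bound is the crux. I would show that \emph{every} choice of $i$ centers yields an objective close to the maximum. Fix centers $v_1,\dots,v_i\in V$; since each order appears once, $\sum_{v\in V}\min_{j}\swap(v,v_j)=m!\cdot\mathbb{E}\bigl[\min_j\swap(v,v_j)\bigr]$, the expectation taken over a uniformly random permutation $v$. For a single fixed center $v_j$, relabeling candidates lets me assume $v_j$ is the identity, so $\swap(v,v_j)$ is just the number of inversions of a uniform random permutation, with mean $\mu=\frac{1}{2}\binom{m}{2}=\Theta(m^2)$. Writing this count through its inversion table as a sum of independent random variables $Z_2,\dots,Z_m$, where $Z_k$ is uniform on $\{0,\dots,k-1\}$, its variance is $\sum_{k}\Var(Z_k)=\Theta(m^3)$. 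By Chebyshev's inequality, for any fixed $\epsilon>0$ we get $\Pr[\swap(v,v_j)<(1-\epsilon)\mu]=O\bigl(1/(\epsilon^2 m)\bigr)\to 0$, and a union bound over the constantly many $i$ centers gives $\Pr[\min_j\swap(v,v_j)<(1-\epsilon)\mu]\to 0$, with a bound that does not depend on the chosen centers.

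Combining the two parts, I would bound the expectation from below by the high-probability event, $\mathbb{E}[\min_j\swap(v,v_j)]\ge(1-\epsilon)\mu\,(1-o(1))$, so that $\empkem_i(\UN_{m,m!})\ge(1-\epsilon)(1-o(1))\cdot\frac{1}{2}m!\binom{m}{2}$ for every center choice, and therefore for the minimizing one. Dividing by the normalizer and sending $m\to\infty$ gives $\liminf$ of the ratio at least $1-\epsilon$; as $\epsilon$ is arbitrary, the $\liminf$ is at least $1$, which together with the upper bound forces the ratio to converge to $1$.

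The step I expect to be the main obstacle is the concentration estimate, specifically verifying that the fluctuations of the inversion count are of strictly smaller order than its mean, i.e.\ variance $\Theta(m^3)$ against mean-squared $\Theta(m^4)$. The inversion-table representation as a sum of independent bounded variables is what makes both the variance bound and the tail bound transparent (it would even support a Hoeffding-type estimate), while the fact that $i$ is held constant is what keeps the union bound negligible in the limit.
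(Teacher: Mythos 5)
Your proposal is correct and follows essentially the same route as the paper: both rest on Chebyshev concentration of the Mahonian (inversion-count) distribution of $\swap(v,v_j)$ around $\frac{1}{2}\binom{m}{2}$, together with the fact that $i$ is a constant. Your union bound over the $i$ centers is the contrapositive of the paper's pigeonhole step (the paper argues by contradiction, splitting $V$ into near and far votes and locating one center with too many close votes), and your inversion-table derivation of the $\Theta(m^3)$ variance simply re-derives the variance formula the paper cites.
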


\noindent This gives the following sanity check (arrows mean
convergence as the number of candidates grows):
\begin{align*}
  D(\ID) &=       0, & A(\ID) &= 1, & P(\ID) &= 0, \\
  D(\UN) &\rightarrow 1, & A(\UN) &= 0, & P(\UN) &\rightarrow 0, \\
  D(\AN) &=       1/5, & A(\AN) &= 0, & P(\AN) &= 1. 
\end{align*}
These results are intuitive as in $\ID$ all the voters agree, $\UN$ is
most diverse, and $\AN$ is most polarized.  The DAP distance, denoted
$\hat{d}_\dap$, is a feature distance that uses $D$, $A$, and $P$ as
the features.
We see that DAP is $\ID$-consistent and $\AN$-consistent, but not
$\UN$-consistent 
(but it satisfies this property in an approximate sense%
---the distance between two $\UN$ elections with different
numbers of candidates gets smaller and smaller as the numbers of
candidates in these elections increase).
\begin{proposition}
  $\hat{d}_\dap$ is $\ID$-consistent and $\AN$-consistent, but is not $\UN$-consistent.
\end{proposition}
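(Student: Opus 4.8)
The plan is to reduce each of the three claims to a computation of the feature values $D$, $A$, and $P$ on the relevant elections, since $\hat{d}_\dap(E,F) = \ell_2((D(E),A(E),P(E)),(D(F),A(F),P(F)))$ vanishes exactly when the two feature vectors coincide. Thus $\ID$- and $\AN$-consistency reduce to showing that the feature vector of an $\ID$ (resp. $\AN$) election is independent of its size, while non-$\UN$-consistency reduces to exhibiting two $\UN$ elections whose feature vectors differ. The work is therefore to verify that the entries of the sanity-check table hold for \emph{every} size, not merely in the limit.

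For the two positive claims I would argue as follows. In $\ID_{m,n}$ all votes are identical, so a single representative already matches every vote, giving $\empkem_i = 0$ for all $i$ and hence $D = P = 0$; moreover every pair of candidates is ranked unanimously, so $\alpha(a,b) = 1$ and $A = 1$. The feature vector is thus $(0,1,0)$ independently of $m$ and $n$. In $\AN_{m,n}$ the election consists of a linear order and its reverse, each appearing $n/2$ times. Taking these two orders as representatives shows $\empkem_i = 0$ for all $i \geq 2$, while by symmetry any single representative yields $\empkem_1 = \tfrac{1}{2} n \binom{m}{2}$ (each vote is either identical to or the reverse of the representative). Substituting into the definitions gives $D = \tfrac{1}{5}$, $P = 1$, and, since every pair is split evenly with no ties, $A = 0$; again this is independent of $m$ and $n$. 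Consequently the $\ID$ (resp. $\AN$) feature vectors coincide across sizes, so the corresponding DAP distances are $0$.

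For non-$\UN$-consistency I would show that the $\UN$ feature vector genuinely depends on the number of candidates. First, by symmetry each pair of candidates in $\UN_{m,m!}$ is split evenly with no indifferences, so $A(\UN_{m,m!}) = 0$ for all $m$, and the feature values are unchanged when the voter count is any multiple of $m!$ (both $\empkem_i$ and its normalizer scale by the same factor). The key observation is that a fixed representative disagrees with exactly half of all permutations on each candidate pair, so $\empkem_1(\UN_{m,m!}) = \tfrac{1}{2} m! \binom{m}{2}$ exactly---its maximal value---whereas choosing a representative together with its reverse yields a strictly smaller sum (the representative itself is then matched at distance $0$), so $\empkem_2(\UN_{m,m!}) < \empkem_1(\UN_{m,m!})$ and therefore $P(\UN_{m,m!}) > 0$ for every finite $m \geq 3$. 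On the other hand, \Cref{prop:empkem:convergence} gives $\empkem_2(\UN_{m,m!}) \to \tfrac{1}{2} m! \binom{m}{2}$, hence $P(\UN_{m,m!}) \to 0$. A sequence that is strictly positive for all $m$ yet converges to $0$ cannot be constant, so there are $m_1 \neq m_2$ with $P(\UN_{m_1,m_1!}) \neq P(\UN_{m_2,m_2!})$, and the DAP distance between these two $\UN$ elections is positive, refuting $\UN$-consistency.

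The routine parts are the closed-form evaluations for $\ID$ and $\AN$ and the symmetry argument fixing $\empkem_1(\UN_{m,m!})$. The main obstacle is the non-consistency direction: I must rule out that the $\UN$ feature vector is, by coincidence, constant in $m$. This is precisely where the strict inequality $\empkem_2(\UN_{m,m!}) < \empkem_1(\UN_{m,m!})$ (giving $P > 0$ at every finite $m$) must be combined with the convergence $P \to 0$ from \Cref{prop:empkem:convergence}; together they force nonconstancy. If one prefers an explicit witness to this limiting argument, one can instead compute $P(\UN_{3,6})$ by hand and compare it with $P(\UN_{m,m!})$ for a sufficiently large $m$.
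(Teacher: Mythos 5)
Your proposal is correct and follows essentially the same route as the paper, which establishes the proposition via the exact feature values $D(\ID)=0$, $A(\ID)=1$, $P(\ID)=0$ and $D(\AN)=\nicefrac{1}{5}$, $A(\AN)=0$, $P(\AN)=1$ (size-independent, hence $\ID$- and $\AN$-consistency) together with Proposition~\ref{prop:empkem:convergence} for the $\UN$ case. Your one addition---showing $\empkem_2(\UN_{m,m!})<\empkem_1(\UN_{m,m!})=\nicefrac{1}{2}\cdot m!\binom{m}{2}$ exactly, so that $P(\UN_{m,m!})>0$ for every finite $m\ge 3$ while $P(\UN_{m,m!})\rightarrow 0$, forcing nonconstancy---is a valid and welcome tightening of the step the paper leaves implicit in its remark that the distance between $\UN$ elections merely shrinks as the number of candidates grows.
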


While our diversity feature may appear somewhat ad hoc, we believe
that it captures the intuitive notion of diversity among votes and our
results are robust to tweaking it. Indeed, in \Cref{sec:maps} we find
strong correlation between the DAP distance and the swap one (which
implicitly relies on diversity analysis
\citep{fal-kac-sor-szu-was:c:microscope}).

\subsection{Positionwise Distance Versus DAP}
\label{sec:pos-dap:diff}
Positionwise distance and DAP 
vary in several significant ways.  Foremost, the positionwise distance
deals with different numbers of candidates by, effectively, creating
their virtual copies, so that the elections it analyzes look as if
they were equal-sized (this process is hidden in computing the
Wasserstein distance and in stretching the matrices). On the other
hand, DAP identifies structural properties of the elections
(diversity, agreement, and polarization) and rescales them to a common
denominator, so that different-sized elections can be compared on
common grounds.

Second, the two distances treat top-truncated votes differently.  DAP
inherits its approach 
from the swap distance that its features are based on (the agreement
index is similar in this respect): If a voter puts some
candidates in the truncated part, then DAP assumes that he or she sees
them as equally bad. In contrast, the
positionwise distance assumes that
the voter can rank them, but chose not to report it, so the distance
assumes a uniform distribution over possible completions.

\section{Maps of Synthetic Datasets and Evaluation}
\label{sec:evaluation}

Our next goal is to understand how the positionwise and DAP distances
behave in practice. To this end, we form and analyze maps of synthetic
elections, and we evaluate how the distances between elections change
as we vary either their size or their level of truncation. First, we
describe our data.

\subsection{Datasets}\label{sec:datasets}

Our \emph{basic dataset} consists of 326 complete elections with 8 candidates
and 96 voters each, and is nearly identical to the one used by
\citet{fal-kac-sor-szu-was:c:microscope} (in particular, we chose the same 
numbers of candidates and voters as they did).
The main part of the dataset consists of elections
generated according to the impartial culture (IC), normalized
Mallows~\citep{mal:j:mallows,boe-bre-fal-nie-szu:c:compass},
Pólya-Eggenberger urn~\citep{ber:j:urn-paradox,mcc-pri-sli:j:dodgson},
and Euclidean models (see, e.g., the work
of~\citet{enelow1984spatial}).  
Under impartial culture, we draw each vote uniformly at random. The
normalized Mallows model is similar but the votes are clustered around
a given central one (the strength of this clustering is controlled by
parameter norm-$\phi \in [0,1]$,
the higher the value the less concentrated are the votes;
see the work of \citet{boe-fal-kra:c:mallows-normalization}
for a discussion of this model). The urn model generates elections
with clusters of identical votes (the larger its parameter of
contagion $\alpha \geq 0$ is, the fewer clusters there are, each
containing more votes~\citep{fal-kac-sor-szu-was:c:microscope}).
In the Euclidean models, candidates and voters are points in some
Euclidean space and voters prefer the closer candidates (we draw
the points uniformly from a unit hypercube or hypersphere of
a given dimension;
for 1-, 2-, and 3- dimensional hypercubes
we refer to the models as Interval, Square, and Cube;
for 1- and 2- dimensional hyperspheres we refer to them
as
Circle and Sphere).
The dataset also includes
elections generated using
statistical
cultures that yield
single-peaked~\citep{bla:b:polsci:committees-elections},
SPOC~\citep{pet-lac:j:spoc}, and
group-separable~\citep{ina:j:group-separable,ina:j:simple-majority}
elections.
Finally, we add $\ID$,
$\AN$, and an approximation of $\UN$ elections together with
artificial elections forming paths between these three on our maps
(which include $\ST$ election in which 
each voter ranks the same half of the candidates on top,
but otherwise the votes are chosen uniformly~at~random).

We generated the \emph{size-oriented dataset} in the same way as 
the basic one,
except that for
each culture we partitioned its elections into four groups, with
either $8$ or $16$ candidates and either $96$ or $192$ voters.
We obtain top-truncated elections from complete ones by using the following
methods:

\begin{enumerate}

\item Top-$k$ truncation 
  removes from each vote the candidates below position $k$.  Such data
  appears, e.g., in Preflib in the
  sushi dataset, where people rank their top 10 sushi types out of 100
  available ones~\citep{kam:c:sushi} (see Preflib file
    \texttt{00014-00000002.soi}).

\item Random cut truncation is parameterized by probability~$p$.  For
  each vote, we consider its candidates from top to bottom and with
  probability $1-p$ we stop the process, truncating the vote after the
  current candidate.
  Some of the political elections follow a similar pattern, e.g., UK
  Labour Party leadership
  election~\citep{ril-rya-war:www:uk-labour-party-election} (see Preflib
    file \texttt{00030-00000001.soi}).

  \item Random drop truncation moves each candidate in each vote to
    the truncated part, independently, with probability $p$. This
    imitates sport elections---such as those for Formula~1---where
    each player fails to finish a given competition with some
    probability.

\end{enumerate}
To form the \emph{comprehensive dataset},
we took the size-oriented
one and for each group of elections (of a given size, generated using
a given statistical culture) we left the first half of the elections
in the group intact, we applied the top-$k$ truncation to the next
quarter of them, and we applied random cut truncation to the last
quarter. We chose the truncation parameters so that, in expectation,
each voter ranked half of the candidates.  We generated the
\emph{truncation-oriented dataset} in the same way, but starting from the
basic dataset.
We also generated a \emph{random drop dataset},
but we omitted it from the experiments as its elections have very
different nature
(we analyze them in \Cref{app:truncation:analysis}).

\begin{figure*}[t]
  \centering
  \hfill
  \begin{subfigure}[b]{0.35\linewidth}
    \includegraphics[width=\linewidth]{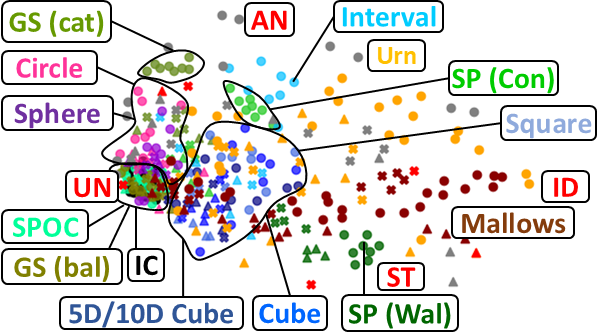}%
    \caption{Positionwise Distance Map.}
    \label{fig:map:trun1-poswise}
  \end{subfigure}
  \hfill
  \begin{subfigure}[b]{0.35\linewidth}
    \includegraphics[width=\linewidth, trim={0cm 0cm 0cm 0cm}, clip]{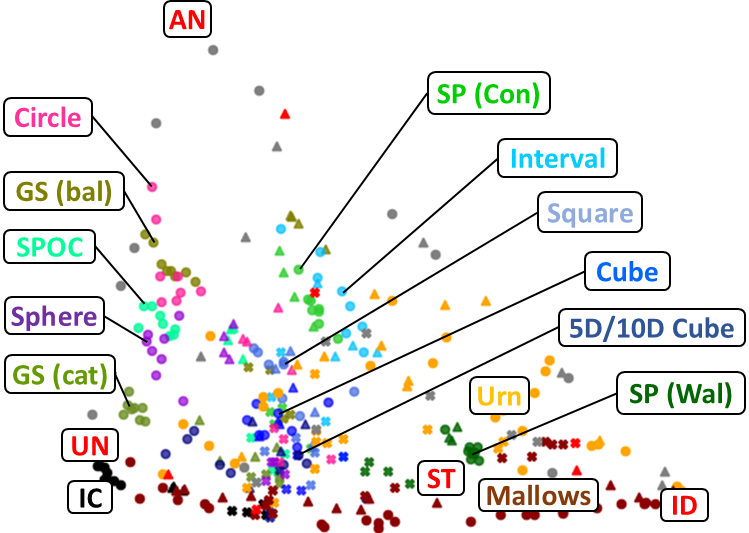}%
    \caption{DAP Distance Map.}
    \label{fig:map:trun1-dap-ape}
  \end{subfigure}
  \hfill
  \begin{subfigure}[b]{0.01\linewidth}
  \end{subfigure}
  
  \caption{Maps of elections created using the positionwise and
    DAP distances, for the 
    truncation-oriented datasets.  Top-$k$ truncated elections are marked with
    triangles, random-cut truncated ones with crosses, and complete
    ones with circles.}
  \label{fig:big-map-experiment}
\end{figure*}

\begin{figure*}[t]
    \centering
    \includegraphics[width=0.97\linewidth]{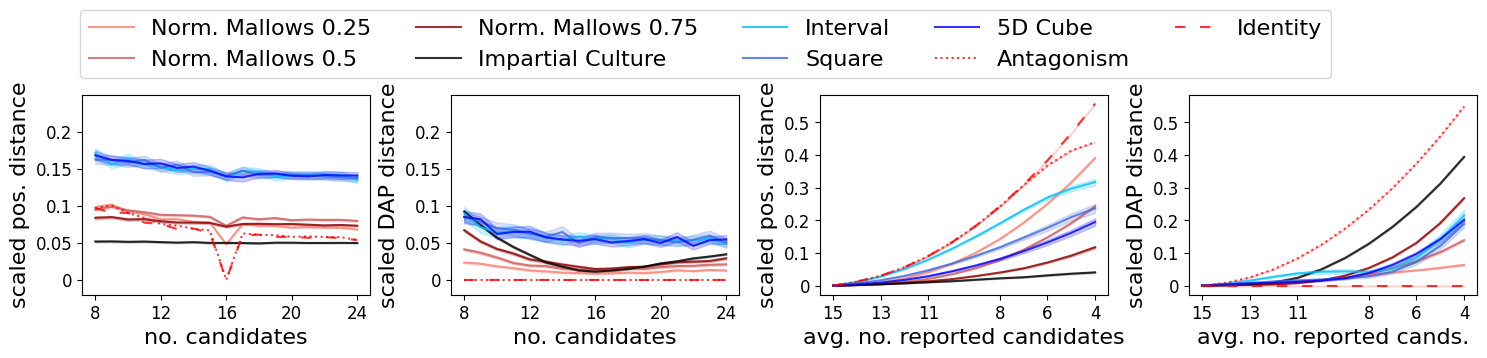}
    \begin{minipage}{0.48\linewidth}
      \centering (a) Distances between complete elections of different sizes.
    \end{minipage}\quad
    \begin{minipage}{0.48\linewidth}
      \centering (b) Distances between elections and their top-$k$ truncations.
    \end{minipage}
    
    \caption{Plots (a) show average distance from size-16 elections to
      different-sized, complete elections from the same culture, as a
      fraction of the maximum distance in our dataset (positionwise distance on the left, DAP distance on the right). Plots (b) show  average distance
      from a complete election to its top-$k$ truncation, as a
      fraction of the maximum distance in our dataset (positionwise distance on the left, DAP distance on the right).}
    \label{fig:sizes-truncs}
  
\end{figure*}

\subsection{Maps of Elections}\label{sec:maps}

We show maps for the truncation-based dataset
obtained using our two distances in
\Cref{fig:big-map-experiment}, with the KK
embedding~\citep{kam-kaw:j:embedding} used for the positionwise distance,
and MDS~\citep{kru:j:mds} used for DAP.
The maps for size-based and comprehensive
datasets can be found in \Cref{app:maps-of-elections}.

The map obtained using the positionwise distance
is similar
to that of the basic dataset (shown in \Cref{fig:map:poswise})
and the map obtained using the DAP distance resembles
the map of the basic dataset obtained using the
swap distance (\Cref{fig:map:swap}),
albeit with some level of degradation.
For the positionwise map, the truncated
elections tend to be closer to $\UN$ than their complete
counterparts
and for DAP we see a cluster of random-cut elections one-third of a
way between $\UN$ and $\ID$ elections.
Unfortunately, if we used random drop truncation (with dropping
probability of $0.5$) then all thus-generated elections would form a
single cluster in the vicinity of $\UN$. One explanation for this is
that random drop affects the internal structure of elections very
strongly. For example, in the $\AN_8$ election only two candidates are
ever ranked first, but with random drop
truncation
every candidate has a nonnegligible
probability of being ranked first.
Thus, analyzing random-drop elections is particularly
difficult.
Still, the maps indicate that, on the high level, both distances
give intuitive, reasonable results.

\subsection{Varying Election Sizes  Truncation Level}\label{sec:quantitative-exp}

Next, we evaluate the robustness of our distances in a more
quantitative way.  First, we analyze their ability to recognize
similar elections with different numbers of candidates.  To this end,
for each of several statistical cultures (IC, normalized Mallows model
with $\normphi \in \{0.25,0.5,0.75\}$, and 1D/2D/5D Euclidean models
with points distributed uniformly on unit hypercubes) and for each
integer~$m$ between $8$ and $24$, we generated 100 pairs of elections,
each with 192 voters, where one election in the pair had $m$
candidates and the other one had $16$, and we computed their average
distance (normalized by the largest distance that occurred in the
datasets from \Cref{sec:datasets}, i.e., an approximate diameter of
the election space; consequently, the results for positionwise and DAP
are on the same scale).  We report the results in \Cref{fig:sizes-truncs}(a),
the shaded areas show 95\% confidence intervals (we also included
$\ID$ and $\AN$ elections in the plot; IC can be seen as an
approximation of $\UN$).
Ideally, we would like our plots to consist of flat lines, close to
zero. This would mean that a given distance can recognize structural
similarities between elections generated in the same way, even if
these elections have different numbers of candidates. Hence, in our
view DAP performs better as its scaled values are significantly
smaller, even if the results for Mallows are less flat (as the Mallows
and IC lines are increasing, the reader may worry what happens for even more candidates; in short, they  increase slowly, with IC reaching about $10\%$ of the diameter
for the case of 100 candidates).
It is reassuring that for DAP the plots for $\ID$ and $\AN$
are flat  at value zero (as DAP is
$\ID$- and $\AN$-consistent), and for positionwise the plot for IC is
flat and close to zero (as positionwise
is $\UN$-consistent and IC elections approximate
$\UN$).

To analyze the influence of truncation,
for each of the cultures from the previous experiment we
generated 25 elections
 with 16 candidates and 192 voters, and
applied each of our
truncation methods to each of them
changing the parameter
to obtain varying levels of vote completeness.
For each
truncated election, we computed its distance from the original,
complete one (and normalized it as previously). 
In \Cref{fig:sizes-truncs}(b) we plot the average of these values, for
the case of top-$k$ truncation;
shaded area shows 95\% confidence interval.  We see
that if the voters rank at least half of the candidates, then the
average distance of the truncated election from its complete variant
is (a)~less than $20\%$ of the diameter for the positionwise distance
(indeed, even less than $10\%$ for most of the cultures), and (b)~less
than $5\%$ of the diamater for DAP (except for $\AN$ and IC elections, where it
is  $\leq 30\%$ and  $\leq 15\%$ of the diameter, respectively). Thus, both distances
handle truncated data well, but DAP has an advantage (however, for the
other truncation types DAP and positionwise perform similarly to each
other).
Overall, the experiments point to  DAP.

\section{Map of Preflib}\label{sec:map-of-preflib}
Last but not least, we present the \emph{Map of Preflib}, obtained using the
DAP distance
(we offer more in-depth analysis in \Cref{app:preflib:all}).
\citet{boe-bre-fal-nie-szu:c:compass} already tried putting Preflib
elections on the map, but 
their approach was limited to 
elections with
$10$ candidates and $100$ complete votes, obtained by involved
preprocessing.
We use raw, unprocessed elections from Preflib, with the number of candidates ranging from $3$ to
$\approx\! 2~600$ and the number of voters from $4$ to $\approx\! 64~000$.
We also use more elections, as Preflib was extended since their work.

We show our map in \Cref{fig:map:preflib}: Each black dot represents a
single election from Preflib, while large pale discs represent
elections generated synthetically
(much more detailed analysis is available in the extended version of the paper).
We find that most of the Preflib elections fall between Euclidean ones
(with uniform distribution of candidate and voter points
inside a unit hypercube of between 2 and 10 dimensions),
Mallows elections (with $\normphi$ values in the range $[0.2,0.7]$),
and urn elections (with contagion parameter in the range $[0.2,0.5]$,
or $[0.2,2]$ if one wants to include elections with large
clusters of identical votes; similar elections happen in Preflib, but
are more rare).
This
motivates the use of these models and 
parameter ranges
to generate realistic-looking 
data using only the few 
best-known statistical cultures.  Some
previous papers, including those of \citet{boe-bre-fal-nie-szu:c:compass} and \citet{fal-kac-sor-szu-was:c:microscope}, argued 
that the urn model does not generate 
realistic elections but our results counter this.
That said, we also see an area with many Preflib elections,
but no elections from our statistical cultures (black dots over white
background). It would be interesting to find statistical cultures that
do cover this space.

Next, let us anlyze the locations of the Preflib elections with
respect to $\ID$, $\UN$, and $\AN$.  Foremost, we see that the area
near $\AN$ is empty and, hence, Preflib elections are not strongly
polarized (the same effect was visible in the map of
\citet{boe-bre-fal-nie-szu:c:compass}). Some elections that (weakly)
stand out in this respect include those in the WebSearch dataset
(but they include only $4$ votes each and, hence, are very particular),
some Netflix elections (where each voter ranks the same $3$ or $4$
movies, so some level of disagreement is expected) and some APA
elections (which involve 5 candidates for the president of the
American Psychological Association, with some tension between
``academics'' and ``clinicians''). Elections from the (Figure)
Skating, Formula~1, and Countries datasets represent the other extreme
and are located between $\ID$ and $\UN$, close to the Mallows
elections.  These locations are quite natural: Indeed, we expect the
judges in figure skating, who evaluate the same performances, conducted on the
same day, to be correlated, and we expect Formula~1 races that happen
over the course of a season to be more varied.

\begin{figure}[t]
  \centering
    \includegraphics[width=0.725\linewidth]{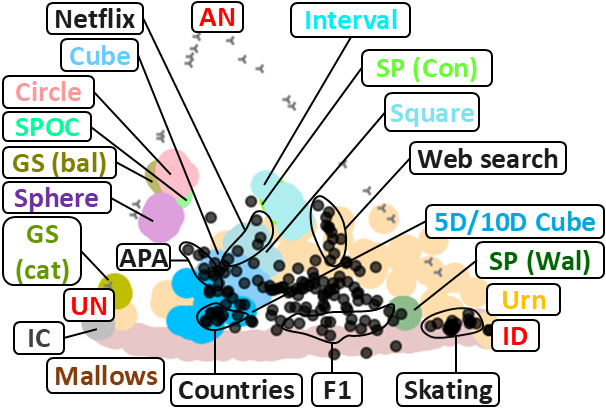}%
  \caption{
    Map of Preflib elections (black dots)
    in addition to the synthetic ones (large pale discs),
    obtained using the DAP distance.}
  \label{fig:map:preflib}
\end{figure}

\section{Summary}

We found that the DAP distance is an interpretable, scalable way of
assessing similarity between different-sized/top-truncated elections,
suitable to form a map of (a fragment of) Preflib. By analyzing this
map, we were able to draw a number of conclusions, including the
parameters for Mallows, urn, and Euclidean models that yield realistic
elections.

Nevertheless, we believe that there is no one-fit-all method for creating maps, and different applications may call for different approaches.
Thus, considering different distances,
including but not limited to
feature distances with different set of features,
and developing arguments for their usefulness
might be a fruitful direction for future research.

\section*{Acknowledgements}
This project has received funding from the European Research Council (ERC) under
the European Union’s Horizon 2020 research and innovation programme (grant
agreement No 101002854), from the French government under management of
Agence Nationale de la Recherche as part of the "Investissements d'avenir"
program, reference ANR-19-P3IA-0001 (PRAIRIE 3IA Institute), and from the European Union under the project Robotics and advanced industrial production (reg. no. CZ.02.01.01/00/22\_008/0004590). 
J.\@ Mertlová acknowledges the additional support of the Faculty of Information Technology of the Czech Technical University in Prague via the Výzkumné léto (VýLeT) project.
S.\@ Szufa was supported by the Foundation for Polish Science (FNP). 
T.\@ Wąs was partially supported by EPSRC under grant EP/X038548/.
\begin{center}
  \includegraphics[width=3cm]{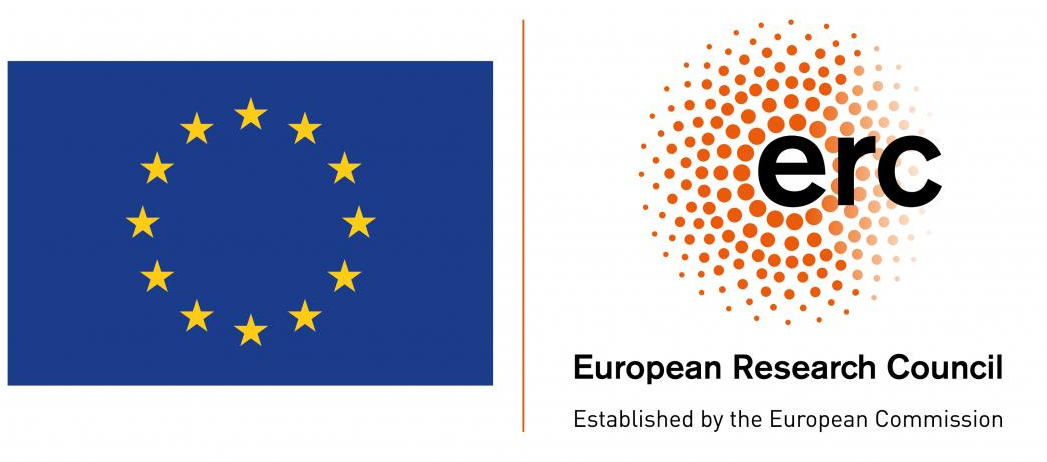}
  \includegraphics[width=3cm]{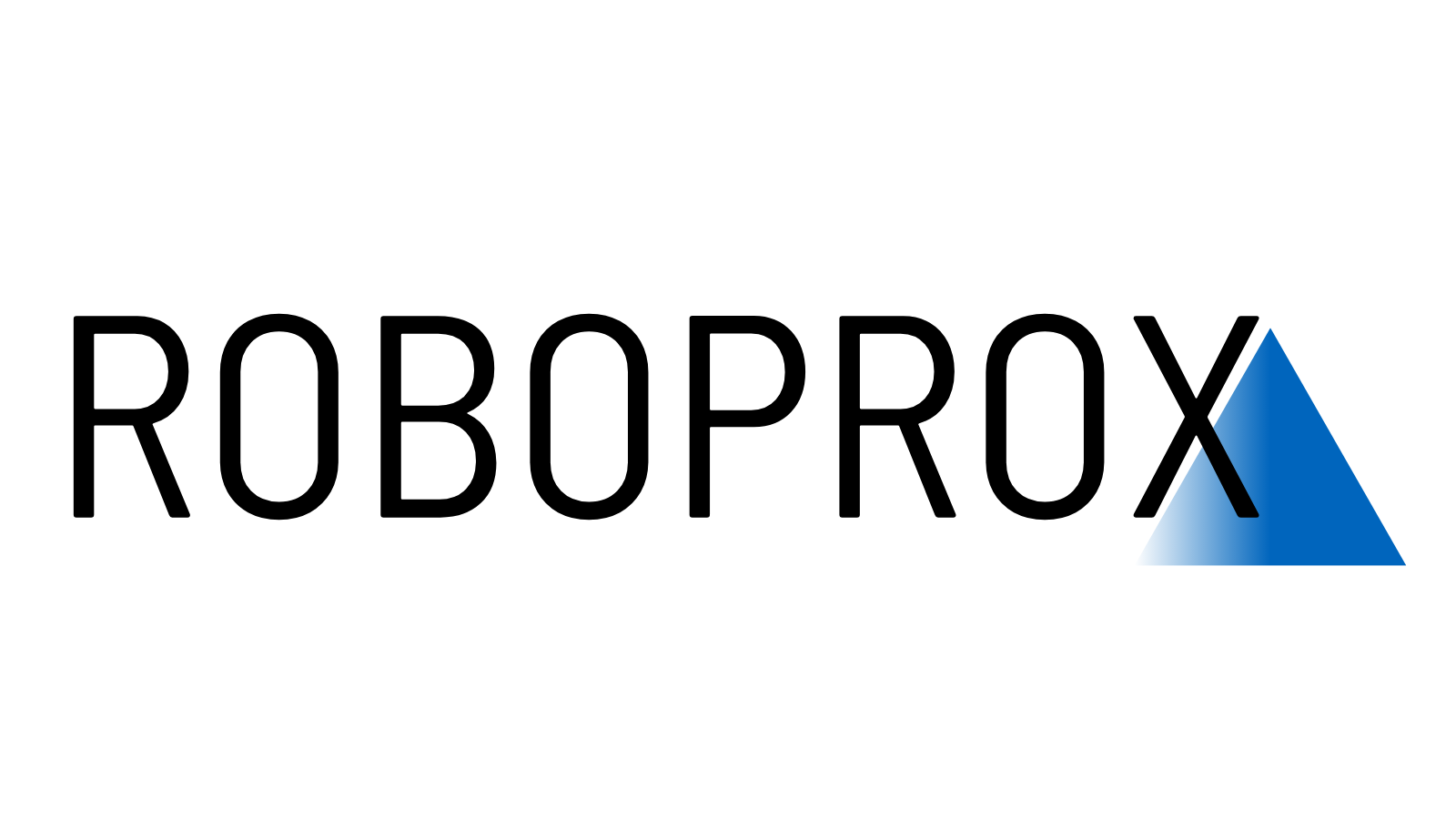}
\end{center}

\bibliography{bib,references_}

\clearpage

\appendix 

\section{Relation Between EMD and Wasserstein Distances}\label{sec:emd-vs-W}
The original positionwise distance, as defined by
\citetapp{szu-fal-sko-sli-tal:c:map} used earth mover's distance (EMD) between
vectors and not Wasserstein's, as we do. In this section we argue that
the difference between the two distances is small and bounded (for the
case of vectors of the same dimension).

First, let us recall definitions of the EMD and $\ell_1$ distances.
Given two vectors,\ $\vec{a} = [a_1, \ldots, a_m]$ and
$\vec{b} = [b_1, \ldots, b_m]$, their $\ell_1$ distance is:
\[
  \ell_1(\vec{a},\vec{b}) = |a_1-b_1| + |a_2-b_2| + \cdots + |a_m-b_m|.
\]
If, additionally, their entries are nonnegative and sum up to $1$,
then their EMD distance, denoted $\emd(\vec{a},\vec{b})$, is the
smallest total cost of transforming $\vec{a}$ into $\vec{b}$ through
operations of the following form: Given $i, j \in [m]$ and a number
$\delta \leq a_i$, we can move value $\delta$ from position $i$ in
$\vec{a}$ to position $j$ at cost $\delta \cdot |i-j|$. In other
words, we view vector $\vec{a}$ as a collection of~$m$ buckets, where
each bucket $i$ contains $a_i$ amount of sand. We can move the sand
between the buckets, but each such operation costs proportionally to
the distance between the buckets and the amount of sand moved. It is
also well-known that we can express $\emd(\vec{a}, \vec{b})$ using the
$\ell_1$ distance and the notation where for each $i \in [m]$,
$\hat{a}_i$ and $\hat{b}_i$ mean $a_1 + \cdots + a_i$ and
$b_1 + \cdots + b_i$, respectively~\citepapp{rub-tom-gui:j:emd}:
\begin{align*}
  \emd(\vec{a},\vec{b}) &= |\hat{a}_1-\hat{b}_1| + \cdots + |\hat{a}_m - \hat{b}_m|.
\end{align*}

\begin{proposition}\label{prop:emd-vs-w}
  For each two vectors, $\vec{a} = [a_1, \ldots, a_m]$ and $\vec{b} = [b_1, \ldots, b_m]$,
  with nonnegative entries that sum up to $1$, it holds that:
  \begin{align*}
    \emd(\vec{a},\vec{b}) - 1 \leq m\cdot W(\vec{a},\vec{b}) \leq \emd(\vec{a},\vec{b})
  \end{align*}  
\end{proposition}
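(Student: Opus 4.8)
The plan is to rewrite both quantities in terms of the cumulative differences $d_i := \hat{a}_i - \hat{b}_i$ and then compare an integral against a trapezoidal sum. First I would evaluate $A$ and $B$ at the grid points: because of the normalization, the step functions $a(\cdot)$ and $b(\cdot)$ are constant on each interval $(\frac{i-1}{m},\frac{i}{m}]$ and integrate to the partial sums, so $A(i/m) = \hat{a}_i$ and $B(i/m) = \hat{b}_i$. Hence $A$, $B$, and their difference $h := A - B$ are piecewise linear, and $h$ is exactly the linear interpolant through the points $(i/m, d_i)$ for $i = 0,1,\ldots,m$, with $d_0 = d_m = 0$. This lets me view $W(\vec{a},\vec{b}) = \int_0^1 |h|$ as the area under $|h|$, while the closed form recalled just before the statement gives $\emd(\vec{a},\vec{b}) = \sum_{i=1}^m |d_i|$.

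For the upper bound $mW \le \emd$, the key observation is that on each subinterval $[\frac{i-1}{m},\frac{i}{m}]$ the integrand $|h|$ is the absolute value of an affine function, hence convex, so its integral is at most the trapezoidal estimate $\frac{1}{m}\cdot\frac{|d_{i-1}|+|d_i|}{2}$. Summing over $i$, each interior $|d_i|$ appears in two adjacent trapezoids and thus gets total weight $1$, while the boundary terms vanish since $d_0 = d_m = 0$; the total estimate is therefore $\frac{1}{m}\sum_{i=1}^m |d_i| = \frac{1}{m}\emd(\vec{a},\vec{b})$, which yields $mW \le \emd$.

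The harder direction, and the main obstacle, is the lower bound $mW \ge \emd - 1$, which amounts to controlling the total amount by which the trapezoid rule overestimates. Here I would note that the estimate is exact on every subinterval where $h$ keeps a constant sign, so the only loss comes from the subintervals where $h$ crosses zero, i.e.\ where $d_{i-1}$ and $d_i$ have opposite signs. A direct integration of $|h|$ on such a subinterval shows that, after scaling by $m$, it contributes exactly $\frac{pq}{p+q}$ to $\emd - mW$, where $p = |d_{i-1}|$ and $q = |d_i|$. I would then bound this using $4pq \le (p+q)^2$ together with the fact that opposite signs give $p + q = |d_i - d_{i-1}|$, so each contribution is at most $\frac14|d_i - d_{i-1}|$. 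Since $d_i - d_{i-1} = a_i - b_i$, summing over all subintervals gives a total overestimate of at most $\frac14\sum_{i=1}^m|a_i - b_i| = \frac14 \ell_1(\vec{a},\vec{b})$, and as $\vec{a}$ and $\vec{b}$ each sum to $1$ we have $\ell_1(\vec{a},\vec{b}) \le \sum_i (a_i + b_i) = 2$. Thus $\emd - mW \le \frac12 \le 1$, which is in fact slightly stronger than required. The only delicate bookkeeping is the per-interval evaluation of $\int |h|$ in the crossing case and correctly reducing $\sum_i |d_i - d_{i-1}|$ to $\ell_1(\vec{a},\vec{b})$; everything else is routine.
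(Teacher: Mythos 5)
Your proposal is correct, and on the lower bound it is genuinely sharper than the paper's own argument. For the upper bound, your route is essentially the paper's in different clothing: the paper partitions the region between $A$ and $B$ into $2m$ triangles, each of area at most $\frac{1}{2m}|\hat{a}_i-\hat{b}_i|$, which is precisely the per-interval trapezoid estimate you derive from convexity of $|h|$; your telescoping bookkeeping (using $d_0=d_m=0$) reproduces their count, so both yield $m\cdot W(\vec{a},\vec{b})\le \emd(\vec{a},\vec{b})$ by the same mechanism. The lower bound is where you diverge. The paper works interval by interval with the reverse triangle inequality $|\alpha+\beta|\ge|\alpha|-|\beta|$, obtaining $m\cdot W(\vec{a},\vec{b})\ge \emd(\vec{a},\vec{b})-\frac{1}{2}\ell_1(\vec{a},\vec{b})$ and then applying $\ell_1(\vec{a},\vec{b})\le 2$. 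You instead note that the trapezoid sum equals $\frac{1}{m}\emd(\vec{a},\vec{b})$ exactly, that the estimate is exact on sign-constant intervals, and you compute the loss on a crossing interval in closed form as $\frac{pq}{p+q}$, which AM--GM bounds by $\frac{1}{4}(p+q)=\frac{1}{4}|a_i-b_i|$; this gives $m\cdot W(\vec{a},\vec{b})\ge \emd(\vec{a},\vec{b})-\frac{1}{4}\ell_1(\vec{a},\vec{b})\ge \emd(\vec{a},\vec{b})-\frac{1}{2}$, a factor-two improvement in the intermediate $\ell_1$ term. This is not cosmetic: the paper itself remarks, after \Cref{cor:emd-vs-w}, that the additive bound in \Cref{prop:emd-vs-w} ``can potentially be improved,'' and exhibits vectors with $\emd(\vec{a},\vec{b})=1$ and $m\cdot W(\vec{a},\vec{b})\rightarrow\frac{1}{2}$; your bound attains exactly that optimal additive constant, so your argument settles the question the paper leaves open, and it also dominates the multiplicative bound $\frac{1}{2}\emd(\vec{a},\vec{b})\le m\cdot W(\vec{a},\vec{b})$ of \Cref{cor:emd-vs-w} whenever $\emd(\vec{a},\vec{b})\ge 1$. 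All the delicate steps you flag check out: $d_i-d_{i-1}=a_i-b_i$, opposite signs give $p+q=|d_i-d_{i-1}|$, and $d_m=0$ because both vectors sum to $1$.
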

\begin{proof}
  Let $\vec{a}$ and $\vec{b}$ be as in the statement of the
  proposition. 
  Following the definition of their Wasserstein's distance from
  \Cref{sec:prelim}, we consider their associated functions
  $a(\cdot)$, $b(\cdot)$, $A(\cdot)$, and
  $B(\cdot)$. In particular, we observe that for each $i \in
  [m]$ and each $x \in [\frac{i-1}{m},\frac{i}{m}]$ we have:
  \begin{align*}
    a(x) = a_im, && b(x) = b_im.
  \end{align*}
  Since $A(x) = \textstyle \int_0^x a(y) dy$ and $B(x) = \textstyle
  \int_0^x b(y) dy$, for each $i \in [m]$ and each $x \in
  [\frac{i-1}{m},\frac{i}{m}]$ we also have:
  \begin{align*}
    A(x) &= \textstyle \hat{a}_{i-1} +\left(x-\nicefrac{i-1}{m} \right)a_im, \text{ and}\\
    B(x) &= \textstyle \hat{b}_{i-1} +\left(x-\nicefrac{i-1}{m} \right)b_im, 
  \end{align*}
  which, after basic calculations, gives:
  \begin{align*}
    A(x) &= \textstyle \hat{a}_{i} -\left(\nicefrac{i}{m}-x \right)a_im, \text{ and}\\
    B(x) &= \textstyle \hat{b}_{i} -\left(\nicefrac{i}{m}-x \right)b_im.
  \end{align*}
  Now we are ready to calculate $W(\vec{a},\vec{b})$. We have:
  \begin{align}
    \nonumber
    W&(\vec{a},\vec{b})  = \int_0^1|A(x)-B(x)|dx \\
     &= \sum_{i=1}^m \left( \int_{\frac{i-1}{m}}^{\frac{i}{m}}|A(x)-B(x)|dx \right)
       \label{eq:emd-w}
  \end{align}
  For each $i \in [m]$, we express the integral under the sum as:
  \begin{align}
    \int_{\frac{i-1}{m}}^{\frac{i}{m}}\big|&\hat{a}_i - \hat{b}_i - \left(\nicefrac{i}{m}-x\right)(b_i-a_i)m\big|dx,
                                             \label{eq:emd-w2}
  \end{align}
  which is greater or equal to the following expression (see
  explanations below):
  \begin{align*}
    \textstyle\int_{\frac{i-1}{m}}^{\frac{i}{m}}&|\hat{a}_i - \hat{b}_i|dx - \textstyle\int_{\frac{i-1}{m}}^{\frac{i}{m}} |(\nicefrac{i}{m}-x)(b_i-a_i)m|dx, \\
    = &\textstyle\frac{1}{m}|\hat{a}_i - \hat{b}_i| - \textstyle\frac{1}{2m}|b_i-a_i|. 
  \end{align*}
  The fact that the first expression in the above equality is smaller
  or equal than \eqref{eq:emd-w2} follows from the fact that for each
  $\alpha, \beta \in \reals$, $|\alpha + \beta| \geq
  |\alpha|-|\beta|$, and the next equality follows by computing the
  integrals (in particular, $\int_{\frac{i-1}{m}}^{\frac{i}{m}}
  |(\nicefrac{i}{m}-x)(b_i-a_i)m|dx$ is the area of a triangle with
  height $\nicefrac{1}{m}$ and base equal to $|b_i-a_i|$).

  Altogether, after substituting these calculations into
  \eqref{eq:emd-w}, we obtain that
  $W(\vec{a},\vec{b})$ is greater or equal to:
  \begin{align*}
    &\textstyle
      \sum_{i=1}^m \frac{1}{m}|\hat{a}_i - \hat{b}_i| - \frac{|a_i-b_i|}{2m} 
      =  \nicefrac{1}{m}\left( \emd(\vec{a},\vec{b}) - \textstyle \frac{1}{2}\ell_1(\vec{a},\vec{b})\right).
  \end{align*}
  By observing that $\ell_1(\vec{a},\vec{b}) \leq
  2$, we have the first inequality from the statement of the
  proposition.

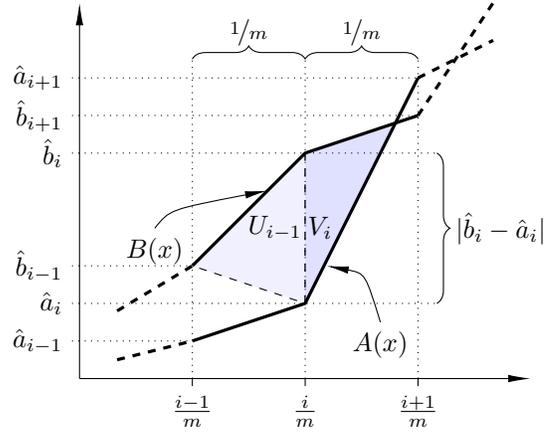
\begin{figure}
  \centering
  \begin{tikzpicture}
    \filldraw[fill=blue!6!white, dashed] (3,1) -- (3,3) -- (1.5,1.5) -- (3,1);
    \filldraw[fill=blue!12!white, dashed] (3,1) -- (3,3) -- (4.2,3.4) -- (3,1);
    \draw[dotted] (1.5,0) -- (1.5,4.25);
    \draw[dotted] (3,0) -- (3,4.25);
    \draw[dotted] (4.5,0) -- (4.5,4.25);
    
    \draw[->] (0,0) -- (6,0);
    \draw[->] (0,0) -- (0,5);
    \draw (1.5,0) -- (1.5,-0.1);
    \draw (3,0) -- (3,-0.1);
    \draw (4.5,0) -- (4.5,-0.1);
    \node[anchor=north] at (1.5,-0.1) {$\frac{i-1}{m}$};
    \node[anchor=north] at (3,-0.1) {$\frac{i}{m}$};
    \node[anchor=north] at (4.5,-0.1) {$\frac{i+1}{m}$};
    \draw[very thick, dashed] (0.5,0.25) -- (1.5,0.5);
    \draw[very thick, dashed] (4.5,4) -- (5.5,4.5);
    \draw[very thick] (1.5,0.5) -- (3,1) -- (4.5,4);

    \draw[very thick, dashed] (0.5,0.9) -- (1.5,1.5);
    \draw[very thick, dashed] (4.5,3.5) -- (5.5,5);
    \draw[very thick] (1.5,1.5) -- (3,3) -- (4.5,3.5);

    \draw[dotted] (-0.1,0.5) -- (1.5,0.5);
    \draw[dotted] (-0.1,1) -- (4.75,1);
    \draw[dotted] (-0.1,4) -- (4.5,4);
    \draw[dotted] (-0.1,1.5) -- (1.5,1.5);
    \draw[dotted] (-0.1,3) -- (4.75,3);
    \draw[dotted] (-0.1,3.5) -- (4.5,3.5);

    \node[anchor=east] at (-0.1,0.5) {$\hat{a}_{i-1}$};
    \node[anchor=east] at (-0.1,1) {$\hat{a}_{i}$};
    \node[anchor=east] at (-0.1,4) {$\hat{a}_{i+1}$};
    \node[anchor=east] at (-0.1,1.5) {$\hat{b}_{i-1}$};
    \node[anchor=east] at (-0.1,3) {$\hat{b}_{i}$};
    \node[anchor=east] at (-0.1,3.5) {$\hat{b}_{i+1}$};

    \node[anchor=center] at (2.6,2) {$U_{i-1}$};
    \node[anchor=center] at (3.2,2) {$V_i$};

    \draw[decorate,decoration={brace,amplitude=5pt,mirror}]
  (4.75,1) -- (4.75,3) node[midway,xshift=0.85cm]{$|\hat{b}_i-\hat{a}_i|$};
    \draw[decorate,decoration={brace,amplitude=5pt}]
  (1.5,4.25) -- (3,4.25) node[midway,yshift=10pt]{$\nicefrac{1}{m}$};
    \draw[decorate,decoration={brace,amplitude=5pt}]
    (3,4.25) -- (4.5,4.25) node[midway,yshift=10pt]{$\nicefrac{1}{m}$};

    \draw[->] (1,2) .. controls (1.2,2.25) .. (2.5,2.5);
    \node[anchor=north] at (1,2) {$B(x)$};

    \draw[->] (4,0.75) .. controls (3.8,1.4) .. (3.28,1.5);
    \node[anchor=north] at (4,0.75) {$A(x)$};

  \end{tikzpicture}
  \caption{Illustration of the proof that $W(\vec{a},\vec{b}) \leq \frac{1}{m}\emd(\vec{a},\vec{b})$.}
  \label{fig:emd-wasserstein}
  \end{figure}

  Next, we show that
  $W(\vec{a},\vec{b}) \leq
  \frac{1}{m}\emd(\vec{a},\vec{b})$. Geometric interpretation of
  $W(\vec{a},\vec{b})$ is that it is the area between functions $A(x)$
  and $B(x)$, for $x \in [0,1]$. From the preceding arguments, we know
  that for each $i \in [m]$, $A(\nicefrac{i}{m}) = \hat{a}_i$ and
  $B(\nicefrac{i}{m}) = \hat{b}_i$. Further, both functions are linear
  within each interval $[\nicefrac{i-1}{m},\nicefrac{i}{m}]$. We
  partition the area between functions $A(x)$ and $B(x)$ into $2m$
  triangles as follows (see \Cref{fig:emd-wasserstein} for
  illustration): For each $i \in [m]$, we form triangle $U_{i-1}$
  connecting points $(\frac{i}{m},\hat{a}_i)$,
  $(\frac{i}{m},\hat{b}_i)$ and either
  $(\frac{i-1}{m},\hat{b}_{i-1})$, if $A(x)$ and $B(x)$ do not
  intersect in the interval $[\frac{i-1}{m},\frac{i}{m}]$, or their
  intersection point in that interval, if they do. Similarly, we form
  triangle $V_i$ connecting points $(\frac{i}{m},\hat{a}_i)$,
  $(\frac{i}{m},\hat{b}_i)$ and either
  $(\frac{i+1}{m},\hat{a}_{i-1})$, if $A(x)$ and $B(x)$ do not
  intersect in the interval $[\frac{i}{m},\frac{i+1}{m}]$, or their
  intersection point in that interval, if they do (we take $V_m$ to be
  an empty triangle). Note that together the area of these triangles
  is equal to $W(\vec{a},\vec{b})$. Further, one can easily verify
  that for each $i \in [m]$, the area of each of the triangles
  $U_{i-1}$ and $V_{i}$ is bounded by
  $\frac{1}{2}\cdot\frac{1}{m}|\hat{a}_i-\hat{b}_i|$. Consequently, we
  have
  $W(\vec{a},\vec{b}) \leq \sum_{i=1}^m
  \frac{1}{m}|\hat{a}_i-\hat{b}_i| =
  \frac{1}{m}\emd(\vec{a},\vec{b})$.
\end{proof}

By a careful analysis of the second part of the above proof, we also
obtain the following corollary.
\begin{corollary}\label{cor:emd-vs-w}
  For each two vectors, $\vec{a} = [a_1, \ldots, a_m]$ and
  $\vec{b} = [b_1, \ldots, b_m]$, with nonnegative entries that sum up
  to $1$, it holds that:
  \begin{align*}
    \textstyle \frac{1}{2}\emd(\vec{a},\vec{b})  \leq m\cdot W(\vec{a},\vec{b}) \leq \emd(\vec{a},\vec{b})
  \end{align*}  
\end{corollary}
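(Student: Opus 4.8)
The upper bound $m\cdot W(\vec a,\vec b)\le \emd(\vec a,\vec b)$ is exactly the one already established in \Cref{prop:emd-vs-w}, so nothing new is needed there; the entire content of the corollary is to sharpen the lower bound from $\emd(\vec a,\vec b)-1$ to $\tfrac12\emd(\vec a,\vec b)$. The plan is to revisit the triangle decomposition from the second half of the proof of \Cref{prop:emd-vs-w} and, instead of only bounding each triangle from above, to evaluate the area between $A(x)$ and $B(x)$ on each interval $[\tfrac{i-1}{m},\tfrac{i}{m}]$ and bound it from below.

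First I would fix, for $0\le i\le m$, the vertical gap $g_i=|\hat a_i-\hat b_i|$, noting that $g_0=g_m=0$ since both vectors sum to $1$. On the interval $[\tfrac{i-1}{m},\tfrac{i}{m}]$ the functions $A$ and $B$ are linear, so the region between them is exactly the union of the two triangles from the proof: one with vertical side $g_{i-1}$ at the left endpoint and one with vertical side $g_i$ at the right endpoint. I would split into two cases. If the curves do not cross on the interval, the region is a trapezoid of width $\tfrac1m$ with parallel sides $g_{i-1}$ and $g_i$, so its area is $\tfrac{1}{2m}(g_{i-1}+g_i)$. If they do cross, the crossing point divides the interval in proportion $g_{i-1}:g_i$, and a short computation gives area $\tfrac{1}{2m}\cdot\tfrac{g_{i-1}^2+g_i^2}{g_{i-1}+g_i}$.

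The key step is then the elementary inequality $\tfrac{g_{i-1}^2+g_i^2}{g_{i-1}+g_i}\ge \tfrac12(g_{i-1}+g_i)$, which is equivalent to $(g_{i-1}-g_i)^2\ge 0$. Combined with the trapezoid formula, this shows that in both cases the area on the interval is at least $\tfrac{1}{4m}(g_{i-1}+g_i)$. Summing over $i\in[m]$ and using $\sum_{i=1}^m(g_{i-1}+g_i)=2\sum_{i=1}^{m-1}g_i=2\,\emd(\vec a,\vec b)$ (the boundary gaps $g_0,g_m$ vanish and every interior gap is counted twice) yields $W(\vec a,\vec b)\ge \tfrac{1}{2m}\emd(\vec a,\vec b)$, i.e.\ $m\cdot W\ge\tfrac12\emd$, as required.

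I expect the only genuinely delicate point to be the crossing case: one must verify that the crossing point divides the interval precisely in the ratio $g_{i-1}:g_i$, assemble the two sub-triangle areas correctly into $\tfrac{1}{2m}\cdot\tfrac{g_{i-1}^2+g_i^2}{g_{i-1}+g_i}$, and then check that this is exactly the quantity controlled by the AM--QM inequality. Everything else---the trapezoid area and the telescoping of the boundary gaps into $2\,\emd$---is routine.
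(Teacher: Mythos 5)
Your proof is correct and is essentially the paper's own argument: the paper also works interval by interval, comparing the area between $A(x)$ and $B(x)$ on $[\frac{i-1}{m},\frac{i}{m}]$ with the trapezoid whose parallel sides are $|\hat{a}_{i-1}-\hat{b}_{i-1}|$ and $|\hat{a}_i-\hat{b}_i|$, and in the crossing case invokes the geometric fact that the two triangles cut out by the trapezoid's bases and diagonals cover at least half of its area---which is precisely your $(g_{i-1}-g_i)^2\ge 0$ inequality applied to the area $\frac{1}{2m}\cdot\frac{g_{i-1}^2+g_i^2}{g_{i-1}+g_i}$. Your write-up merely makes explicit what the paper leaves as a sketch (the $g_{i-1}\!:\!g_i$ crossing ratio and the telescoping $\sum_{i=1}^m(g_{i-1}+g_i)=2\,\emd(\vec{a},\vec{b})$), and in doing so it is cleaner than the paper's informal description of $A'$ and $B'$, which as literally stated would give $\int_0^1|A'(x)-B'(x)|dx=W(\vec{a},\vec{b})$ rather than $\frac{1}{m}\emd(\vec{a},\vec{b})$ and only makes sense with the intended piecewise-linear interpolation of the endpoint gaps.
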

In essence, it follows by observing that if we replaced function
$A(x)$ with $A'(x) = \min(A(x),B(x))$ and function $B(x)$ with
$B'(x) = \max(A(x),B(x))$ then we would have:
\[
  \emd(\vec{a},\vec{b})  = \int_0^1|A'(x)-B'(x)|dx \\
\]
This integral is at most twice as large as the
$\int_0^1|A(x)-B(x)|dx$.  Indeed, if for a given $i \in [m]$ and all
$x \in [\nicefrac{i-1}{m},\nicefrac{i}{m}]$ we have that
$A(x) \leq B(x)$, then the two integrals are equal for this range of
$x$. If, on the other hand, the graphs of functions $A(x)$ and $B(x)$
cross, then
$\int_{\nicefrac{i-1}{m}}^{\nicefrac{i}{m}}|A'(x)-B'(x)|dx \leq
2\int_{\nicefrac{i-1}{m}}^{\nicefrac{i}{m}}|A(x)-B(x)|dx$. This
follows because the area of a trapezoid is at most twice as large as
the joint area of the two triangles formed by its bases and diagonals.

Jointly, \Cref{prop:emd-vs-w} and \Cref{cor:emd-vs-w} say that the
Wasserstain distance is similar to EMD both when the input vectors are
close to each other (in which case \Cref{cor:emd-vs-w} gives a more
accurate bound) and when they are far away (in which case
\Cref{prop:emd-vs-w} kicks in).

Finally, we observe that the bound in \Cref{cor:emd-vs-w} is tight,
but the one in \Cref{prop:emd-vs-w} can potentially be improved. To
this end, let us fix a positive integer $m$ and consider the following
two vectors of dimension $2m+1$:
\begin{align*}
  \vec{a} &= \textstyle [\frac{1}{2m},0,\frac{1}{m},0,\frac{1}{m}, \ldots,0,\frac{1}{m},0,\frac{1}{2m}] \\
  \vec{b} &= \textstyle [0,\frac{1}{m},0,\frac{1}{m}, \ldots,0,\frac{1}{m},0]
\end{align*}
We can see that $\emd(\vec{a},\vec{b}) = 1$ (in essence, to obtain
$\vec{b}$ from $\vec{a}$, both $\frac{1}{2m}$ entries need to move to
their neighbors in full, whereas each $\frac{1}{m}$ entry needs to
move half of its value to the left and half of its value to the
right). Careful computation also shows that
$(2m+1) \cdot W(\vec{a},\vec{b})$ is equal to:
\begin{align*}
  (2m+1) & \cdot \left( \frac{1}{2m(2m+1)} + (2m-1)\frac{1}{4m(2m+1)} \right) \\
         & = \frac{1}{2m} + (2m-1)\frac{1}{4m} = \frac{2m+1}{4m} = \frac{1}{2}+\frac{1}{4m}.
\end{align*}
As $m$ goes to infinity, this values approaches $\frac{1}{2}$, i.e.,
half of the value of the EMD distance.

\section{Statistical Cultures and Our Dataset}\label{app:cultures}

We first describe the statistical cultures that we use and then we
give the details of our synthetic datasets.

\subsection{Statistical Cultures}

All the statistical cultures that we consider generate complete
elections. Hence, whenever below we speak of a vote, we mean a
complete one.

\paragraph{Impartial Culture.}
Under impartial culture (IC) we generate elections vote-by-vote,
drawing the preferences of each voter uniformly at random from the set
of all possible ones.

\paragraph{(Normalized) Mallows Model.}
Mallows model is similar to IC in that we also generate votes
one-by-one, but instead of drawing them uniformly at random, we use a
distribution clustered around a given central vote. Formally, if $u$ is
the \emph{central vote} and $\phi \in [0,1]$ is the so-called
\emph{parameter of dispersion}, then the probability of generating
vote $v$ is proportional to
$\phi^{\swap(u,v)}$~\citepapp{mal:j:mallows}. However, instead of
using the parameter $\phi$ directly, we use its normalized variant
called $\normphi \in [0,1]$ and introduced by
\citetapp{boe-bre-fal-nie-szu:c:compass}. Formally, given a value of
$\normphi$ and the number $m$ of candidates that we consider, we use a
value of $\phi$ such that the expected swap distance between the votes
that the Mallows model generates and the central one is
equal to:
\[
  \normphi \cdot \frac{m(m-1)}{4}.
\]
Intuitively, $\frac{m(m-1)}{4}$ is half of the maximum swap distance
between two preference orders over $m$ candidates, so setting
$\normphi = 1$ leads to generating maximally diverse votes (indeed, in
this case the model is equivalent to IC), whereas setting $\normphi=0$
requires all generated votes to be identical to the central one.
\citetapp{boe-bre-fal-nie-szu:c:compass} and
\citetapp{boe-fal-kra:c:mallows-normalization} argue how choosing the
$\normphi$ values in between $0$ and $1$ leads to smooth transition
between these two extremes, independent of the number of candidates
(without the normalization the transition becomes more and more abrupt
as the number of candidates grows).

\paragraph{Polya-Eggenberger Urn Model.} The urn model uses the
\emph{parameter of contagion} $\alpha \geq 0$. The model was
introduced by \citetapp{ber:j:urn-paradox}, but the parameterization
that we use is due to \citetapp{mcc-sli:j:similarity-rules}.  To
generate an election with $n$ voters who express preferences over $m$
candidates, we proceed as follows. First, we form an urn that contains
a single copy of every possible preference order (i.e., altogether
$m!$ votes).  Then, to generate a vote, we draw a single preference
order from the urn, include its copy in the election, and return it to
the urn, together with $\alpha m!$ of its copies. We repeat this process $n$
times.

For $\alpha = 0$ the urn model is equivalent to IC, whereas for
$\alpha = 1$ the probability that the second vote is identical to the
first one is $\frac{1}{2}$.  Generally, the larger is the parameter
$\alpha$, the fewer clusters of identical votes, but each with more
votes, are included in the generated elections
(\citetapp{fal-kac-sor-szu-was:c:microscope} give an upper bound on
the expected number of clusters depending on $\alpha$).

\paragraph{Euclidean Models.}
In the Euclidean models we assume that each candidate and each voter
is represented as a point in some Euclidean space, and the voters
prefer those candidates whose points are closer to theirs. Formally,
if $v$ is a voter, $a,b$ are candidates, and the point of $v$ is
closer to that of $a$ than to that of $b$, then $v$ prefers $a$ to $b$
(we disregard the possibility of ties as, due to our way of generating
the candidate and voter points, they occurr with negligible
probability\footnote{Mathematically, the probability of a tie is zero,
  but due to finite nature of computation on real-life computers, it
  is slightly larger.}).

We consider two main ways of generating the points of the candidates
and voters: Either we draw them uniformly at random from some
$d$-dimensional hypercube, or from some $d$-dimensional hypersphere.
Regarding the hypercube models, we consider the Interval model (points
generated from a unit interval), the Square model (points generated
from a square), the Cube model (points generated from a 3D cube), and
5D- and 10D Cube models (points generated from 5D and 10D cubes,
respectively). Regarding the hypersphere models, we consider the Circle
model (with the points generated from a circle in the 2D space) and
the Sphere model (with the points generated from a 3D sphere).

\paragraph{Single-Peaked Elections.}
Let $E = (C,V)$ be an election, where $C = \{c_1, \ldots, c_m\}$ and
$V = (v_1, \ldots, v_n)$. Furher, let $\triangleleft$ be strict linear
order over $C$, referred to as the societal axis. We say that $E$ is
\emph{single-peaked with respect to $\triangleleft$} if for every vote
$v_i$ and every $t \in [m]$ the top $t$ candidates in $v_i$'s
preference order form an interval with respect to
$\triangleleft$; $E$ is \emph{single-peaked} if there is a
societal axis with respect to which it is single-peaked. The notion of
being \emph{single-peaked on a circle (SPOC)} is defined analogously,
except that for each voter $v_i$ and each $t \in [m]$ the top $t$
candidates according to $v_i$ either form an inteval with respect to
the societal axis or a complement of an inteval.  Single-peakedness
was introduced by \citetapp{bla:b:polsci:committees-elections} whereas
single-peakedness on a circle was studied by \citetapp{pet-lac:j:spoc}.

Intuitively, single-peaked elections capture settings where there is
some objective, single-dimensional criterion (such as the political
left-right spectrum, or a spectrum of possible temperatures in the
room) with respect to which the candidates can be ordered, each voter
has a favorite option, and the further a candidate is from this
favorite one, the lower it is ranked by the respective voter. SPOC
elections, on the other hand, capture such settings as voting on a
time for a conference call when different participants are in
different time zones.

We consider two models of generating single-peaked elections (in each
model we draw the societal axis $\triangleleft$ uniformly at random
and then generate the votes one-by-one, independently):
\begin{enumerate}
\item In the Conitzer model~\citepapp{con:j:eliciting-singlepeaked},
  denoted SP (Con), the process of generating a vote is as
  follows. First, we choose the top candidate uniformly at
  random. Then we perform $m-1$ steps where in each step we extend the
  vote with the next-best candidate, either located directly to the
  left (with respect to the societal axis) of those already ranked, or
  directly to the right (the choice is made with probability
  $\nicefrac{1}{2}$, unless there are no more candidates on one side,
  in which case the candidate from the other side is selected
  deterministically).
\item In the Walsh model~\citepapp{wal:t:generate-sp}, denoted SP
  (Wal), each single-peaked vote is generated with equal probability,
  uniformly at random.
\end{enumerate}

For the case of SPOC elections, we generate the votes uniformly at
random (indeed, in this case the naturally adapted variant of the
Conitzer model coincides with that of Walsh).

\paragraph{Group-Separable Elections.}
Group-separable elections were introduced by
Inada~\citepapp{ina:j:group-separable,ina:j:simple-majority}, but in
our discussion we will use the tree-based definition presented by
\citetapp{kar:j:group-separable}. Let $C = \{c_1, \ldots, c_m\}$ be a
set of candidates and let $\calT$ be rooted, ordered tree with $m$
leaves, where each leaf is labeled with a distinct candidate. For
simplicity, we say that each internal node of $\calT$ orders its
children from left to right. A \emph{frontier} of $\calT$ is the
preference order over $C$ that ranks the candidates in the order in
which we would visit them if we traversed $\calT$ using DFS, starting
from the root and visiting each node's children in the order from left
to right (inuitively, if we drew such a tree on paper, then we would
obtain the frontier by reading the candidates in the leaves from left
to right).  A vote $v$ is compatible with tree $\calT$ if it can be
obtained as its frontier by reversing the order in which (some of) the
internal nodes list their children. An election $E = (C,V)$ is
\emph{group-separable} is there exists a rooted, ordered tree $\calT$,
where each leaf is labeled with a unique member of $C$, such that each
vote in $V$ is compatible with $\calT$.

We are interested in two subclasses of group-separable elections:
\begin{enumerate}
\item We say that an election is \emph{balanced group-separable},
  denoted GS (bal), if it is group-separable with respect to some
  complete binary tree (i.e., a binary tree where all levels, except
  possibly for the last one, are filled completely).
\item We say that an election is \emph{caterpillar group-separable},
  denoted GS (cat), if it is group-separable with respect to some
  caterpillar binary tree (i.e., a binary tree where each node either
  is a leaf or one of its children is a leaf).
\end{enumerate}

To generate a balanced (caterpillar) group-separable election, we
start with an arbitrary balanced (caterpillar) tree that has its
leaves labeled with the candidates that we are interested in and we
generate votes one-by-one as follows: For each node of the tree, we
reverse the order of its children with probability $\nicefrac{1}{2}$
and then we add the frontier of the tree to the election.

\begin{table}
  \centering
  \caption{\label{tab:app:basic}Numbers of elections generated
    according to our statistical cultures in the basic datatset. For
    the normalized Mallows elections, we choose $\normphi$ uniformly
    at random from interval $[0,1]$. For the urn model, we choose the
    $\alpha$ parameter according to the Gamma distribution with the
    shape parameter set to $0.8$ and the scale parameter set to $1$
    (both parameterizations are in sync with previous datasets used in
    the maps of elections).}
    \bigskip
  \begin{tabular}{l|c}
    \toprule
    Statistical Culture & Number of Elections \\
    \midrule
    Impartial Culture   &  $16$ \\
    Normalized Mallows  &  $48$ \\
    Urn                 &  $48$ \\
    \midrule
    Interval  &  $16$ \\
    Square    &  $16$ \\
    Cube      &   $16$ \\
    5D Cube   &  $8$ \\
    10D Cube  &  $8$ \\
    \midrule
    Circle    &  $16$ \\
    Sphere    &  $16$ \\
    \midrule
    SP (Con)  &  $16$ \\
    SP (Wal)  &  $16$ \\
    SPOC      &  $16$ \\
    \midrule
    GS (Bal)  &   $16$ \\
    GS (Cat)  &   $16$ \\
    \bottomrule
  \end{tabular}
\end{table}

\subsection{Synthetic Datasets}
We give the numbers of elections generated according to particular
statistical cultures and included in the basic dataset in
\Cref{tab:app:basic} (this dataset also includes special elections
$\ID$ and $\AN$, an approximate variant of $\UN$, several artificial
elections connecting these three, and the fourth special election,
$\ST$). The size-oriented and truncation-oriented datasets are readily
obtained from the basic one, because for each statistical culture the
number of elections generated according to it is divisible by $4$
(hence, the process of generating these datasets, described in
\Cref{sec:datasets}, is well-defined).  Our basic dataset is nearly
identical to the one used by
\citet{fal-kac-sor-szu-was:c:microscope}. The difference is that we
additionally include group-separable and SPOC elections, whereas they
also included several real-life elections from
Preflib~\citep{mat-wal:c:preflib}.

Regarding the comprehensive dataset, for most statistical cultures we
can also proceed as described in \Cref{sec:datasets} because the
number of elections generated according to them is divisible by
16. The only exception regards 5D and 10D Cube elections, of which we
have $8$. For each of these cultures, we have $2$ elections with
either $8$ or $16$ candidates and either $96$ or $192$ voters. Among
each pair of elections with the same size, one is left complete and
one is truncated either according to the top-$k$ method of according
to the random cut method (so, in particular, two of the 5D Cube
elections are top-$k$ truncated, two are random-cut truncated, and
four are left intact).

\section{Swap Failure}\label{app:swap-failure}
In this section, we provide examples of swap distance failures (see
also \Cref{app:proofs} for a proof of \Cref{prop:swap-triangle}).  We
consider a smaller variant of the size-oriented dataset, obtained just
like the original one, except that:
\begin{enumerate}
\item For each statistical cuture, we generate only half the number
  of elections as in the size-based dataset, each of them with 96
  voters.
\item Instead of considering elections with $8$ and $16$ candidates,
  we consider elections with $4$ and $8$ of them.
\end{enumerate}
We refer to it as the size-based/mini dataset.

In~\Cref{fig:map:swap-truncation} we present a map of the
size-based/mini dataset obtained using the truncation-based isomorphic
swap distance. Secondly, in~\Cref{fig:map:swap-del} we present a map
obtained using $\hat{d}^\del_\swap$, and
in~\Cref{fig:map:swap-del-triangle} we present the same map but
colored by the number of the triangle inequality violations (i.e, for
each point we calculated how many triangles including that point
violate the triangle inequality).

\section{Satisfaction of Consistency Axioms}\label{app:axioms}

In \Cref{tab:axiom:satisfaction},
we present the satisfaction of axioms defined in \Cref{sec:distances},
by two swap extensions, $\hat{d}^\del_\swap$ and $\hat{d}^\trun_\swap$ and positionwise distance, $\hat{d}_\pos$, defined in \Cref{sec:swap-positionwise} as well as DAP distance, $\hat{d}_\dap$, defined in \Cref{sec:feature}.
Note the impossibility result of \Cref{prop:swap-pos:impossibility}.

\begin{table}[t]
\def\arraystretch{1.5}
    \small
    \centering
    \caption{Axioms satisfied by the considered distance metrics.}
    \label{tab:axiom:satisfaction}
    \bigskip
    \begin{tabular}{c|cccc}
        \toprule
        metric & $\hat{d}^\del_\swap$ & $\hat{d}^\trun_\swap$ &
        $\hat{d}_\pos$ & $\hat{d}_\dap$ \\
        \midrule
        triangle inequality & \nmark & \ymark & \ymark & \ymark \\
        \midrule
        swap extension & \ymark & \ymark & \nmark & \nmark \\
        pos. extension & \nmark & \nmark & \ymark & \nmark \\
        \midrule
        $\ID$-consistency & \ymark & \nmark & \nmark & \ymark \\
        $\AN$-consistency & \ymark & \nmark & \nmark & \ymark \\
        $\UN$-consistency & \nmark & \nmark & \ymark & \nmark \\
        \bottomrule
    \end{tabular}
\end{table}

\section{Varying Election Sizes for DAP}\label{app:size-analysis}

\begin{figure}[b]
   \centering
   \includegraphics[width=0.8\linewidth]{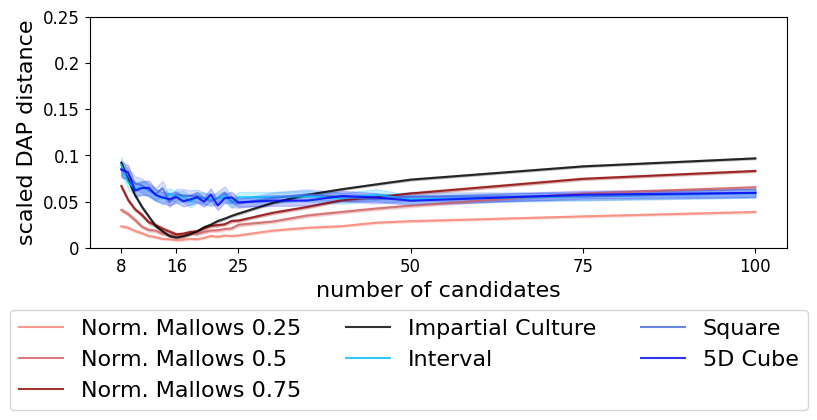}%
   \caption{Average DAP distance from size-16 elections to
     different-sized, complete elections from the same culture as a
     fraction of a max. distance in our dataset.}
   \label{fig:rob:dap-100}
\end{figure}

For the case of DAP, we have extended the first experiment from
\Cref{sec:quantitative-exp} to regard elections with up to $100$
candidates (specifically, we have also included numbers of candidates
in the set $\{25,30,35,40,45,50,75, 100\}$). We show the resulting
plot in \Cref{fig:rob:dap-100}. We see that as the number of
candidates increases, the plots flatten out; e.g., for Euclidean
elections at a bit over $6\%$ and for IC at a bit over $10\%$.

\section{Truncation Types Analysis}
\label{app:truncation:analysis}

In this section, we give more in-depth analysis of the effect of
different truncation types on positionwise and DAP distances.

\begin{figure*}[t]
    \centering
    \includegraphics[width=0.95\linewidth]{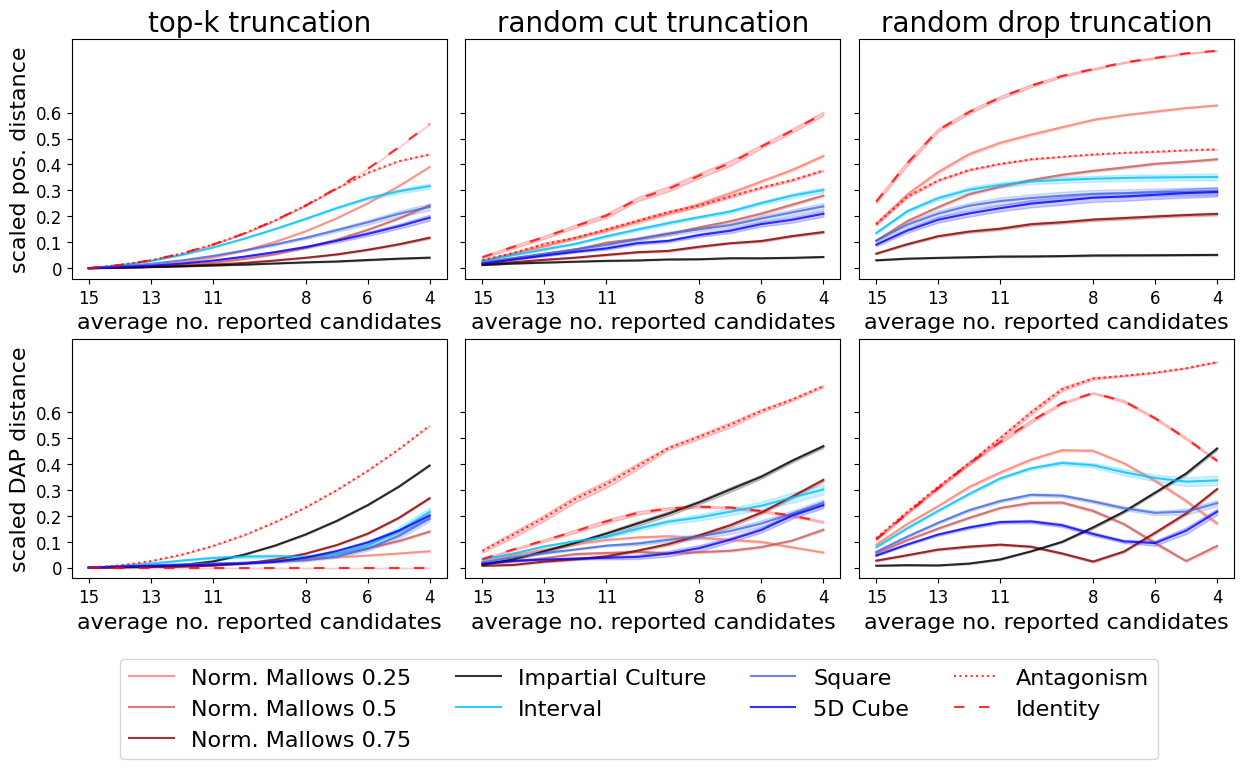}
    \caption{Average positionwise (top row) or DAP (bottom row) distances from a complete election to its top-$k$ (left column), random cut (center column), or random drop (right column) truncations as a fraction of a maximal distance in our dataset. Each datapoint is an average of 25 samples, the shaded area corresponds to the 95\% confidence interval.}
    \label{fig:trunc_dist:all}
\end{figure*}

\Cref{fig:trunc_dist:all} presents average positionwise and DAP
distances between a complete election and its different top-truncated
variants.  The plots were obtained in the second experiment from
\Cref{sec:quantitative-exp}.

For positionwise distance (top row of \Cref{fig:trunc_dist:all}), the
most affected by truncations are the cultures with the strongest
structure (i.e., Mallows with a small parameter $\normphi$ and
interval elections) and the effect decreases when the elections become
more chaotic.  In particular, impartial culture is almost not affected
at all by any of our truncation methods.  The reason lays in the way
positionwise distance handles the truncated votes: in frequency
matrices, the values for unreported candidates are spread across
several last positions.  Hence, a frequency matrix of a heavily
truncated election, becomes similar to the frequency matrix of a
uniformity election.

Top-$k$ truncations (top left picture) and random cut truncations (top
center picture) seem to have similar effect on positionwise distance.
In both, the average distance is gradually increasing for all of the
cultures (except impartial culture) as we truncate the election more
and more.

On the other hand, for random drop truncations (top right picture),
the distances quickly become large for the most structured cultures.
In particular, when the votes have at most 8 reported candidates on
average, for Mallows elections with $\normphi=0.25$, the average
distance is larger than the half of the diameter.  This suggests that
random drop truncation brings much more chaos into elections than the
other two truncation methods.  Indeed, under top-$k$ and random cut
truncations the candidates in a few top positions of a vote will
rarely end up in the truncated part.  However, under random drop
truncation there is a significant probability that we move a top
candidate to a truncated part and another candidate becomes a top one.

For the DAP distance (bottom row of \Cref{fig:trunc_dist:all}), the
question of which cultures are the most affected by truncation is no
longer simple and consistent across the truncation types.  For top-$k$
truncation (bottom left picture), for all cultures, the average
distance is gradually increasing as we increase the range of
truncation.  Interestingly, the cultures most affected by top-$k$
truncation under DAP distance are exactly those that were least
affected under positionwise distance (i.e., impartial culture and
Mallows with $\normphi=0.75$).  This is due to different treatment of
truncations by DAP and positionwise distance (see
\Cref{sec:pos-dap:diff}).  In heavily truncated elections voters agree
that many pairs of candidates are equal, hence such elections are
considered to be similar to identity.  Thus, the most chaotic cultures
are far away from such elections under DAP distance.

For random drop truncation (bottom right picture), we see a very
interesting pattern for several cultures: the distance quickly grows
as the average number of reported candidates increases, but then it
drops (and possibly increases again at the end).  As discussed before,
random drop truncation brings a lot of chaos into elections as
candidates can move downward (when they land in the truncated part) or
upward (when we drop candidates before them) in the votes.  This
explains a quick increase in the distance for highly structured
cultures, such as Mallows with $\normphi=0.25$ or interval elections,
where random drop truncation can destroy the structure.  However, as
votes become heavily truncated they start to become similar to each
other in the sense that a lot of candidates are jointly seen as
equally bad.  Thus, such elections become similar to identity, which
explains the movements of cultures in the right hand side of the plot.

Finally, for random cut truncation (bottom center picture) the plots
look rather similar to that for top-$k$ truncation.  The main
differences are a bit larger average distances altogether and that the
plot for Mallows with $\normphi=0.25$ resembles a bit its plot for
random drop truncation but with a smaller amplitude.  The similarity
can be explained by the fact that random cut truncation is in some
sense similar to top-$k$ truncation (the top part of each vote is
exactly the same as in the original, complete election).  The
differences come from the fact that random cut truncation brings a bit
more chaos than top-$k$ truncation since the votes can be cut in
different points.

\section{Extended Maps of Elections}\label{app:maps-of-elections}
Here we extend \Cref{sec:maps} and show maps of elections for the size-oriented, comprehensive and random drop datasets under both positionwise and DAP distances (see \Cref{fig:big-map-experiments-in-appendix}). The random drop dataset was obtained by taking the basic one
and applying random drop truncation to half of the elections from each
statistical culture (so that, in expectation, each voter ranks half of
the candidates). 

For the size-oriented and comprehensive datasets, both DAP and positionwise maps look similar
to the maps in \Cref{fig:map:trun1-dap-ape} and \Cref{fig:map:trun1-poswise}, respectively. Their shapes are preserved and truncated elections again cluster around UN and impartial culture under positionwise and around high dimensional Euclideans under DAP. Maps drawn from the random drop dataset show a similar overarching shape, but there the effect of truncation is much more prominent. Under both
distances, the truncated elections form a dense cluster. This means that
all such elections are much more similar in structure to each other
than to complete elections from their respective cultures.  This agrees
with our observations from \Cref{fig:trunc_dist:all}, where for the
average of half of the candidates reported (which is exactly the case
for the random drop truncated elections in our dataset) for many
cultures a complete election is on average far away from its random
drop truncated version (e.g., for Mallows with $\normphi=0.25$, it is
around half of the diameter for both distances).

\begin{figure*}[t]
  \centering
  \begin{subfigure}[t]{0.32\textwidth}
    \centering
    \includegraphics[width=\textwidth, trim={0.2cm 0.2cm 0.2cm 0.2cm}, clip]{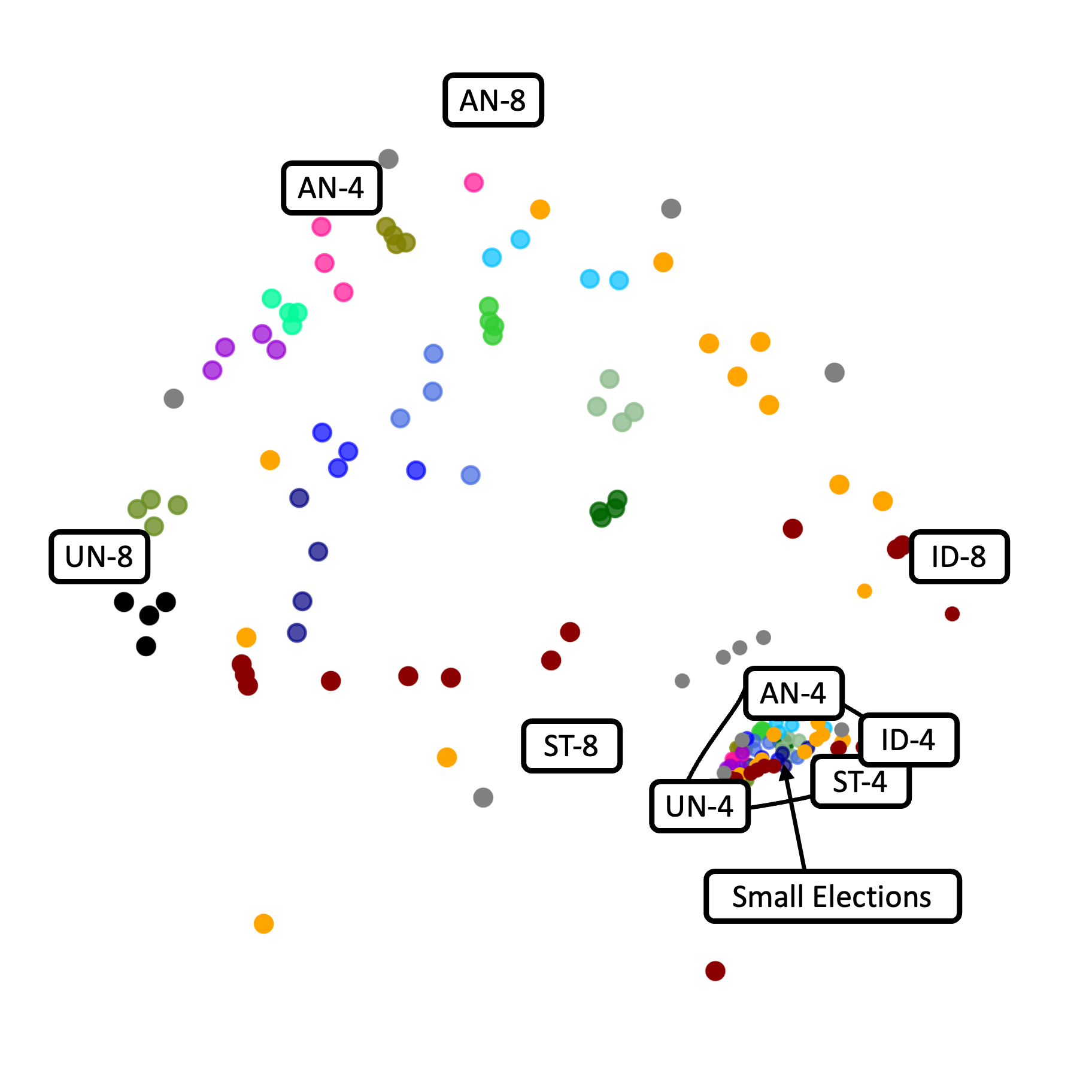}%
    \caption{Map obtained using the truncation-based isomorphic swap distance.}
    \label{fig:map:swap-truncation}
  \end{subfigure}
    \begin{subfigure}[t]{0.32\textwidth}
    \centering
    \includegraphics[width=\textwidth, trim={0.2cm 0.2cm 0.2cm 0.2cm}, clip]{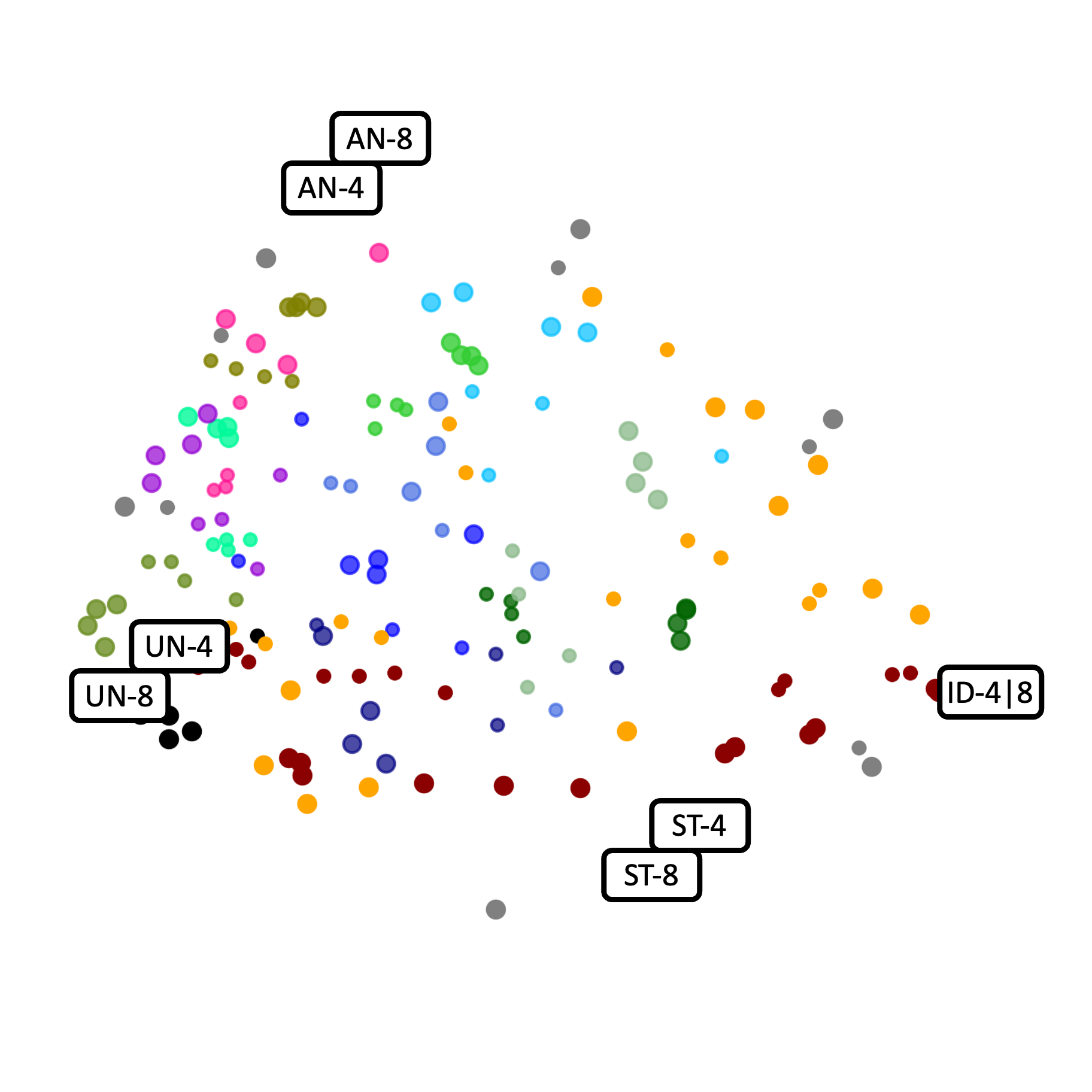}%
    \caption{Map obtained using $\hat{d}^\del_\swap$.}
    \label{fig:map:swap-del}
  \end{subfigure}
  \begin{subfigure}[t]{0.35\textwidth}
    \centering
    \includegraphics[width=\textwidth, trim={0.2cm 0.2cm 0.2cm 0.2cm}, clip]{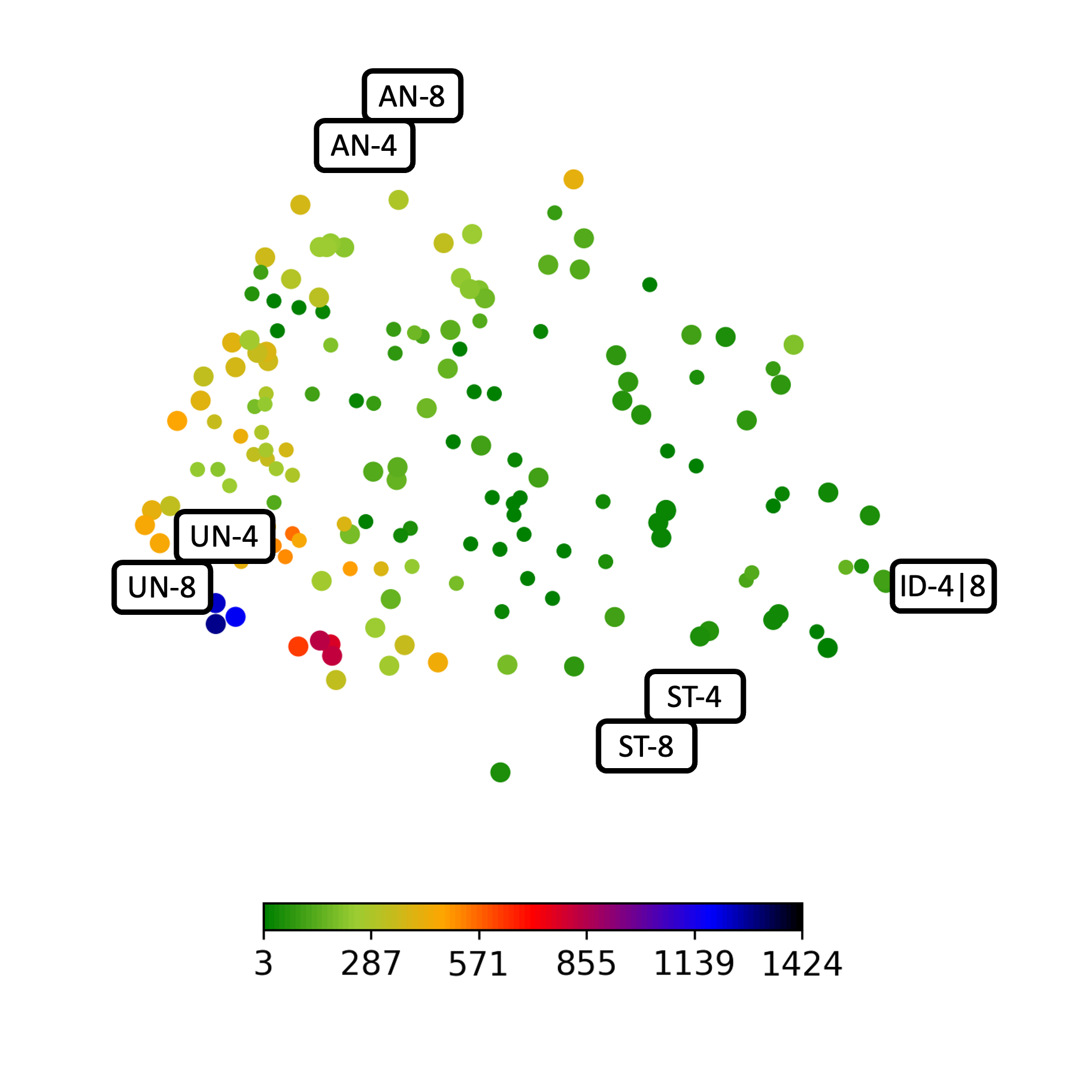}%
    \caption{Map obtained using $\hat{d}^\del_\swap$, colored by no. triangle inequality
      violations
      (for each point we calculated how many triangles including that point violate triangle inequality).}
    \label{fig:map:swap-del-triangle}
  \end{subfigure}

  \caption{\label{fig:swap-issues}Maps showing various failures of the isomorphic swap distance extensions on the size-based/mini dataset.}
\end{figure*}

\begin{figure*}[t]
  \centering
  \begin{subfigure}{0.32\linewidth}
    \includegraphics[width=\linewidth]{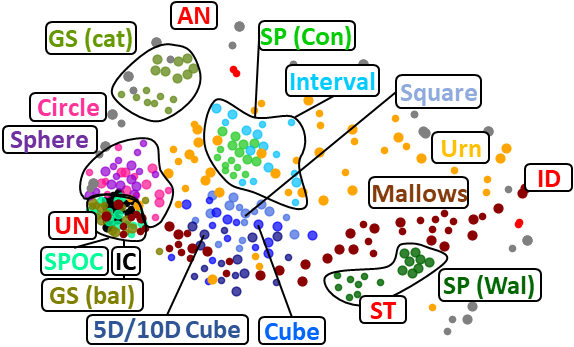}%
    \caption{Size-oriented (positionwise).}
    \label{fig:map:sizes-poswise}
  \end{subfigure}
  \begin{subfigure}{0.32\linewidth}
    \includegraphics[width=\linewidth]{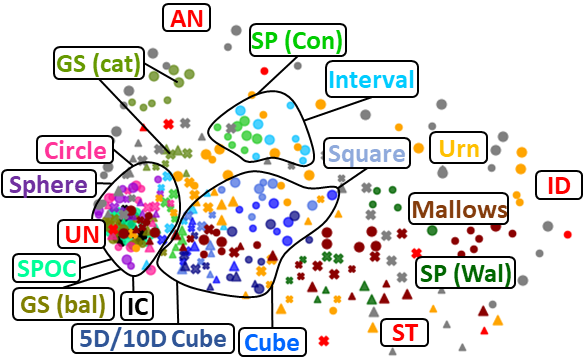}%
    \caption{Comprehensive (positionwise).}
    \label{fig:map:full-poswise}
  \end{subfigure}
   \begin{subfigure}{0.35\linewidth}
     \includegraphics[width=\linewidth]{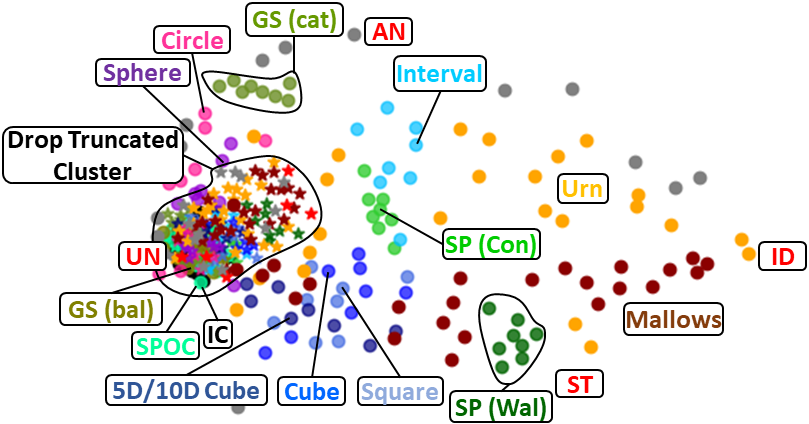}
     \caption{Random drop (positionwise).}
     \label{fig:map:trunc_drop:posiwse}
   \end{subfigure}
  %
  %
  \begin{subfigure}{0.32\linewidth}
    \includegraphics[width=\linewidth, trim={0cm 0cm 0cm 0cm}, clip]{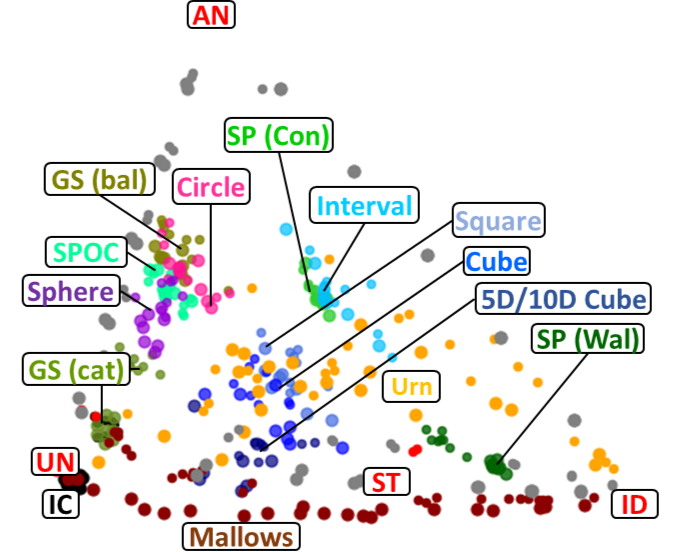}%
    \caption{Size-oriented (DAP).}
    \label{fig:map:sizes-dap-ape}
  \end{subfigure}
  \begin{subfigure}{0.32\linewidth}
    \includegraphics[width=\linewidth, trim={0cm 0cm 0cm 0cm}, clip]{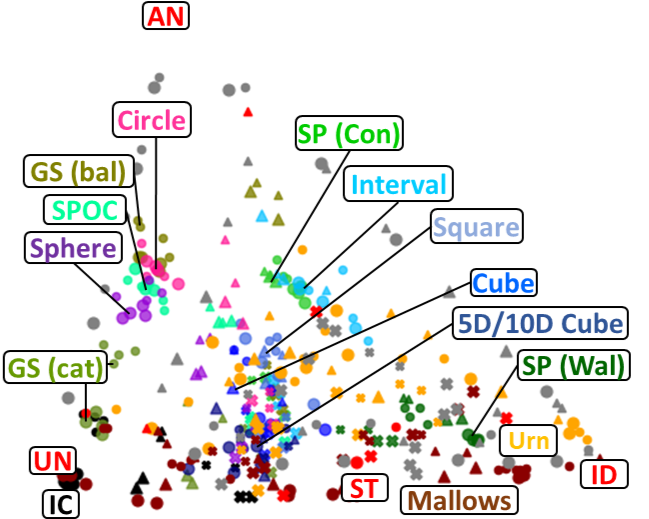}%
    \caption{Comprehensive (DAP).}
    \label{fig:map:full-dap-ape}
  \end{subfigure}
   \begin{subfigure}{0.35\linewidth}
     \includegraphics[width=\linewidth]{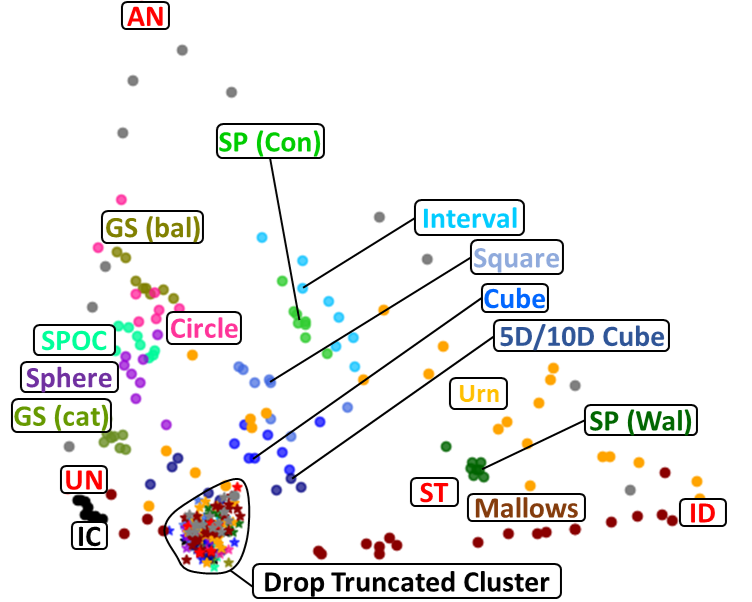}%
     \caption{Random drop (DAP).}
     \label{fig:map:trunc_drop:dap}
   \end{subfigure}
  
  \caption{Maps of elections created using the positionwise (top) and
    DAP (bottom) distances, for the size-oriented (left),
    comprehensive (center), and  random drop
    (right) datasets.  Top-$k$ truncated elections are marked with
    triangles, random-cut truncated ones with crosses, random drop ones with stars and complete
    ones with circles.  Larger markers mean elections with 16
    candidates, and smaller markers mean elections with 8 candidates.
    In the ST 
    election each voter ranks the same half of the candidates on top,
    but otherwise the votes are chosen uniformly~at~random.}
  \label{fig:big-map-experiments-in-appendix}
\end{figure*}

\section{Extended Discussion of the Map of Preflib}
\label{app:preflib:all}

In this section we provide a number of missing details and additional
analyses regarding the map of Preflib---as well as its several
colorings---and the observations we make.  First, we describe the
contents of these map(s).

\begin{figure}[t!]
  \centering
  \includegraphics[width=0.7\columnwidth]{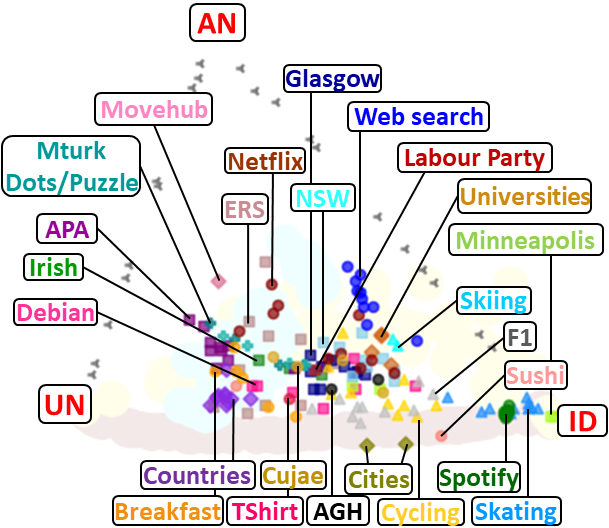}%
  \caption{\label{fig:preflib_colors_dap}Map of Preflib, where colors
    and markers indicate the specific Preflib datasets from which each
    given election comes from. Large pale discs show synthetic
    elections generated using Mallows (pale brown), urn (pale yellow)
    or Euclidean elections with points generated uniformly at random
    from hypecubes (pale blue).}
\end{figure}

\subsection{Preflib Dataset Composition}\label{app:preflib}

We show several maps of Preflib. One in \Cref{fig:map:preflib}, where
all the Preflib election are represented as black dots for clarity,
one in \Cref{fig:preflib_colors_dap}, where for each Preflib election
we indicate which exact Preflib dataset it comes from, several of them
in \Cref{fig:preflib-types} colored by election type, and one in
\Cref{fig:preflib_fragment}, where we present a smaller fragment of Preflib
(which we consider for the sake of the positionwise distance, see
\Cref{app:poswise:preflib}). Next we describe the details of the
Preflib dataset.

To construct the set of Preflib elections to include in our maps, we
considered every dataset on Preflib with an \textit{Election} tag and
\textit{.soc} or \textit{.soi} file format (Preflib includes also other
kinds of preference data, like matching data, and other preference
formats, like cardinal or categorical preferences, that were not within
the scope of this paper).  We note that datasets vary greatly in the
number of datafiles they contain (e.g., there is one file in the
``T-Shirt'' dataset, but 362 files in the ``Spotify Daily Chart''
dataset).  To avoid overrepresentation of elections from datasets with
numerous files in the map, from every dataset with more than 10 files,
we have randomly chosen 10 files to include in the map.

We have also omitted some of the datasets introduced to Preflib by
\citetapp{boe-sch:c:real-life-elections}.  The nature of many of these
elections is very specific, as they report the given ranking evolving
over time (e.g., in the ``Spotify Countries Chart'' dataset each file
corresponds to a given country and a month and each vote represents
the ranking of the 200 most listened songs on Spotify in this country
on a particular day of this month, or in the ``Formula 1 Races''
dataset each file corresponds to a given Formula 1 race and each vote
represents the ranking of drivers by the time in which they finished a
particular lap).

Finally, to show which Preflib elections correspond to which
statistical cultures, we added all cultures from the standard dataset
and generated elections from these with 20 candidates and 1000 voters
(number of candidates was higher here than in the standard dataset to
adapt to Preflib elections, which on average tend to have a higher
number of candidates). On our maps, we show these elections as much
larger, pale discs, so that the Preflib elections are visible on top
of them (depending on the particular maps, we either used the
same---but more pale---colors as in the maps of synthetic elections,
or we collapsed some of the colors into one; e.g., using the same pale
blue for all Euclidean elections with points generated uniformly at
random from hypercubes of various dimensions).  In some of the maps,
such as those in \Cref{fig:preflib_colors_dap} and
\Cref{fig:preflib-types}, we only kept synthetic elections generated
using Mallows, urn, and Euclidean models (with points generated from
hypercubes).  For orientation purposes, we have added $\ID$, $\AN$,
and an approximation of $\UN$ elections as well as elections forming
paths between these three, each with 16 candidates and 500 voters.

\Cref{tab:preflib_elections} at the end of the appendix presents the
exact composition of our Preflib dataset (excluding the artificial
elections added for orientation and the synthetic ones). For each
election, we report the corresponding Preflib file, the marker we use
to denote its position on the maps in
\Cref{fig:preflib_colors_dap,fig:preflib_fragment,fig:preflib-types},
the number of candidates, the number of votes, the number of unique
votes, and the values of Diversity, Agreement, and Polarization
indices, as well as the name of the statistical culture (and its
parameters) of the synthetic election closest to it (in terms of DAP)
in the dataset.

\subsection{Types of Elections}\label{app:preflib-types}

\begin{figure}[t!]
  \centering
  \begin{subfigure}[b]{0.4\columnwidth}
  \centering
    \includegraphics[width=\columnwidth]{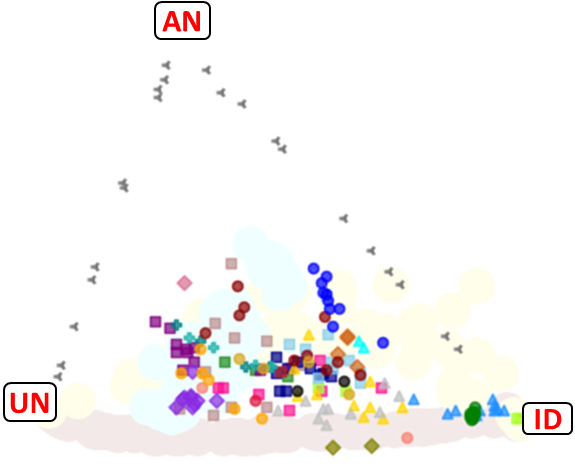}%
    \caption{All types of elections.}
    \label{fig:preflib:alltypes}
  \end{subfigure}~\quad
  \begin{subfigure}[b]{0.4\columnwidth}
  \centering
    \includegraphics[width=\columnwidth]{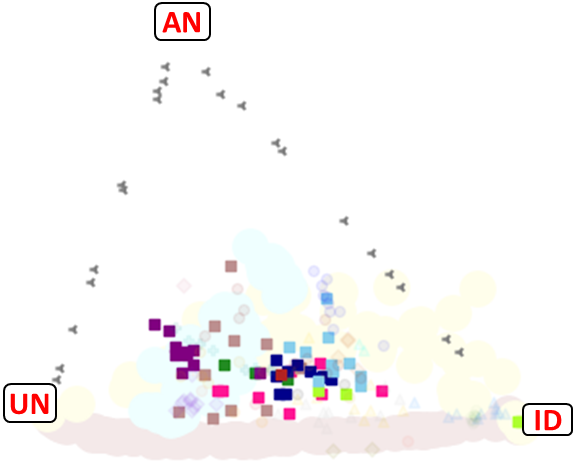}%
    \caption{Political (squares).}
    \label{fig:preflib:political}
  \end{subfigure}
  \bigskip

  \centering
  \begin{subfigure}[b]{0.4\columnwidth}
  \centering
    \includegraphics[width=\columnwidth]{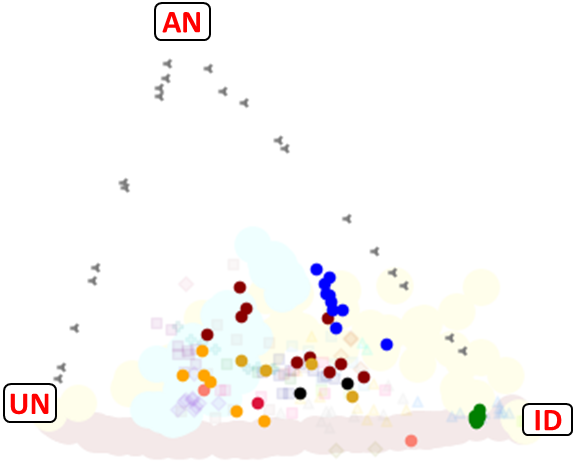}%
    \caption{Surveys (dots).}
    \label{fig:preflib:surveys}
  \end{subfigure}~\quad
  \begin{subfigure}[b]{0.4\columnwidth}
  \centering
    \includegraphics[width=\columnwidth]{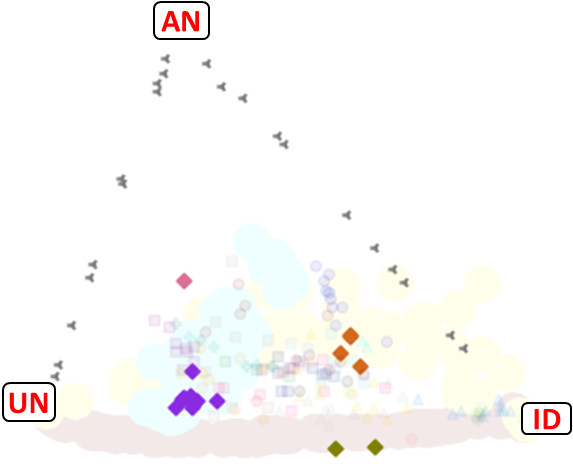}%
    \caption{Indicator (diamonds).}
    \label{fig:preflib:indicator}
  \end{subfigure}
  \bigskip

  \centering
  \begin{subfigure}[b]{0.4\columnwidth}
  \centering
    \includegraphics[width=\columnwidth]{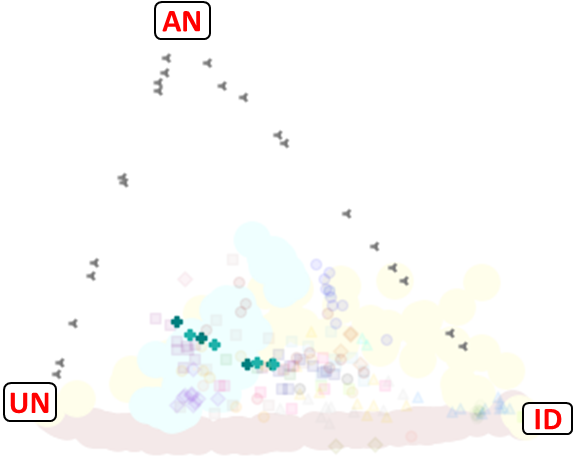}%
    \caption{Mechanical Turk (crosses).}
    \label{fig:preflib:mturk}
  \end{subfigure}~\quad
  \begin{subfigure}[b]{0.4\columnwidth}
  \centering
    \includegraphics[width=\columnwidth]{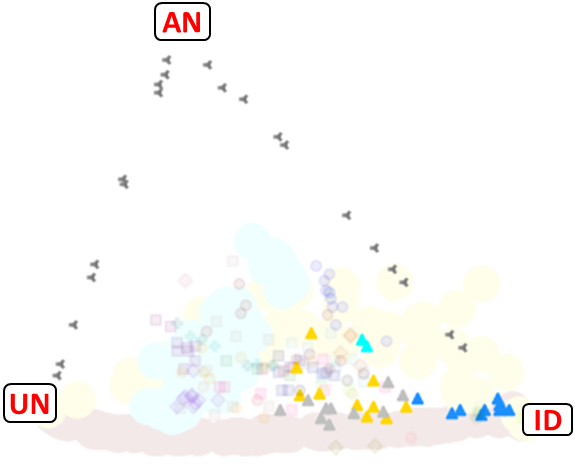}%
    \caption{Sports (triangles).}
    \label{fig:preflib:sports}
  \end{subfigure}
  \bigskip
  
  \caption{Map of Preflib with highlighted different types of elections.
    Large pale discs show synthetic
    elections generated using Mallows (pale brown), urn (pale yellow)
    or Euclidean elections with points generated uniformly at random
    from hypecubes (pale blue).
}
\label{fig:preflib-types}
\end{figure}

Following the information provided on Preflib, as well as the approach
of \citeauthor{boe-bre-fal-nie-szu:c:compass}~\citep{boe-bre-fal-nie-szu:c:compass,boe-bre-elk-fal-szu:c:frequency-matrices}, we have classified
the Preflib elections into five types (elections of each type are presented on the maps using their specific markers):
\begin{description}
\item[Political elections (marked with squares).] By political elections, we mean elections
  where people cast votes over other people, to select someone---or a
  group of individuals---for some sort of a leadership position. These
  elections include, e.g., elections for the city council of Glasgow,
  elections in several districts in Ireland for their general
  elections, but also a number of elections from professional
  organizations, including those collected by Electoral Reform
  Society, elections for the president of the American Psychological
  Association, APA, or leadership elections held by the Debian
  project.

\item[Surveys (marked with dots).] Survey elections represent situations where people
  were asked to express their preferences over a number of items. This
  includes, e.g., the classic Sushi dataset~\citep{kam:c:sushi}, where
  people ranked sushi types, or surveys over breakfast items, opinions
  on movies, students' preferences over university courses (AGH and
  Cujae) or various aspects of education (Cujae).

\item[Indicator (marked with diamonds).] Indicator elections rank various objects based on
  more-or-less objective criteria. This includes, e.g., rankings of
  universities according to different benchmarks, rankings of
  countries, or cities according to various quality-of-life measures.

\item[Mechanical Turk (marked with crosses).] These elections include rankings collected
  using Mechanical Turk by Andrew Mao. The tasks involved solving
  simple puzzles.

\item[Sports (marked with triangles).] Sports elections represent results of various
  competitions. For example, in the Formula~1 dataset each election
  represents a single season and each vote represents a single
  race. Sports elections also represent data on cycling races,
  skiing, and skating.
\end{description}

In \Cref{fig:preflib-types} we show maps of Preflib where only
elections of a given type are highlighted, whereas all the other
ones---including those generated synthetically and represented as
large discs---are very pale. These maps lead to some interesting
conclusions. Foremost, political (\Cref{fig:preflib:political}) and
survey elections (\Cref{fig:preflib:surveys}) seem to be occupying
roughly the same area of the map. One could point out to some minor
differences (e.g., the political APA elections take an area that is
not well-represented by the surveys elections, and the surveys
WebSearch dataset takes an area that political elections do not often
take) but, overall, these do not seem very relevant. Thus in
experiments the two types of elections might be treated similarly
(however, one should explore reasons for their similarity on our map
in more detail). The indicator elections
(\Cref{fig:preflib:indicator}) take very specific, but quite spread
out positions on the map. Given how few datasets in this category we
have, it is not clear if we can draw any substantial
conclusions. Rather, it seems that this type of election may not stand
out so much from the previous two types (but it might also be possible
that indicator elections tend to assume very specific positions and we
would see this more strongly given more data). Mechanical Turk
elections (\Cref{fig:preflib:mturk}) also assume very specific positions, but also this might be
simply because we have very few elections of this type, collected in
very specific way (see the Preflib website for a description).

The only type of elections that truly stands out are the sports ones
(\Cref{fig:preflib:sports}).  Essentially all of them are closer to
$\ID$ than to $\UN$ (and, indeed, sometimes very close to $\ID$, as in
the case of the Skating dataset). This is natural as, indeed, we
expect the sportsmen to perform similarly in different
competitions. However, of course, the figure skating competitions held
over a single day have much stronger correlation (position on the map
very close to $\ID$) than Formula~1 seasons, held over the course of a
year, where drivers' abilities and performances can vary from a race
to a race (similarly for cycling, where some participants are better
on the flat areas of the race, and some do better in the climbing
parts).

\subsection{Parameters of Mallows and Urn Elections Close to Preflib}\label{app:preflib-ranges}

\begin{figure*}[t!]
  \centering
  \includegraphics[width=\linewidth]{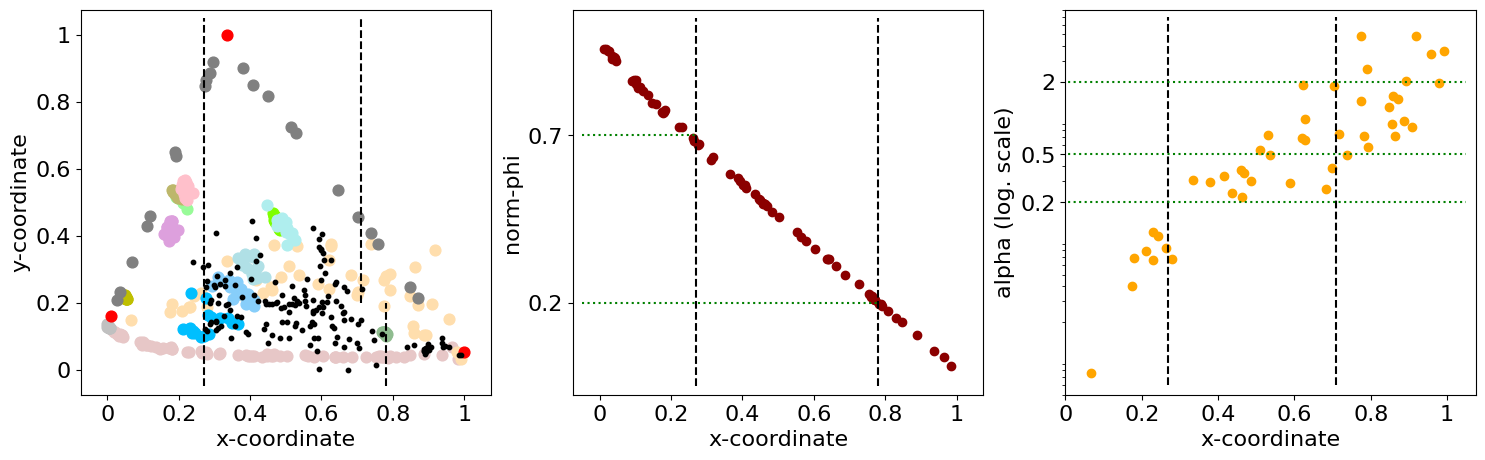}
  \begin{subfigure}[t]{0.32\linewidth}
  \centering
  \caption{Map of Preflib elections. We attempt to capture the
    elections between dashed lines using the Mallows and urn
    models. Black dots represent Preflib elections, colorful ones
    represent synthetic ones (pale brown is for Mallows, pale
    yellow/orange for urn; other colors are as in~\Cref{fig:map:preflib}).}
    \label{fig:preflib:ranges-mallows-urn}
  \end{subfigure}\quad
  \begin{subfigure}[t]{0.31\linewidth}
  \centering
    \caption{Mallows $\normphi$ range. For each Mallows election
      (i.e., each dot), its $x$ coordinate is the $x$-position on the
      map, and its $y$-coordinate is its $\normphi$ parameter.}
    \label{fig:preflib:ranges-mallows}
  \end{subfigure}\quad
  \begin{subfigure}[t]{0.3\linewidth}
  \centering
    \caption{Urn parameter range. For each Urn election (i.e., each
      dot), its $x$ coordinate is the $x$-position on the map, and its
      $y$-coordinate is its contagion parameter.}
    \label{fig:preflib:ranges-urn}
  \end{subfigure}
  \caption{\label{fig:preflib:ranges}Ranges of $\normphi$ for the Mallows model and contagion parameters for the urn model that seem to
    capture Preflib elections well.  }
\end{figure*}

We have attempted to establish the ranges of the $\normphi$ and
contagion parameters for the Mallows and urn models that best capture
many of the Preflib elections (omiting those very close to $\ID$.  In
\Cref{fig:preflib:ranges-mallows-urn} we show a map of Preflib where
we indicate which elections we want to capture.  Specifically, we seek
$\normphi$ parameters of the Mallows elections whose $x$-positions on
the map are between those shown by the left dashed line and the
bottom-right one. Similarly, we seek contagion parameters of those urn
elections whose $x$-coordinate is between the left and the top-right
dashed lines. In
\Cref{fig:preflib:ranges-mallows,fig:preflib:ranges-urn} we show
relation between the $x$-positions on the map and the $\normphi$ and
contagion parameters of the Mallows and urn elections, respectively.
From these plots, we read off that the $\normphi$ parameters that we
are after are in the range $[0.2,0.7]$. For the case of the urn model,
we can consider several options. If we choose the range $[0.2,0.5]$ of
contagion parameters, then most of the generated elections will fall
into the desired range of $x$-positions on the map, but we will miss
some other urn elections in the range. If we consider range $[0.2,2]$,
then we get all urn elections from the desired area on the map, but
also a number of others.

As a further validation of our recommendations, for each of the Preflib
elections we found the closest synthetic election in the dataset and
recorded the statistical culture from which it comes from, as well as the
parameter value for which it was generated (see the last column in \Cref{tab:preflib_elections}). In most cases, these were either Mallows
or urn elections, with parameters from the recommended ranges (but there were also
cases where, e.g., an urn election with the contagion parameter around 3.62 was closest to a Preflib election\footnote{This was election \texttt{00018-00000003.soi} from the Minneapolis dataset. However, this election is very particular as it captures a political elections with allowed write-ins. In effect, the election has nearly 500 candidates, including Mickey Mouse, that are mostly unranked.}). However,
hypercube elections (mostly for 5 and 10 dimensions) were also prominently
represented, which justifies their use in experiments.

\begin{figure*}[t!]
  \centering
  \begin{subfigure}[b]{\columnwidth}
  \centering
    \includegraphics[width=0.75\columnwidth]{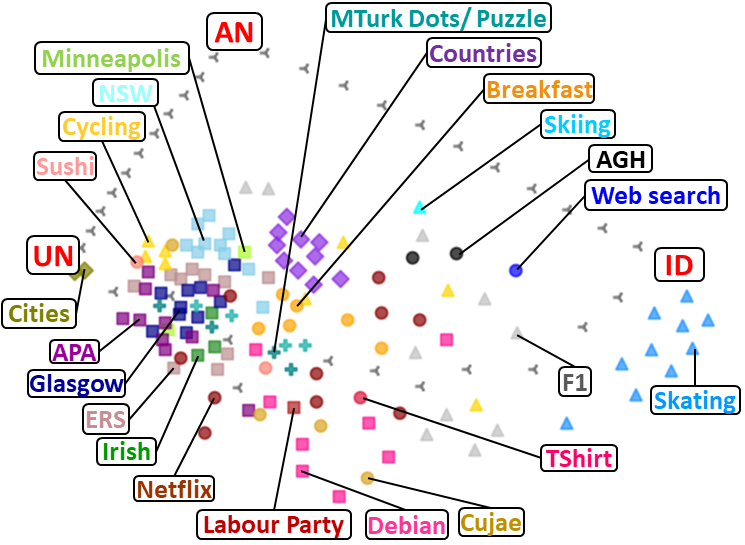}%
    \caption{Positionwise.}
    \label{fig:preflib_fragment_positionwise}
  \end{subfigure}
  \begin{subfigure}[b]{\columnwidth}
  \centering
    \includegraphics[width=0.75\columnwidth]{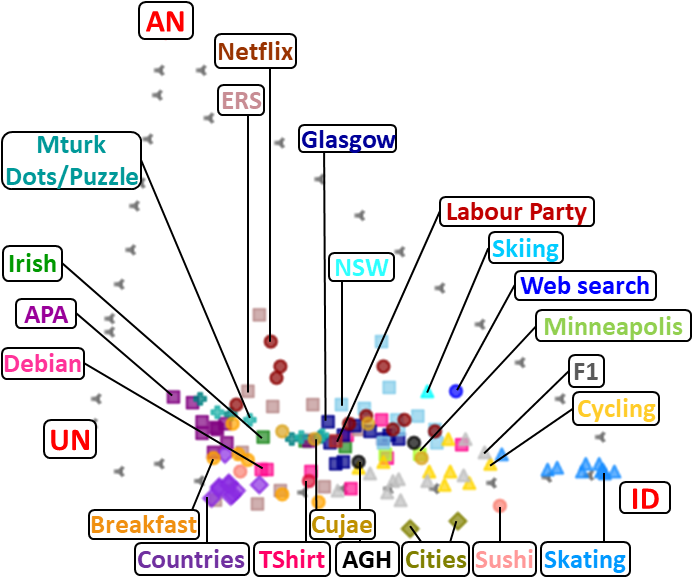}%
    \caption{DAP.}
    \label{fig:preflib_fragment_dap}
  \end{subfigure}
  \caption{\label{fig:preflib_fragment}Maps of Preflib fragment.
}
\end{figure*}

\subsection{Positionwise Map of Preflib}
\label{app:poswise:preflib}

Obtaining the map of Preflib elections similar to the one in
\Cref{fig:map:preflib} but using positionwise distance is
computationally difficult.  This is due to large numbers of candidates
in some of the elections, which leads to large sizes of frequency
matrices. In this appendix, we present a positionwise map of a subset
of elections from our Preflib dataset.  Specifically, we removed all
elections with more than 200 candidates (see \Cref{tab:preflib_elections} for an
exact composition of our Preflib dataset).
\Cref{fig:preflib_fragment_positionwise} presents a positionwise
distance map for this fragment of the dataset.  For comparison, we
have also included DAP distance map of the same fragment in
\Cref{fig:preflib_fragment_dap}.

We see that in positionwise map many elections are much closer to $\UN$ than in the DAP map.
This can be explained by a different treatment of truncated elections by both distances (see \Cref{sec:pos-dap:diff}).
In particular, this affected many political elections, like elections from New South Wales (NSW) or Glasgow elections,
which very often are heavily top-truncated
(as voters report only a few top candidates).

\appendixproof{Proposition}{prop:emd-vs-w-in-text}{%
  This proof is available in \Cref{sec:emd-vs-W}.}

\appendixproof{Proposition}{prop:swap-pos:impossibility}{
\begin{proof}
    First, let us show that there is no swap extension that satisfies $\ID$-, $\AN$- and $\UN$-consistency.
    Assume, by contradiction, that there is such a swap extension $\hat{d}^*$.
    We know that
    $d_\swap(\AN_{3,6},\UN_{3,6}) = 4/9 \neq 11/18 =
    d_\swap(\AN_{4,24},\UN_{4,24})$.  Consequently,
    $\hat{d}^*(\AN_{3,6},\UN_{3,6}) \neq
    \hat{d}^*(\AN_{4,24},\UN_{4,24})$ as well.  Since, by $\AN$- and
    $\UN$-consistency,
    $\hat{d}^*(\AN_{3,6},\AN_{4,24})=\hat{d}^*(\UN_{3,6},\UN_{4,24})=0$,
    we get a violation of the triangle inequality.

    Next, let us focus on positionwise extensions.
    To this end, let us establish the distances between $\UN$ and $\ID$ as well as $\UN$ and $\AN$ for equal and even number of candidates $m$.
    Denote uniformity vector
    $\vec{u} = (\nicefrac{1}{m},\nicefrac{1}{m},\dots,\nicefrac{1}{m})$,
    identity vectors $\vec{e}_i \in \mathbb{R}^m$ with $1$ at $i$th position and zero otherwise,
    and antagonism vectors $\vec{a}_i \in \mathbb{R}^m$
    with $\nicefrac{1}{2}$ at $i$th and $m-i$th positions and zero otherwise.
    It is easy to verify that for $m=2k$, it holds that
    \begin{align*}
        W(\vec{u},\vec{e}_i)&=\frac{1}{2m^2} \big( (i-1)^2 + (m-i)^2 + m-1 \big) \quad
        \mbox{and}\\
         W(\vec{u},\vec{a}_i)&=\frac{1}{4k^2} \left( (i-1)^2 + (k-i)^2 + \textstyle\frac{(i-1)^2 + (k-i)^2}{k-1}\right).
    \end{align*}
    By summing the above equations for all $i \in [m]$, we obtain that
    \begin{align*}
        {d}_\pos(\UN_m,\ID_m)&=\frac{1}{3}-\frac{1}{3m^2} \quad
        \mbox{and}\\
        {d}_\pos(\UN_m,\AN_m)&=\frac{1}{6}-\frac{1}{6m}.
    \end{align*}
    
    Now, assume that there exists a positionwise extension $\hat{d}^*$
    that satisfies $\ID$-, $\AN$, and $\UN$-consistency.  This means
    that both $\hat{d}^*(\UN_{m,n}, \ID_{m,n})$ and
    $\hat{d}^*(\UN_{m,n}, \AN_{m,n})$ have to be constant for
    different values of $m$ and $n$ (as $\hat{d}^*$ must satisfy the
    triangle inequality in addition to the consistency properties).
    However, since we have shown that
    ${d}_\pos(\UN_m,\ID_m)/{d}_\pos(\UN_m,\AN_m)$ is not
    constant for different values of $m$, this is not possible if
    $d^*$ is a positionwise extension.
\end{proof}
}

\appendixproof{Proposition}{prop:swap-triangle}{%
\begin{proof}
  Let $E= (C,V)$ be an election with three candidates, $C=\{a,b,c\}$,
  and six voters:
  \begin{align*}
    v_1 \colon a \pref b \pref c, && v_3 \colon  b \pref c \pref a, && v_5 \colon c \pref a \pref b,\\
    v_2 \colon a \pref b \pref c, && v_4 \colon  b \pref c \pref a, && v_6 \colon c \pref a \pref b,
  \end{align*}
  and let $\ID_{3,6}$ be an identity election with six voters, each with
  preference order $x \pref y \pref z$. One can verify that
  $\hat{d}^\del_\swap(E,\ID_{3,6}) = 8/9$ (indeed, there are no candidates
  to delete, it takes $8$ swaps to make all votes identical in $E$,
  and the normalizing factor is $9$).

  Next, let us consider election $\ID_{2,6}$, where each of the six voters
  has preference order $x \pref y$. For each election $E'$ obtained
  from $E$ by deleting a single candidate, we have
  $\hat{d}^\del_\swap(E',\ID_{2,6}) = 2/3$ (it suffices to make two swaps
  in $E'$ to ensure that all the votes are identical, and the
  normalizing factor is $3$). We also see that
  $\hat{d}^\del_\swap(\ID_{2,6},\ID_{3,6}) = 0$.  Then,
  $\hat{d}^\del_\swap(E,\ID_{3,6}) > \hat{d}^\del_\swap(E,\ID_{2,6}) +
  \hat{d}^\del_\swap(\ID_{2,6},\ID_{3,6})$ and, so, $\hat{d}^\del_\swap$ fails
  the triangle inequality.
\end{proof}
}

\appendixproof{Theorem}{thm:pos-ext}{%

  We first show that the Wasserstein distance between matrices is
  invariant to stretching these matrices.

\begin{lemma}\label{lem:pos-resize}
  Let $X = [\vec{x}_1, \ldots, \vec{x}_m]$ and $Y = [\vec{y}_1, \ldots, \vec{y}_m]$ be
  two frequency matrices, and let $t$ be a nonnegative integer. Then
  $
    d_{W}(X,Y) = d_{W}(\stretched_{mt}(X), \stretched_{mt}(Y)).
  $  
\end{lemma}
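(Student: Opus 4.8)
The plan is to establish the identity by proving the two inequalities $d_{W}(\stretched_{mt}(X),\stretched_{mt}(Y)) \le d_{W}(X,Y)$ and $d_{W}(\stretched_{mt}(X),\stretched_{mt}(Y)) \ge d_{W}(X,Y)$ separately. The key observation is that the $mt$ columns of $\stretched_{mt}(X)$ fall into $m$ consecutive blocks of $t$ columns each, with all columns in the $r$-th block equal to $\vec{x}_r$ (and likewise for $Y$), so that every term of the form $W(\cdot,\cdot)$ appearing in the matching cost is one of the $m^2$ values $W(\vec{x}_i,\vec{y}_j)$.

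For the $\le$ direction, I would take an optimal permutation $\sigma \in \Pi([m],[m])$ for $d_{W}(X,Y)$ and lift it to the block permutation $\tau \in \Pi([mt],[mt])$ that sends the $r$-th block of $X$-columns bijectively onto the $\sigma(r)$-th block of $Y$-columns. Each of the $t$ columns in block $r$ then contributes $W(\vec{x}_r,\vec{y}_{\sigma(r)})$, so the cost of $\tau$ is $\frac{1}{mt}\sum_{r\in[m]} t\, W(\vec{x}_r,\vec{y}_{\sigma(r)}) = \frac{1}{m}\sum_{r\in[m]} W(\vec{x}_r,\vec{y}_{\sigma(r)}) = d_{W}(X,Y)$; since the minimizing $\tau$ can only do better, the inequality follows.

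For the $\ge$ direction, I would start from an arbitrary $\tau \in \Pi([mt],[mt])$ and count, for each pair $(i,j)$, the number $f_{ij}$ of columns in the $i$-th block of $X$ that $\tau$ matches to columns in the $j$-th block of $Y$. This $f$ is a nonnegative integer matrix with every row sum and every column sum equal to $t$, and the cost of $\tau$ equals $\frac{1}{mt}\sum_{i,j} f_{ij}\, W(\vec{x}_i,\vec{y}_j)$. Writing $g = \frac{1}{t} f$, the matrix $g$ is doubly stochastic and the cost becomes $\frac{1}{m}\sum_{i,j} g_{ij}\, W(\vec{x}_i,\vec{y}_j)$. By the Birkhoff--von Neumann theorem, $g$ is a convex combination $\sum_k \lambda_k P_{\sigma_k}$ of permutation matrices; since the cost is linear in $g$ and $\frac{1}{m}\sum_{i} W(\vec{x}_i,\vec{y}_{\sigma_k(i)}) \ge d_{W}(X,Y)$ for every $\sigma_k$ by the definition of $d_{W}$, the cost of $\tau$ is at least $\sum_k \lambda_k\, d_{W}(X,Y) = d_{W}(X,Y)$.

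The main obstacle is the $\ge$ direction, and specifically the worry that a clever matching $\tau$ might split one block across several target blocks and thereby achieve a strictly smaller cost than any honest block-to-block permutation. The Birkhoff--von Neumann decomposition is exactly what rules this out: it shows that the best fractional (and hence, by linearity, the best integral) transportation between the grouped columns is attained at a permutation. Note that for this direction I only pass from $\tau$ to the doubly stochastic $g$, so I do not need the converse fact that an arbitrary such $f$ is realizable as a permutation $\tau$; that realizability is used only implicitly in the easy direction, where the block permutation is constructed by hand.
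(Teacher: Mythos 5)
Your proof is correct, but it takes a genuinely different route from the paper's. The paper does not work with the permutation definition of $d_W$ directly: it invokes the min-cost-flow characterization of $d_W$ from Szufa et al.\ (a bipartite network with unit-capacity arcs of cost $\frac{1}{k}W(\vec{a}_i,\vec{b}_j)$), and proves the two optimal flow costs equal by converting flows between the two networks---aggregating a flow in the stretched network onto the small one via $g(x_i,y_j)=\sum_{p,q}\frac{1}{t}f_{mt}(x_i^{(p)},y_j^{(q)})$, and symmetrically splitting a flow in the small network uniformly across the $t$ copies. You instead stay with the combinatorial definition: the easy inequality by lifting an optimal $\sigma$ to a block permutation, and the hard inequality by compressing an arbitrary $\tau\in\Pi([mt],[mt])$ into the block-count matrix $f$, normalizing to a doubly stochastic $g$, and applying Birkhoff--von Neumann. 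The two arguments rest on the same underlying fact---that the doubly stochastic (equivalently, fractional assignment) polytope has permutation matrices as its extreme points---but they deploy it at different places: the paper buries it inside the unproved claim that the fractional min-cost flow equals the permutation-defined $d_W$, whereas you invoke it explicitly, which arguably makes your proof more self-contained (it needs no network-flow characterization at all, only a classical theorem). Your closing remark is also apt: you correctly avoid needing the converse realizability of an arbitrary $f$ with row/column sums $t$ as an actual permutation of $[mt]$, a point the paper's flow-splitting direction handles implicitly. One cosmetic note, applying equally to the paper: the lemma says $t$ is a \emph{nonnegative} integer, but both proofs tacitly require $t\geq 1$ (for $t=0$ the stretched matrices are empty), so you may as well state that assumption.
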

\begin{proof}
  To prove this result, we need to recall the way of computing the
  Wasserstein distance between two matrices, provided by
  \citetapp{szu-fal-sko-sli-tal:c:map}. Let
  $A = [\vec{a}_1, \ldots, \vec{a}_k]$ and
  $B = [\vec{b}_1, \ldots, \vec{b}_k]$ be two matrices, whose column
  vectors have nonnegative entries that sum up to $1$ (for each
  vector). To compute $d_{W}(A,B)$, we form flow network $N(A,B)$ that
  contains source $s$, nodes $a_1, \ldots, a_k$, nodes
  $b_1, \ldots, b_k$, and sink $t$ (by a slight abuse of notation, we
  use the same names for both the nodes of the network and the vectors
  in respective matrices, except that for the nodes we omit the arrows
  on top of them). For each $i \in [k]$, there is an arc from $s$ to
  $a_i$ and an arc from $b_i$ to $t$ (each of these arcs has capacity
  $1$ and cost $0$). Further, for each $i, j \in [k]$, there is an arc
  from $a_i$ to $b_j$ with capacity $1$ and cost
  $\frac{1}{k}W(\vec{a}_i,\vec{b}_j)$.  One can verify that in this
  network the min-cost flow of value $k$ from $s$ to $t$ has cost
  equal to $d_{W}(A,B)$.

  Next, consider networks $N_m = N(X,Y)$ and
  $M_{mt} = N(\stretched_{mt}(X), \stretched_{mt}(Y))$.  For each
  $i \in [m]$ and $p \in [t]$, we write $\vec{x}_i^{(p)}$ to refer to the
  $p$-th copy of vector $\vec{x}_i$ that $\stretched_{mt}(X)$ contains. The
  meaning of $\vec{y}_i^{(p)}$ is analogous.
  Let $f_m$ be a min-cost flow of value $m$ from $s$ to $t$ in $N_{m}$
  and, analogously, let $f_{mt}$ be a min-cost flow of value $mt$ from
  $s$ to $t$ in $N_{mt}$. We claim that the costs of $f_m$ and $f_{mt}$
  are equal (and, hence,
  $d_{W}(X,Y) = d_{W}(\stretched_{mt}(X), \stretched_{mt}(Y))$).
  First, we observe that the cost of $f_{mt}$ is greater or equal to
  that of $f_m$. To see why this is the case, consider flow $g$ in
  $N_m$ defined as follows:
  \begin{enumerate}
  \item For each $i \in [m]$ there is flow of value $1$ from $s$ to $x_i$, and
    from $y_i$ to $t$.
  \item For each $i, j \in [m]$, there is flow of value
    $\sum_{p \in [t], q \in [t]} \frac{1}{t}f_{mt}(x_i^{(p)},
    y_j^{(q)})$ from $x_i$ to $y_j$.
  \end{enumerate}
  Note that since $f_{mt}$ is a correct flow, so is $g$, and indeed
  $g$ moves $m$ units of flow from $s$ to $t$. Further, the cost of
  $g$ in $N_m$ is the same as the cost of $f_{mt}$ in $N_{mt}$. Since
  $f_m$ is a min-cost flow of value $m$, its cost cannot be larger
  than that of $g$. An analogous argument shows that the cost of
  $f_{m}$ is greater or equal to that of $f_{mt}$. Hence the two flows
  have equal costs, which proves the lemma.
\end{proof}

Next, we show that $\hat{d}_\pos$ is a pseudodistance.

\begin{proposition}\label{prop:pos-triangle}
  Function $\hat{d}_\pos$ is a pseudodistance. 
\end{proposition}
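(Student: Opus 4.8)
The plan is to deduce the three pseudodistance axioms for $\hat{d}_\pos$ from the corresponding axioms for the matrix Wasserstein distance $d_W$ at a single common dimension, using \Cref{lem:pos-resize} to absorb the different stretch targets that appear in the three pairwise comparisons.

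First I would verify that $d_W$ is a pseudodistance on any collection of matrices with the same number of columns, each column a probability vector. The key observation is that, for two vectors, $W(\vec{a},\vec{b}) = \int_0^1 |A(x)-B(x)|\,dx$ is nothing but the $L^1$-distance on $[0,1]$ between the associated cumulative functions $A$ and $B$; hence $W$ satisfies the three axioms vectorwise, even when $\vec{a}$ and $\vec{b}$ have different dimensions, since they are compared only through their functions on the common domain $[0,1]$. In particular, $|A-C|\le|A-B|+|B-C|$ pointwise integrates to the triangle inequality $W(\vec{a},\vec{c})\le W(\vec{a},\vec{b})+W(\vec{b},\vec{c})$. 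The matrix formula $d_W(X,Y)=\min_\sigma \frac1k\sum_i W(\vec{x}_i,\vec{y}_{\sigma(i)})$ then inherits these axioms by a standard matching argument: the identity matching gives $d_W(X,X)=0$, inverting an optimal matching gives symmetry, and composing an optimal matching $\sigma$ for $(X,Y)$ with an optimal matching $\tau$ for $(Y,Z)$ produces a matching $\tau\circ\sigma$ for $(X,Z)$ whose cost is bounded, termwise via the vectorwise triangle inequality, by $d_W(X,Y)+d_W(Y,Z)$.

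Second, to pass from $d_W$ to $\hat{d}_\pos$ I must reconcile the fact that for three elections $E,F,G$ the pairwise distances are defined by stretching to $\mathrm{lcm}(m_E,m_F)$, $\mathrm{lcm}(m_F,m_G)$, and $\mathrm{lcm}(m_E,m_G)$, respectively, where $m_E,m_F,m_G$ are their numbers of candidates. Setting $L=\mathrm{lcm}(m_E,m_F,m_G)$ and noting that stretching composes---for any divisor $s$ of $L$ that is a multiple of $m_E$, applying $\stretched_L$ to $\stretched_{s}(\freq(E))$ simply copies each original column of $\freq(E)$ a total of $L/m_E$ times, i.e.\ yields $\stretched_L(\freq(E))$---I would invoke \Cref{lem:pos-resize} to re-express each pairwise distance as $d_W$ between the two matrices $\stretched_L(\freq(\cdot))$, which now all have exactly $L$ columns. (The proof of \Cref{lem:pos-resize} uses only that columns are probability vectors, so it applies to the stretched frequency matrices that arise here, even though these are not themselves bistochastic.) With all three distances expressed at the common dimension $L$, reflexivity of $\hat{d}_\pos$ follows from $\stretched_{m_E}(\freq(E))=\freq(E)$ together with $d_W(\freq(E),\freq(E))=0$, symmetry from symmetry of both $d_W$ and $\mathrm{lcm}$, and the triangle inequality directly from that of $d_W$ on $L$-column matrices.

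I expect the main obstacle to be exactly this reconciliation of the three distinct stretch sizes; it is the reason \Cref{lem:pos-resize} is established beforehand, and once every distance is rewritten at the single dimension $L$ the result reduces to the routine pseudodistance property of $d_W$ at a fixed dimension. The only incidental point to check along the way is that the stretched frequency matrices are admissible inputs---their columns remain probability vectors---so that both $W$ and the matching formula for $d_W$ apply without change.
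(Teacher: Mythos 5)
Your proposal is correct and takes essentially the same route as the paper's proof: both reduce the three pairwise comparisons to a single common dimension via \Cref{lem:pos-resize} together with the composition of stretching (you stretch to $\mathrm{lcm}(m_E,m_F,m_G)$ where the paper uses the product $pqr$, an immaterial difference), and then conclude by the triangle inequality of $d_W$ at that fixed dimension. The only divergence is that the paper cites \citet{szu-fal-sko-sli-tal:c:map} for the triangle inequality of $d_W$, whereas you prove it self-containedly via the $L^1$-embedding of $W$ and a matching-composition argument, also making explicit two points the paper leaves implicit (that stretching composes, and that the lemma applies to stretched, non-bistochastic matrices since only column-stochasticity is used).
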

\begin{proof}
  It is immediate to see that for each two elections $E$ and $F$, we
  have $\hat{d}_\pos(E,E) = 0$, $\hat{d}_\pos(E,F) \geq 0$, and
  $\hat{d}_\pos(E,F) = \hat{d}_\pos(F,E)$. It remains to argue that we
  also have the triangle inequality.

  Let $A$, $B$, and $C$ be three elections with, respectively, $p$,
  $q$, and $r$ candidates.  Further, let $P$, $Q$, and $R$ be their
  frequency matrices. Our goal is to show that:
  \[
    \hat{d}_\pos(A,B) + \hat{d}_\pos(B,C) \geq \hat{d}_\pos(A,C). 
  \]
  By the definition of $\hat{d}_\pos$ and \Cref{lem:pos-resize}, this
  is equivalent to:
  \[
    {d}_{W}(\stretched_{pqr}(P),\stretched_{pqr}(Q)) +
    {d}_{W}(\stretched_{pqr}(Q),\stretched_{pqr}(R)) \geq
    {d}_{W}(\stretched_{pqr}(P),\stretched_{pqr}(R)).
  \]
  Since
  ${d}_{W}$ satisfies the triangle inequality (this has been shown by
  \citetapp{szu-fal-sko-sli-tal:c:map}), this inequality must hold.
\end{proof}
}

\appendixproof{Proposition}{prop:empkem:convergence}{%
\begin{proof}
    We begin by showing a useful fact about the uniformity elections with a large number of candidates.
    It turns out that in such elections every vote is in approximately half of a maximal swap distance to almost every other vote.
    \begin{lemma}
        \label{lemma:empkem:convergence}
        For every $\varepsilon, d > 0$, there exists $M \in \mathbb{N}$ such that for every $m \ge M$, $(C,V) = \UN_{m,m!}$, and $u \in V$, it holds that
        \[
        \left|\left\{v \in V :
        \frac{\swap(u,v)}{\binom{m}{2}} \not \in \left(
            \frac{1}{2}-\varepsilon,
            \frac{1}{2}+\varepsilon 
        \right) \right\}\right| 
        \le d \cdot m!.
        \]
    \end{lemma}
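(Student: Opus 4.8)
The plan is to recognize $\swap(u,v)$ as the number of inversions of $v$ relative to $u$, reduce to a uniformly random linear order by symmetry, and then establish concentration of the inversion count around its mean $\binom{m}{2}/2$ via a variance bound and Chebyshev's inequality.

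First I would use the fact that $\UN_{m,m!}$ contains each of the $m!$ linear orders over $C$ exactly once, together with the invariance of $\swap$ under simultaneously relabeling candidates. For any fixed $u \in V$, let $\sigma$ be the renaming that maps $u$ to a fixed reference order $\id$; since $\swap(u,v) = \swap(\id,\sigma(v))$ and $v \mapsto \sigma(v)$ is a bijection of $V$ onto the set of all linear orders, the multiset $\{\swap(u,v) : v \in V\}$ is independent of $u$ and coincides with $\{\mathrm{inv}(w)\}$, where $w$ ranges over all linear orders and $\mathrm{inv}(w) = \swap(\id,w)$ counts the pairs on which $w$ disagrees with $\id$. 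Thus it suffices to show that, for a uniformly random linear order $w$ over $m$ candidates, $\Pr\!\big[\,\mathrm{inv}(w)/\binom{m}{2} \notin (\tfrac{1}{2}-\varepsilon,\tfrac{1}{2}+\varepsilon)\,\big] \le d$ once $m$ is large enough, because multiplying this probability by the total $m!$ yields the desired count.

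Second I would invoke the classical independent decomposition of the inversion count. Building $w$ by inserting candidates one at a time, one has $\mathrm{inv}(w) = \sum_{j=1}^{m-1} B_j$, where the $B_j$ are independent and each $B_j$ is uniform on $\{0,1,\dots,j\}$. This gives $\mathbb{E}[\mathrm{inv}(w)] = \sum_{j=1}^{m-1} j/2 = \binom{m}{2}/2$ and, by independence, $\Var[\mathrm{inv}(w)] = \sum_{j=1}^{m-1} \tfrac{j(j+2)}{12} = \Theta(m^3)$.

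Finally, setting $Y = \mathrm{inv}(w)/\binom{m}{2}$, I get $\mathbb{E}[Y] = \tfrac{1}{2}$ and $\Var[Y] = \Theta(m^3)/\binom{m}{2}^2 = \Theta(1/m)$. Chebyshev's inequality then gives $\Pr[\,|Y - \tfrac{1}{2}| \ge \varepsilon\,] \le \Var[Y]/\varepsilon^2 = O\!\big(1/(\varepsilon^2 m)\big)$, so choosing $M$ large enough that this bound drops below $d$ for all $m \ge M$ completes the argument; multiplying through by $m!$ recovers the stated count $d \cdot m!$. The only real content is the variance estimate in the second step---namely that $\Var[\mathrm{inv}(w)]$ is of order $m^3$, hence negligible compared with $\binom{m}{2}^2 = \Theta(m^4)$; the symmetry reduction and the Chebyshev step are routine once this concentration is in place.
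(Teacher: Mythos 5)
Your proof is correct and takes essentially the same route as the paper's: both reduce to the distribution of the inversion count of a uniformly random permutation (the Mahonian distribution) and apply Chebyshev's inequality, using that the variance is $\Theta(m^3)$ while $\binom{m}{2}^2 = \Theta(m^4)$, so the deviation probability is $O(1/(\varepsilon^2 m))$. The only difference is cosmetic: you derive the mean and variance from the classical insertion decomposition into independent uniforms on $\{0,\dots,j\}$ and make the symmetry reduction over the choice of $u$ explicit, whereas the paper cites the known Mahonian moments $\mu = m(m-1)/4$ and $\sigma^2 = m(m-1)(2m+5)/72$---and indeed your computed variance $\sum_{j=1}^{m-1} j(j+2)/12$ equals the paper's formula exactly.
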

    \begin{proof}
        Let us denote the left hand side of the inequality in the lemma statement by $L_m$.
        Fix $m \in \mathbb{N}$, $(C,V) = \UN_{m,m!}$, and $u \in V$.
        Consider drawing a random vote $v$ from $V$ and random variable $x = \swap(u,v)$.
        This variable follows so called \emph{Mahonian distribution}~\citepapp{ben:x:mahonian}
        and it is known that its mean and variance are equal to 
        \[
            \mu =\frac{m(m-1)}{4} 
            \quad \mbox{and} \quad
            \sigma^2=\frac{m(m-1)(2m+5)}{72},
        \]
        respectively.
        Then, from Chebyshev's inequality for the random variable $x/\binom{m}{2}$, for every $\varepsilon>0$, it holds that
        \begin{align*}
            \mathbb{P}\left(\left|\frac{x}{\binom{m}{2}}-\frac{\mu}{\binom{m}{2}}\right| \ge \varepsilon \right) &\le
            \frac{\sigma^2}{\binom{m}{2}^2\varepsilon^2} = \frac{O(m^{-1})}{\varepsilon^2}.
        \end{align*}
        Since $\mu/\binom{m}{2}=\nicefrac{1}{2}$, we get that
        $|\nicefrac{x}{\binom{m}{2}}-\nicefrac{\mu}{\binom{m}{2}}| \ge \varepsilon$ is equivalent to
        $\nicefrac{\swap(u,v)}{\binom{m}{2}} \not \in (
            \nicefrac{1}{2}-\varepsilon,
            \nicefrac{1}{2}+\varepsilon 
        )$.
        Thus, we obtain
        \(
            \nicefrac{L_m}{m!} \le
            \nicefrac{O(m^{-1})}{\varepsilon^2}.
        \)
        The fact that $O(m^{-1})$ becomes arbitrarily small as $m$ grows yields the thesis.
    \end{proof}
    In the remainder of the proof, let us show that for every $\varepsilon>0$,
    there exists $M \in \mathbb{N}$ such that for every $m \ge M$, it holds that
    \(
        \empkem_i(\UN_{m,m!})/(m!\binom{m}{2})\ge \nicefrac{1}{2} - \varepsilon.
    \)
    Since for every $m$ we have $\empkem_i(\UN_{m,m!})\le \nicefrac{1}{2}\cdot m! \binom{m}{2}$ (this is the maximum Kemeny distance), this will imply the thesis.

    Fix $\varepsilon>0$.
    Let us take arbitrary $m \in \mathbb{N}$ such that
    \(
        \empkem_i(\UN_{m,m!})/(m!\binom{m}{2}) < \nicefrac{1}{2} - \varepsilon.
    \)
    By definition, this means that there exist votes $v_1,\dots,v_i \in V$ such that
    \begin{equation}
    \label{eq:empkem:convergence}
        \frac{1}{m!}\sum_{v \in V} \min_{j \in [i]} \frac{\swap(v,v_j)}{\binom{m}{2}} < \frac{1}{2} - \varepsilon.
    \end{equation}
    Let us denote the left hand side of the above inequality by $L$.
    Next, let us split the set of voters, $V$, into two disjoint sets, $V^+ = \{ v\in V : \min_{j \in [i]}\swap(v,v_j)/\binom{m}{2} \ge \nicefrac{1}{2} - \nicefrac{\varepsilon}{2}\}$ and $V^- = V \setminus V^+$.
    Observe that
    \begin{align*}
        L &= \frac{1}{m!}\left(\sum_{v \in V^+} \min_{j \in [i]} \frac{\swap(v,\! v_j)}{\binom{m}{2}} + \sum_{v \in V^-}\min_{j \in [i]} \frac{\swap(v,\! v_j)}{\binom{m}{2}}\!\right)\\
        &\ge \frac{|V^+|}{m!}\left(\frac{1}{2} - \frac{\varepsilon}{2}\right).
    \end{align*}
    Combining this with inequality \eqref{eq:empkem:convergence} we get
    \[
        |V^+| \le m! \cdot \frac{\frac{1}{2} - \varepsilon}{\frac{1}{2}-\frac{\varepsilon}{2}}.
    \]
    Hence, there exists $\varepsilon' = \varepsilon / (1-\varepsilon) > 0$ such that
    $|V^+| \le m! \cdot (1 - \varepsilon')$.
    Thus, $|V^-| > m! \cdot \varepsilon'$, i.e.,
    \[
        \left|\left\{v \in V : \min_{j \in [i]}
        \frac{\swap(u,v)}{\binom{m}{2}} <
            \frac{1}{2}-\frac{\varepsilon}{2} \right\}\right| 
        > m!\cdot \varepsilon'.
    \]
    Hence, there exists $j \in [i]$ such that
    \[
        \left|\left\{v \in V :
        \frac{\swap(u,v)}{\binom{m}{2}} <
            \frac{1}{2}-\frac{\varepsilon}{2} \right\}\right| 
        > m!\cdot \frac{\varepsilon'}{i}.
    \]
    By Lemma~\ref{lemma:empkem:convergence},
    there exists $M \in \mathbb{N}$ such that for every $m \ge M$ the above inequality does not hold.
    Therefore, for each such $m$, it holds that
    \(
        \empkem_i(\UN_{m,m!})/(m!\binom{m}{2})\ge \nicefrac{1}{2} - \varepsilon,
    \)
    which concludes the proof.
  \end{proof}
}

\section{Missing Proofs}\label{app:proofs}

In this appendix,
we present the technical proofs that were omitted in the main part of the paper.

\appendixProofs

\begin{table*}[t]
    \centering
    \small
    \begin{tabular}{|c|c|c|c|c|c|c|c|c|}
    \toprule
    \begin{sideways} Election Name \end{sideways} &
    \begin{sideways} Marker \end{sideways} &
    \begin{sideways} Preflib File \end{sideways} &
    \begin{sideways} \#Candidates \end{sideways} &
    \begin{sideways} \#Votes \end{sideways} &
    \begin{sideways} Diversity \end{sideways} &
    \begin{sideways} Agreement \end{sideways} &
    \begin{sideways} Polarization \end{sideways} &
    \begin{sideways} Closest Culture\end{sideways}\\
    \midrule
Irish (1) & \tabimg{Irish} & 00001-00000001.soi & 12 & 43942 & $0.42^*$ & $0.41$ & $0.16^*$ & Urn, $\alpha=0.24$ (0.05) \\ 
Irish (2) & \tabimg{Irish} & 00001-00000002.soi & 9 & 29988 & $0.46^*$ & $0.33$ & $0.18^*$ & Urn, $\alpha=0.3$ (0.01) \\ 
Irish (3) & \tabimg{Irish} & 00001-00000003.soi & 14 & 64081 & $0.37^*$ & $0.48$ & $0.14^*$ & Urn, $\alpha=0.22$ (0.1) \\ 
Debian (1) & \tabimg{Debian} & 00002-00000001.soi & 4 & 475 & $0.28$ & $0.51$ & $0.16$ & Urn, $\alpha=0.29$ (0.03) \\ 
Debian (2) & \tabimg{Debian} & 00002-00000002.soi & 5 & 488 & $0.35$ & $0.47$ & $0.16$ & Urn, $\alpha=0.29$ (0.07) \\ 
Debian (3) & \tabimg{Debian} & 00002-00000003.soi & 7 & 504 & $0.44$ & $0.43$ & $0.1$ & Cube (0.07) \\ 
Debian (4) & \tabimg{Debian} & 00002-00000004.soi & 8 & 421 & $0.4$ & $0.51$ & $0.06$ & Mallows, norm-$\phi=0.46$ (0.07) \\ 
Debian (5) & \tabimg{Debian} & 00002-00000005.soi & 9 & 482 & $0.51$ & $0.35$ & $0.13$ & 5-Cube (0.03) \\ 
Debian (6) & \tabimg{Debian} & 00002-00000006.soi & 5 & 436 & $0.32$ & $0.55$ & $0.1$ & Urn, $\alpha=0.69$ (0.1) \\ 
Debian (7) & \tabimg{Debian} & 00002-00000007.soi & 4 & 403 & $0.2$ & $0.67$ & $0.09$ & Urn, $\alpha=1.39$ (0.06) \\ 
Debian (8) & \tabimg{Debian} & 00002-00000008.soi & 8 & 143 & $0.5$ & $0.35$ & $0.13$ & Cube (0.04) \\ 
Netflix (1) & \tabimg{Netflix} & 00004-00000013.soc & 3 & 617 & $0.2$ & $0.48$ & $0.25$ & Urn, $\alpha=1.88$ (0.05) \\ 
Netflix (2) & \tabimg{Netflix} & 00004-00000092.soc & 3 & 1631 & $0.31$ & $0.26$ & $0.32$ & Square (0.07) \\ 
Netflix (3) & \tabimg{Netflix} & 00004-00000111.soc & 4 & 390 & $0.32$ & $0.46$ & $0.17$ & Urn, $\alpha=0.49$ (0.09) \\ 
Netflix (4) & \tabimg{Netflix} & 00004-00000123.soc & 4 & 376 & $0.37$ & $0.32$ & $0.29$ & Square (0.02) \\ 
Netflix (5) & \tabimg{Netflix} & 00004-00000147.soc & 4 & 400 & $0.45$ & $0.25$ & $0.24$ & Cube (0.03) \\ 
Netflix (6) & \tabimg{Netflix} & 00004-00000148.soc & 4 & 485 & $0.29$ & $0.49$ & $0.18$ & Urn, $\alpha=0.29$ (0.06) \\ 
Netflix (7) & \tabimg{Netflix} & 00004-00000164.soc & 4 & 512 & $0.35$ & $0.32$ & $0.3$ & Square (0.02) \\ 
Netflix (8) & \tabimg{Netflix} & 00004-00000178.soc & 4 & 366 & $0.22$ & $0.61$ & $0.12$ & Urn, $\alpha=0.38$ (0.08) \\ 
Netflix (9) & \tabimg{Netflix} & 00004-00000179.soc & 4 & 454 & $0.25$ & $0.56$ & $0.16$ & Urn, $\alpha=0.29$ (0.07) \\ 
Netflix (10) & \tabimg{Netflix} & 00004-00000186.soc & 4 & 417 & $0.27$ & $0.54$ & $0.14$ & Urn, $\alpha=0.29$ (0.08) \\ 
Skating (1) & \tabimg{Skating} & 00006-00000003.soc & 14 & 9 & $0.04$ & $0.92$ & $0.03$ & Mallows, norm-$\phi=0.06$ (0.04) \\ 
Skating (2) & \tabimg{Skating} & 00006-00000008.soc & 23 & 9 & $0.04$ & $0.94$ & $0.01$ & Mallows, norm-$\phi=0.06$ (0.02) \\ 
Skating (3) & \tabimg{Skating} & 00006-00000011.soc & 20 & 9 & $0.07$ & $0.9$ & $0.02$ & Mallows, norm-$\phi=0.1$ (0.03) \\ 
Skating (4) & \tabimg{Skating} & 00006-00000018.soc & 24 & 9 & $0.03$ & $0.95$ & $0.01$ & Mallows, norm-$\phi=0.04$ (0.01) \\ 
Skating (5) & \tabimg{Skating} & 00006-00000028.soc & 24 & 9 & $0.13$ & $0.85$ & $0.04$ & Mallows, norm-$\phi=0.14$ (0.03) \\ 
Skating (6) & \tabimg{Skating} & 00006-00000029.soc & 19 & 9 & $0.11$ & $0.86$ & $0.04$ & Mallows, norm-$\phi=0.14$ (0.04) \\ 
Skating (7) & \tabimg{Skating} & 00006-00000033.soc & 23 & 9 & $0.06$ & $0.9$ & $0.03$ & Mallows, norm-$\phi=0.1$ (0.05) \\ 
Skating (8) & \tabimg{Skating} & 00006-00000036.soc & 18 & 9 & $0.17$ & $0.76$ & $0.08$ & Single Peaked, Walsh (0.01) \\ 
Skating (9) & \tabimg{Skating} & 00006-00000046.soc & 30 & 7 & $0.04$ & $0.93$ & $0.02$ & Mallows, norm-$\phi=0.06$ (0.02) \\ 
Skating (10) & \tabimg{Skating} & 00006-00000048.soc & 24 & 9 & $0.05$ & $0.93$ & $0.02$ & Mallows, norm-$\phi=0.06$ (0.02) \\ 
ERS (1) & \tabimg{ERS} & 00007-00000013.soi & 5 & 2785 & $0.5$ & $0.3$ & $0.16$ & Cube (0.04) \\ 
ERS (2) & \tabimg{ERS} & 00007-00000026.soi & 5 & 148 & $0.41$ & $0.32$ & $0.23$ & Square (0.01) \\ 
ERS (3) & \tabimg{ERS} & 00007-00000029.soi & 17 & 176 & $0.36$ & $0.4$ & $0.23$ & Urn, $\alpha=0.37$ (0.03) \\ 
ERS (4) & \tabimg{ERS} & 00007-00000032.soi & 13 & 575 & $0.51$ & $0.39$ & $0.09$ & 5-Cube (0.04) \\ 
ERS (5) & \tabimg{ERS} & 00007-00000037.soi & 10 & 460 & $0.6$ & $0.28$ & $0.09$ & 10-Cube (0.03) \\ 
ERS (6) & \tabimg{ERS} & 00007-00000050.soi & 10 & 195 & $0.32$ & $0.49$ & $0.16$ & Urn, $\alpha=0.29$ (0.09) \\ 
ERS (7) & \tabimg{ERS} & 00007-00000063.soi & 3 & 213 & $0.32$ & $0.25$ & $0.4$ & Interval (0.13) \\ 
ERS (8) & \tabimg{ERS} & 00007-00000065.soi & 8 & 362 & $0.44$ & $0.29$ & $0.28$ & Square (0.03) \\ 
ERS (9) & \tabimg{ERS} & 00007-00000078.soi & 20 & 365 & $0.55$ & $0.36$ & $0.07$ & 5-Cube (0.05) \\ 
ERS (10) & \tabimg{ERS} & 00007-00000079.soi & 14 & 661 & $0.45$ & $0.47$ & $0.08$ & Mallows, norm-$\phi=0.49$ (0.08) \\ 
Glasgow (1) & \tabimg{Glasgow} & 00008-00000001.soi & 9 & 6900 & $0.39$ & $0.46$ & $0.17$ & Urn, $\alpha=0.37$ (0.08) \\ 
Glasgow (2) & \tabimg{Glasgow} & 00008-00000005.soi & 10 & 11052 & $0.35$ & $0.5$ & $0.19$ & Urn, $\alpha=0.29$ (0.1) \\ 
Glasgow (3) & \tabimg{Glasgow} & 00008-00000008.soi & 10 & 10160 & $0.32$ & $0.52$ & $0.17$ & Urn, $\alpha=0.29$ (0.08) \\ 
Glasgow (4) & \tabimg{Glasgow} & 00008-00000009.soi & 11 & 9560 & $0.31$ & $0.56$ & $0.16$ & Urn, $\alpha=0.29$ (0.09) \\ 
Glasgow (5) & \tabimg{Glasgow} & 00008-00000010.soi & 9 & 8682 & $0.38$ & $0.45$ & $0.2$ & Urn, $\alpha=0.37$ (0.06) \\ 
Glasgow (6) & \tabimg{Glasgow} & 00008-00000011.soi & 10 & 8984 & $0.39$ & $0.47$ & $0.11$ & Urn, $\alpha=0.22$ (0.12) \\ 
Glasgow (7) & \tabimg{Glasgow} & 00008-00000015.soi & 9 & 8654 & $0.37$ & $0.48$ & $0.17$ & Urn, $\alpha=0.22$ (0.09) \\ 
Glasgow (8) & \tabimg{Glasgow} & 00008-00000018.soi & 9 & 9567 & $0.39$ & $0.49$ & $0.12$ & Mallows, norm-$\phi=0.46$ (0.12) \\ 
Glasgow (9) & \tabimg{Glasgow} & 00008-00000019.soi & 11 & 8803 & $0.32$ & $0.55$ & $0.16$ & Urn, $\alpha=0.29$ (0.1) \\ 
Glasgow (10) & \tabimg{Glasgow} & 00008-00000021.soi & 10 & 5410 & $0.29$ & $0.57$ & $0.14$ & Urn, $\alpha=0.29$ (0.1) \\ 
AGH (1) & \tabimg{AGH} & 00009-00000001.soc & 9 & 146 & $0.36$ & $0.51$ & $0.11$ & Mallows, norm-$\phi=0.41$ (0.12) \\ 
AGH (2) & \tabimg{AGH} & 00009-00000002.soc & 7 & 153 & $0.27$ & $0.59$ & $0.12$ & Urn, $\alpha=0.38$ (0.12) \\ 
Skiing (1) & \tabimg{Skiing} & 00010-00000001.soi & 351 & 4 & $0.18$ & $0.59$ & $0.18$ & Urn, $\alpha=0.38$ (0.04) \\ 
\bottomrule
\end{tabular}
\end{table*}
\begin{table*}
    \centering
    \small
    \begin{tabular}{|c|c|c|c|c|c|c|c|c|}
    \toprule
    \begin{sideways} Election Name \end{sideways} &
    \begin{sideways} Marker \end{sideways} &
    \begin{sideways} Preflib File \end{sideways} &
    \begin{sideways} \#Candidates \end{sideways} &
    \begin{sideways} \#Votes \end{sideways} &
    \begin{sideways} Diversity \end{sideways} &
    \begin{sideways} Agreement \end{sideways} &
    \begin{sideways} Polarization \end{sideways} &
    \begin{sideways} Closest Culture\end{sideways}\\
    \midrule
Skiing (2) & \tabimg{Skiing} & 00010-00000002.soi & 170 & 4 & $0.18$ & $0.58$ & $0.2$ & Urn, $\alpha=0.38$ (0.06) \\ 
Web search (1) & \tabimg{Web_search} & 00011-00000003.soc & 103 & 5 & $0.15$ & $0.64$ & $0.2$ & Urn, $\alpha=0.38$ (0.05) \\ 
Web search (2) & \tabimg{Web_search} & 00011-00000006.soi & 1449 & 4 & $0.17$ & $0.5$ & $0.26$ & Urn, $\alpha=1.88$ (0.03) \\ 
Web search (3) & \tabimg{Web_search} & 00011-00000008.soi & 1572 & 4 & $0.18$ & $0.45$ & $0.34$ & Urn, $\alpha=0.66$ (0.07) \\ 
Web search (4) & \tabimg{Web_search} & 00011-00000010.soi & 2096 & 4 & $0.2$ & $0.5$ & $0.23$ & Urn, $\alpha=0.29$ (0.05) \\ 
Web search (5) & \tabimg{Web_search} & 00011-00000012.soi & 1210 & 4 & $0.16$ & $0.46$ & $0.36$ & Urn, $\alpha=0.66$ (0.06) \\ 
Web search (6) & \tabimg{Web_search} & 00011-00000038.soi & 1754 & 4 & $0.18$ & $0.46$ & $0.3$ & Urn, $\alpha=1.88$ (0.05) \\ 
Web search (7) & \tabimg{Web_search} & 00011-00000049.soi & 1845 & 4 & $0.18$ & $0.43$ & $0.38$ & Interval (0.08) \\ 
Web search (8) & \tabimg{Web_search} & 00011-00000052.soi & 2242 & 4 & $0.18$ & $0.47$ & $0.29$ & Urn, $\alpha=1.88$ (0.04) \\ 
Web search (9) & \tabimg{Web_search} & 00011-00000054.soi & 2512 & 4 & $0.19$ & $0.49$ & $0.27$ & Urn, $\alpha=1.88$ (0.03) \\ 
Web search (10) & \tabimg{Web_search} & 00011-00000071.soi & 2127 & 4 & $0.18$ & $0.46$ & $0.31$ & Urn, $\alpha=1.88$ (0.06) \\ 
TShirt & \tabimg{TShirt} & 00012-00000001.soc & 11 & 30 & $0.46$ & $0.43$ & $0.09$ & Cube (0.09) \\ 
Sushi (1) & \tabimg{Sushi} & 00014-00000001.soc & 10 & 5000 & $0.54$ & $0.32$ & $0.14$ & 5-Cube (0.02) \\ 
Sushi (2) & \tabimg{Sushi} & 00014-00000002.soi & 100 & 5000 & $0.26$ & $0.81$ & $0.02$ & Mallows, norm-$\phi=0.23$ (0.05) \\ 
Minneapolis (1) & \tabimg{Minneapolis} & 00018-00000001.soi & 379 & 36655 & $0.01$ & $0.99$ & $0$ & Mallows, norm-$\phi=0.01$ (0) \\ 
Minneapolis (2) & \tabimg{Minneapolis} & 00018-00000002.soi & 9 & 36655 & $0.33$ & $0.56$ & $0.12$ & Mallows, norm-$\phi=0.39$ (0.12) \\ 
Minneapolis (3) & \tabimg{Minneapolis} & 00018-00000003.soi & 477 & 32086 & $0$ & $0.99$ & $0$ & Urn, $\alpha=3.62$ (0.01) \\ 
Minneapolis (4) & \tabimg{Minneapolis} & 00018-00000004.soi & 7 & 32086 & $0.27$ & $0.6$ & $0.09$ & Urn, $\alpha=0.38$ (0.12) \\ 
MTurk Dots (1) & \tabimg{MTurk_Dots} & 00024-00000001.soc & 4 & 795 & $0.49$ & $0.18$ & $0.26$ & Cube (0.06) \\ 
MTurk Dots (2) & \tabimg{MTurk_Dots} & 00024-00000002.soc & 4 & 794 & $0.46$ & $0.25$ & $0.22$ & Cube (0.03) \\ 
MTurk Dots (3) & \tabimg{MTurk_Dots} & 00024-00000003.soc & 4 & 800 & $0.41$ & $0.36$ & $0.15$ & Cube (0.03) \\ 
MTurk Dots (4) & \tabimg{MTurk_Dots} & 00024-00000004.soc & 4 & 794 & $0.37$ & $0.42$ & $0.15$ & Urn, $\alpha=0.22$ (0.05) \\ 
MTurk Puzzle (1) & \tabimg{MTurk_Puzzle} & 00025-00000001.soc & 4 & 793 & $0.48$ & $0.22$ & $0.23$ & Cube (0.03) \\ 
MTurk Puzzle (2) & \tabimg{MTurk_Puzzle} & 00025-00000002.soc & 4 & 795 & $0.37$ & $0.42$ & $0.15$ & Urn, $\alpha=0.22$ (0.05) \\ 
MTurk Puzzle (3) & \tabimg{MTurk_Puzzle} & 00025-00000003.soc & 4 & 795 & $0.4$ & $0.38$ & $0.16$ & Urn, $\alpha=0.24$ (0.03) \\ 
MTurk Puzzle (4) & \tabimg{MTurk_Puzzle} & 00025-00000004.soc & 4 & 797 & $0.44$ & $0.28$ & $0.2$ & Cube (0.04) \\ 
APA (1) & \tabimg{APA} & 00028-00000001.soi & 5 & 18723 & $0.53$ & $0.24$ & $0.16$ & Cube (0.04) \\ 
APA (2) & \tabimg{APA} & 00028-00000002.soi & 5 & 17469 & $0.54$ & $0.16$ & $0.3$ & Urn, $\alpha=0.31$ (0.09) \\ 
APA (3) & \tabimg{APA} & 00028-00000003.soi & 5 & 20239 & $0.51$ & $0.25$ & $0.21$ & Cube (0.04) \\ 
APA (4) & \tabimg{APA} & 00028-00000004.soi & 5 & 17911 & $0.53$ & $0.22$ & $0.21$ & Cube (0.04) \\ 
APA (5) & \tabimg{APA} & 00028-00000006.soi & 5 & 17956 & $0.41$ & $0.41$ & $0.16$ & Urn, $\alpha=0.22$ (0.03) \\ 
APA (6) & \tabimg{APA} & 00028-00000008.soi & 5 & 14506 & $0.5$ & $0.26$ & $0.22$ & Cube (0.05) \\ 
APA (7) & \tabimg{APA} & 00028-00000009.soi & 5 & 16836 & $0.51$ & $0.26$ & $0.18$ & Cube (0.05) \\ 
APA (8) & \tabimg{APA} & 00028-00000010.soi & 5 & 13318 & $0.53$ & $0.22$ & $0.24$ & Cube (0.05) \\ 
APA (9) & \tabimg{APA} & 00028-00000011.soi & 5 & 18286 & $0.51$ & $0.24$ & $0.22$ & Cube (0.05) \\ 
APA (10) & \tabimg{APA} & 00028-00000012.soi & 5 & 15313 & $0.53$ & $0.2$ & $0.28$ & Cube (0.06) \\ 
Labour Party & \tabimg{Labour_Party} & 00030-00000001.soi & 5 & 266 & $0.36$ & $0.45$ & $0.14$ & Urn, $\alpha=0.22$ (0.1) \\ 
Cujae (1) & \tabimg{Cujae} & 00032-00000001.soi & 6 & 32 & $0.39$ & $0.41$ & $0.16$ & Urn, $\alpha=0.22$ (0.05) \\ 
Cujae (2) & \tabimg{Cujae} & 00032-00000002.soc & 6 & 15 & $0.31$ & $0.5$ & $0.17$ & Urn, $\alpha=0.29$ (0.07) \\ 
Cujae (3) & \tabimg{Cujae} & 00032-00000003.soi & 6 & 47 & $0.43$ & $0.36$ & $0.19$ & Cube (0.03) \\ 
Cujae (4) & \tabimg{Cujae} & 00032-00000005.soi & 13 & 14 & $0.28$ & $0.62$ & $0.1$ & Mallows, norm-$\phi=0.33$ (0.1) \\ 
Cities (1) & \tabimg{Cities} & 00034-00000001.soi & 36 & 392 & $0.43$ & $0.69$ & $0.06$ & Mallows, norm-$\phi=0.36$ (0.09) \\ 
Cities (2) & \tabimg{Cities} & 00034-00000002.soi & 48 & 392 & $0.35$ & $0.76$ & $0.04$ & Mallows, norm-$\phi=0.31$ (0.09) \\ 
Breakfast (1) & \tabimg{Breakfast} & 00035-00000002.soc & 15 & 42 & $0.53$ & $0.31$ & $0.18$ & 5-Cube (0.05) \\ 
Breakfast (2) & \tabimg{Breakfast} & 00035-00000003.soc & 15 & 42 & $0.47$ & $0.47$ & $0.05$ & Mallows, norm-$\phi=0.51$ (0.05) \\ 
Breakfast (3) & \tabimg{Breakfast} & 00035-00000004.soc & 15 & 42 & $0.57$ & $0.27$ & $0.19$ & 5-Cube (0.03) \\ 
Breakfast (4) & \tabimg{Breakfast} & 00035-00000005.soc & 15 & 42 & $0.51$ & $0.29$ & $0.24$ & Cube (0.05) \\ 
Breakfast (5) & \tabimg{Breakfast} & 00035-00000006.soc & 15 & 42 & $0.52$ & $0.32$ & $0.16$ & 5-Cube (0.03) \\ 
Breakfast (6) & \tabimg{Breakfast} & 00035-00000007.soc & 15 & 42 & $0.51$ & $0.4$ & $0.09$ & 5-Cube (0.05) \\ 
Cycling (1) & \tabimg{Cycling} & 00043-00000002.soi & 27 & 12 & $0.39$ & $0.53$ & $0.12$ & Mallows, norm-$\phi=0.41$ (0.11) \\ 
Cycling (2) & \tabimg{Cycling} & 00043-00000010.soi & 24 & 10 & $0.35$ & $0.56$ & $0.11$ & Mallows, norm-$\phi=0.41$ (0.1) \\ 
Cycling (3) & \tabimg{Cycling} & 00043-00000025.soi & 54 & 19 & $0.27$ & $0.69$ & $0.08$ & Mallows, norm-$\phi=0.28$ (0.07) \\ 
Cycling (4) & \tabimg{Cycling} & 00043-00000026.soi & 56 & 19 & $0.3$ & $0.69$ & $0.06$ & Mallows, norm-$\phi=0.31$ (0.04) \\ 
Cycling (5) & \tabimg{Cycling} & 00043-00000033.soi & 74 & 21 & $0.28$ & $0.74$ & $0.08$ & Mallows, norm-$\phi=0.28$ (0.07) \\ 
\bottomrule
\end{tabular}
\end{table*}

\begin{table*}
    \centering
    \small
    \begin{tabular}{|c|c|c|c|c|c|c|c|c|}
    \toprule
    \begin{sideways} Election Name \end{sideways} &
    \begin{sideways} Marker \end{sideways} &
    \begin{sideways} Preflib File \end{sideways} &
    \begin{sideways} \#Candidates \end{sideways} &
    \begin{sideways} \#Votes \end{sideways} &
    \begin{sideways} Diversity \end{sideways} &
    \begin{sideways} Agreement \end{sideways} &
    \begin{sideways} Polarization \end{sideways} &
    \begin{sideways} Closest Culture\end{sideways}\\
    \midrule
Cycling (6) & \tabimg{Cycling} & 00043-00000071.soi & 85 & 21 & $0.22$ & $0.76$ & $0.09$ & Single Peaked, Walsh (0.06) \\ 
Cycling (7) & \tabimg{Cycling} & 00043-00000085.soi & 296 & 19 & $0.31$ & $0.51$ & $0.27$ & Urn, $\alpha=0.29$ (0.08) \\ 
Cycling (8) & \tabimg{Cycling} & 00043-00000115.soi & 178 & 22 & $0.22$ & $0.65$ & $0.12$ & Urn, $\alpha=0.38$ (0.09) \\ 
Cycling (9) & \tabimg{Cycling} & 00043-00000117.soi & 71 & 24 & $0.29$ & $0.64$ & $0.07$ & Mallows, norm-$\phi=0.33$ (0.07) \\ 
Cycling (10) & \tabimg{Cycling} & 00043-00000187.soi & 248 & 20 & $0.37$ & $0.5$ & $0.19$ & Urn, $\alpha=0.37$ (0.12) \\ 
Universities (1) & \tabimg{Universities} & 00046-00000001.soi & 535 & 18 & $0.23$ & $0.54$ & $0.17$ & Urn, $\alpha=0.29$ (0.06) \\ 
Universities (2) & \tabimg{Universities} & 00046-00000002.soi & 440 & 18 & $0.22$ & $0.6$ & $0.15$ & Urn, $\alpha=0.38$ (0.07) \\ 
Universities (3) & \tabimg{Universities} & 00046-00000003.soi & 1180 & 19 & $0.22$ & $0.57$ & $0.24$ & Urn, $\alpha=0.26$ (0.06) \\ 
Universities (4) & \tabimg{Universities} & 00046-00000004.soi & 1173 & 19 & $0.22$ & $0.58$ & $0.23$ & Urn, $\alpha=0.26$ (0.07) \\ 
Spotify (1) & \tabimg{Spotify} & 00047-00000021.soi & 2242 & 54 & $0.09$ & $0.9$ & $0.02$ & Mallows, norm-$\phi=0.1$ (0.02) \\ 
Spotify (2) & \tabimg{Spotify} & 00047-00000033.soi & 2294 & 54 & $0.09$ & $0.9$ & $0.02$ & Mallows, norm-$\phi=0.1$ (0.02) \\ 
Spotify (3) & \tabimg{Spotify} & 00047-00000098.soi & 2192 & 54 & $0.09$ & $0.89$ & $0.02$ & Mallows, norm-$\phi=0.1$ (0.02) \\ 
Spotify (4) & \tabimg{Spotify} & 00047-00000099.soi & 2146 & 54 & $0.09$ & $0.89$ & $0.02$ & Mallows, norm-$\phi=0.1$ (0.02) \\ 
Spotify (5) & \tabimg{Spotify} & 00047-00000119.soi & 2292 & 54 & $0.09$ & $0.9$ & $0.02$ & Mallows, norm-$\phi=0.1$ (0.02) \\ 
Spotify (6) & \tabimg{Spotify} & 00047-00000146.soi & 2277 & 54 & $0.09$ & $0.9$ & $0.03$ & Mallows, norm-$\phi=0.1$ (0.02) \\ 
Spotify (7) & \tabimg{Spotify} & 00047-00000164.soi & 2413 & 54 & $0.09$ & $0.9$ & $0.02$ & Mallows, norm-$\phi=0.1$ (0.02) \\ 
Spotify (8) & \tabimg{Spotify} & 00047-00000283.soi & 2633 & 53 & $0.08$ & $0.9$ & $0.02$ & Mallows, norm-$\phi=0.1$ (0.02) \\ 
Spotify (9) & \tabimg{Spotify} & 00047-00000307.soi & 2609 & 51 & $0.09$ & $0.9$ & $0.02$ & Mallows, norm-$\phi=0.1$ (0.02) \\ 
Spotify (10) & \tabimg{Spotify} & 00047-00000362.soi & 2681 & 53 & $0.09$ & $0.89$ & $0.02$ & Mallows, norm-$\phi=0.1$ (0.02) \\ 
Movehub & \tabimg{Movehub} & 00050-00000001.soc & 216 & 12 & $0.47$ & $0.21$ & $0.4$ & Urn, $\alpha=0.31$ (0.11) \\ 
Countries (1) & \tabimg{Countries} & 00051-00000002.soi & 89 & 17 & $0.57$ & $0.32$ & $0.11$ & 5-Cube (0.03) \\ 
Countries (2) & \tabimg{Countries} & 00051-00000003.soi & 102 & 17 & $0.57$ & $0.31$ & $0.1$ & 5-Cube (0.04) \\ 
Countries (3) & \tabimg{Countries} & 00051-00000004.soi & 110 & 17 & $0.53$ & $0.36$ & $0.11$ & 5-Cube (0.02) \\ 
Countries (4) & \tabimg{Countries} & 00051-00000005.soi & 114 & 19 & $0.61$ & $0.29$ & $0.1$ & 10-Cube (0.02) \\ 
Countries (5) & \tabimg{Countries} & 00051-00000006.soi & 124 & 19 & $0.59$ & $0.32$ & $0.1$ & 10-Cube (0.03) \\ 
Countries (6) & \tabimg{Countries} & 00051-00000007.soi & 146 & 19 & $0.6$ & $0.3$ & $0.12$ & 5-Cube (0.03) \\ 
Countries (7) & \tabimg{Countries} & 00051-00000008.soi & 142 & 19 & $0.58$ & $0.31$ & $0.13$ & 5-Cube (0.02) \\ 
Countries (8) & \tabimg{Countries} & 00051-00000009.soi & 137 & 18 & $0.59$ & $0.29$ & $0.12$ & 5-Cube (0.02) \\ 
Countries (9) & \tabimg{Countries} & 00051-00000010.soi & 145 & 17 & $0.58$ & $0.31$ & $0.13$ & 5-Cube (0.02) \\ 
Countries (10) & \tabimg{Countries} & 00051-00000012.soi & 141 & 15 & $0.55$ & $0.29$ & $0.19$ & 5-Cube (0.05) \\ 
F1 (1) & \tabimg{F1} & 00052-00000002.soi & 84 & 8 & $0.2$ & $0.68$ & $0.12$ & Single Peaked, Walsh (0.09) \\ 
F1 (2) & \tabimg{F1} & 00052-00000009.soi & 87 & 11 & $0.19$ & $0.72$ & $0.08$ & Single Peaked, Walsh (0.04) \\ 
F1 (3) & \tabimg{F1} & 00052-00000018.soi & 45 & 11 & $0.25$ & $0.71$ & $0.06$ & Mallows, norm-$\phi=0.28$ (0.05) \\ 
F1 (4) & \tabimg{F1} & 00052-00000023.soi & 42 & 12 & $0.36$ & $0.6$ & $0.09$ & Mallows, norm-$\phi=0.4$ (0.07) \\ 
F1 (5) & \tabimg{F1} & 00052-00000035.soi & 35 & 16 & $0.45$ & $0.51$ & $0.09$ & Mallows, norm-$\phi=0.49$ (0.06) \\ 
F1 (6) & \tabimg{F1} & 00052-00000040.soi & 47 & 16 & $0.35$ & $0.61$ & $0.08$ & Mallows, norm-$\phi=0.36$ (0.07) \\ 
F1 (7) & \tabimg{F1} & 00052-00000042.soi & 41 & 16 & $0.38$ & $0.59$ & $0.06$ & Mallows, norm-$\phi=0.4$ (0.4) \\ 
F1 (8) & \tabimg{F1} & 00052-00000060.soi & 25 & 17 & $0.38$ & $0.56$ & $0.11$ & Mallows, norm-$\phi=0.41$ (0.09) \\ 
F1 (9) & \tabimg{F1} & 00052-00000067.soi & 24 & 21 & $0.37$ & $0.61$ & $0.04$ & Mallows, norm-$\phi=0.39$ (0.02) \\ 
F1 (10) & \tabimg{F1} & 00052-00000068.soi & 25 & 20 & $0.31$ & $0.65$ & $0.06$ & Mallows, norm-$\phi=0.33$ (0.05) \\ 
NSW (1) & \tabimg{NSW} & 00058-00000015.soi & 5 & 47636 & $0.31$ & $0.44$ & $0.21$ & Urn, $\alpha=0.49$ (0.01) \\ 
NSW (2) & \tabimg{NSW} & 00058-00000037.soi & 6 & 49846 & $0.27$ & $0.55$ & $0.17$ & Urn, $\alpha=0.29$ (0.05) \\ 
NSW (3) & \tabimg{NSW} & 00058-00000079.soi & 8 & 48215 & $0.22$ & $0.62$ & $0.13$ & Urn, $\alpha=0.38$ (0.07) \\ 
NSW (4) & \tabimg{NSW} & 00058-00000107.soi & 7 & 45526 & $0.27$ & $0.56$ & $0.15$ & Urn, $\alpha=0.29$ (0.08) \\ 
NSW (5) & \tabimg{NSW} & 00058-00000129.soi & 6 & 48244 & $0.25$ & $0.53$ & $0.24$ & Urn, $\alpha=0.29$ (0.03) \\ 
NSW (6) & \tabimg{NSW} & 00058-00000139.soi & 6 & 47035 & $0.24$ & $0.59$ & $0.17$ & Urn, $\alpha=0.38$ (0.08) \\ 
NSW (7) & \tabimg{NSW} & 00058-00000148.soi & 4 & 51375 & $0.21$ & $0.49$ & $0.33$ & Urn, $\alpha=0.66$ (0.03) \\ 
NSW (8) & \tabimg{NSW} & 00058-00000150.soi & 6 & 50315 & $0.29$ & $0.49$ & $0.2$ & Urn, $\alpha=0.29$ (0.05) \\ 
NSW (9) & \tabimg{NSW} & 00058-00000183.soi & 8 & 47857 & $0.24^*$ & $0.63$ & $0.11^*$ & Urn, $\alpha=0.38$ (0.1) \\ 
NSW (10) & \tabimg{NSW} & 00058-00000235.soi & 7 & 51633 & $0.3$ & $0.54$ & $0.13$ & Urn, $\alpha=0.29$ (0.1) \\ 
\bottomrule
    \end{tabular}
    \caption{The elections from Preflib included in the maps in Figures~\ref{fig:map:preflib} and~\ref{fig:preflib_colors_dap} along with the values of their Diversity, Agreement, and Polarization indices. The values denoted with $*$ has been computed as the average value for 20 elections with 500 votes sampled from the given election.
    In the last column we provide the closest artificial election to a given Preflib election in terms of DAP distance and the distance itself (in the paranthesis).}
    \label{tab:preflib_elections}
\end{table*}

\end{document}